\renewcommand\path[1]{{\rmfamily\footnotesize\detokenize{#1}}}
\definecolor{blueish}{rgb}{0.122, 0.435, 0.698}
\definecolor{dagstuhlyellow}{rgb}{0.99,0.78,0.07}
\newtcbox{\colbox}{
  nobeforeafter,
  colframe=white,
  colback=dagstuhlyellow!10!white,
  arc=10pt,
  tcbox raise base}
\newcommand{\cF}{\mathcal{F}}
\newcommand{\cT}{\mathcal{T}}
\newcommand{\cP}{\mathcal{P}}
\newcommand{\cC}{\mathcal{C}}
\newcommand{\op}[1]{\ensuremath{\operatorname{#1}}}
\newcommand{\tw}{\op{tw}}
\newcommand{\tr}{\op{tr}}
\newcommand{\N}{\mathbf{N}}
\newcommand{\R}{\mathbf{R}}
\theoremstyle{plain}
\newtheorem{claim}[theorem]{Claim}
\newtheorem{prop}[theorem]{Proposition}
\DeclarePairedDelimiter\paren{\lparen}{\rparen}
\DeclarePairedDelimiter\abs{\lvert}{\rvert}
\DeclarePairedDelimiter\set{\{}{\}}
\DeclarePairedDelimiterX\setc[2]{\{}{\}}{\,#1 \delimsize: #2\,}
\DeclarePairedDelimiter\mset{\{\!\!\{}{\}\!\!\}}
\DeclarePairedDelimiterX\msetc[2]{\{\!\!\{}{\}\!\!\}}{\,#1 \delimsize: #2\,}
\tikzstyle{bullet}=[circle,draw,fill,inner sep=2pt]
\newcommand\vect[1]{\operatorname{\uppercase{\mathsf{#1}}}}
\newcommand\matr[1]{\mathsf{#1}}
\newcommand\surj    {\matr{Surj}}
\newcommand\crr     {\matr{Cr}}
\newcommand\veccrr  {\vect{CR}}
\newcommand\wl      {\matr{WL}}
\renewcommand\hom   {\matr{Hom}}
\newcommand\vechom  {\vect{Hom}}
\newcommand\vecinj  {\vect{INJ}}
\newcommand\vecind  {\vect{IND}}
\newcommand\strhom  {\matr{StrHom}}
\newcommand\ext     {\matr{Ext}}
\newcommand\inj     {\matr{Inj}}
\newcommand\dhom    {\overrightarrow{\hom}}
\newcommand\dsurj   {\overrightarrow{\surj}}
\newcommand\dsub    {\overrightarrow{\sub}}
\newcommand\sub     {\matr{Sub}}
\newcommand\ind     {\matr{Ind}}
\newcommand\aut     {\matr{Aut}}
\newcommand\bext    {\matr{bExt}}
\newcommand\biso    {\matr{bIso}}
\newcommand\dbiso   {\biso\dhom}
\newcommand\dbsurj  {\biso\dsurj}
\newcommand\dbsub   {\biso\dsub}
\newcommand\bstrhom {\matr{bStrHom}}
\newcommand\binj    {\matr{bInj}}
\newcommand{\bisocond}[2]{\biso\left( #1 \,\vert\, #2 \right)}
\newcommand{\trans}[1]{#1^{T}\xspace}
\newcommand{\perm}[3]{\begin{smallmatrix}
                      u_{#2} \dots u_{#3} \\
                      {#1}_{#2} \dots {#1}_{#3}
                     \end{smallmatrix}\xspace}
\newcommand{\Liso}[1][]{{{\mathsf L}_{\textup{iso}}^{#1}}}
\newcommand{\Fiso}[1][]{{{\mathsf F}_{\textup{iso}}^{#1}}}
\newcommand{\atp}{\operatorname{atp}}
\newsavebox{\fminibox}
\newlength{\fminilength}
\title{Lovász Meets Weisfeiler and Leman}
\author{Holger Dell}{Saarland University and Cluster of Excellence (MMCI), Saarbrücken, Germany}{hdell@mmci.uni-saarland.de}{https://orcid.org/0000-0001-8955-0786}{}
\author{Martin Grohe}{RWTH Aachen University, Aachen, Germany}{grohe@informatik.rwth-aachen.de}{https://orcid.org/0000-0002-0292-9142}{}
\author{Gaurav Rattan}{RWTH Aachen University, Aachen, Germany}{grohe@informatik.rwth-aachen.de}{https://orcid.org/0000-0002-5095-860X}{}
\authorrunning{H. Dell, M. Grohe, and G. Rattan}
\subjclass{
\ccsdesc[500]{Theory of computation~Graph algorithms analysis};
\ccsdesc[500]{Mathematics of computing~Graph theory}
}
\keywords{graph isomorphism, graph homomorphism numbers, tree width}
\begin{document}

\maketitle

\begin{abstract}
  In this paper, we relate a beautiful theory by Lovász with a popular
  heuristic algorithm for the graph isomorphism problem, namely the
  color refinement algorithm and its $k$-dimensional generalization
  known as the Weisfeiler-Leman algorithm. We prove that two graphs
  $G$ and $H$ are indistinguishable by the color refinement algorithm
  if and only if, for all trees $T$, the number $\mathsf{Hom}(T,G)$ of
  homomorphisms from $T$ to $G$ equals the corresponding
  number $\mathsf{Hom}(T,H)$ for $H$. 

  There is a natural system of linear equations whose nonnegative integer solutions correspond to the isomorphisms between two graphs.
  The nonnegative real solutions to this system are called fractional isomorphisms, and two graphs are fractionally isomorphic if and only if the color refinement algorithm cannot distinguish them (Tinhofer 1986, 1991).
  We show that, if we drop the nonnegativity constraints, that is, if we look for arbitrary real solutions, then a solution to the linear system exists if and only if, for all $t$, the two graphs have the same number of length-$t$ walks.

  We lift the results for trees to an equivalence between numbers of
  homomorphisms from graphs of tree width $k$, the $k$-dimensional
  Weisfeiler-Leman algorithm, and the level-$k$ Sherali-Adams
  relaxation of our linear program. We also obtain a partial result
  for graphs of bounded path width and solutions to our system where we
  drop the nonnegativity constraints.

  A consequence of our results is a quasi-linear time algorithm to decide
  whether, for two given graphs $G$ and $H$, there is a tree $T$ with
  $\mathsf{Hom}(T,G)\neq\mathsf{Hom}(T,H)$.
 \end{abstract}

\section{Introduction}

An old result due to Lov\'asz \cite{lov67} states a graph $G$ can
be characterized by counting homomorphisms from all graphs $F$ to
$G$. That is, two graphs $G$ and $H$ are isomorphic if and only if, for all
$F$, the number $\hom(F,G)$ of homomorphisms from $F$ to $G$ equals
the number $\hom(F,H)$ of homomorphism from $F$ to $H$. This simple
result has far reaching consequences, because mapping graphs $G$ to
their \emph{homomorphism vectors}
$\vechom(G):=\big(\hom(F,G)\big)_{F\text{ graph}}$ (or suitably scaled
versions of these infinite vectors) allows us to apply
tools from functional analysis in graph theory. This is the foundation
of the beautiful theory of graph limits, developed by Lov\'asz and
others over the last 15 years (see \cite{lov12}). 

However, from a computational perspective, representing graphs by
their homomorphism vectors has the disadvantage that the problem of
computing the entries of these vectors is~NP-complete. To avoid this
difficulty, we may want to restrict the homomorphism vectors to
entries from a class of graphs for which counting homomorphisms is
tractable. That is, instead of considering the full homomorphism
vector $\vechom(G)$ we consider the vector
$\vechom_{\cF}(G):=\big(\hom(F,G)\big)_{F\in\cF}$ for a class $\cF$ of
graphs such that the problem of computing $\hom(F,G)$ for given graphs
$F\in\cF$ and $G$ is in polynomial time. Arguably the most natural
example of such a class $\cF$ is the class of all trees. More
generally, computing $\hom(F,G)$ for given graphs $F\in\cF$ and $G$ is
in polynomial time for all classes $\cF$ of bounded tree width, and
under a natural assumption from parameterized complexity theory, it is
not in polynomial time for any class $\cF$ of unbounded tree width
\cite{daljon04}. This immediately raises the question what the vector
$\vechom_{\cF}(G)$, for a class $\cF$ of bounded tree width, tells us
about the graph $G$. 

A first nice example (Proposition~\ref{prop:spectrum}) is that the
vector $\vechom_{\cC}(G)$ for the class $\cC$ of all cycles
characterizes the spectrum of a graph, that is, for graphs $G,H$ we
have $\vechom_{\cC}(G)=\vechom_{\cC}(H)$ if and only if the
adjacency matrices of $G$ and $H$ have the same eigenvalues with the
same multiplicities. This equivalence is a basic observation in spectral graph theory (see~\cite[Lemma~1]{van2003graphs}).
Before we state deeper results along these lines, let us describe a different (though related) motivation for this research. 

Determining the similarity between two graphs is an important
problem with many applications, mainly in machine learning, where it
is known as ``graph matching''~(e.g.~\cite{confogsanven04}).
But how can the similarity between graphs be measured?
An obvious idea is to use the \emph{edit distance}, which simply counts how many edges and vertices have to be deleted from or added to one graph to obtain the other.
However, two graphs that have a small edit distance can nevertheless be structurally quite dissimilar (e.g.~\cite[Section 1.5.1]{lov12}).
The edit distance is also very hard to compute as it is closely related to the
notoriously difficult quadratic assignment problem~{(e.g.~\cite{arvkobkuhvas12,nagsvi09})}.

Homomorphism vectors offer an alternative, more structurally oriented approach to measuring graph similarity.
After suitably scaling the vectors, we can can
compare them using standard vector norms. This idea is reminiscent of the ``graph kernels'' used in machine learning~{(e.g.\
  \cite{visschrakonbor10})}. Like the homomorphism vectors, many graph
kernels are based on the idea of counting certain patterns in graphs,
such as paths, walks, cycles or subtrees, and in fact any inner
product on the homomorphism vectors yields a graph kernel.

A slightly different type of graph kernel is the so-called Weisfeiler-Leman
(subtree) kernel~\cite{sheschlee+11}.
This kernel is derived from the \emph{color refinement} algorithm (a.k.a.\ the \emph{1-dimensional Weisfeiler-Leman algorithm}), which is a simple and efficient heuristic to test whether two graphs are isomorphic (e.g.\ \cite{grokermlaschwe17+}).
The algorithm computes a coloring of the vertices of a graph based on the iterated degree sequences, we give the details in Section~\ref{sec:tree}.
To use it as an isomorphism test, we
compare the color patterns of two graphs. If they are
different, we say that color refinement \emph{distinguishes} the
graphs. If the color patterns of the two graphs turn out to be the same, the graphs may still be non-isomorphic, but the algorithm fails to detect this.

Whether color refinement is able to distinguish two graphs~$G$ and~$H$ has a very nice linear-algebraic characterization due to Tinhofer~\cite{tin86,tin91}.
Let~$V$ and $W$ be the vertex sets and let $A\in\{0,1\}^{V\times V}$ and $B\in\{0,1\}^{W\times W}$ be the adjacency matrices of~$G$ and $H$, respectively.
Now consider the system~$\Fiso(G,H)$ of linear equations:
\begin{center}
  \vspace{-1em}
  \begin{minipage}{7cm}
    \begin{empheq}[
      left={\Fiso(G,H):\quad\empheqlbrace},
      box=\colbox
      ]{align}
      AX  &=XB \tag{F1}\label{eq:F1}\\
      X\boldsymbol 1_W    &=\boldsymbol 1_V    \tag{F2}\label{eq:F2}\\
      \boldsymbol 1_V^TX  &=\boldsymbol 1_W^T  \tag{F3}\label{eq:F3}
    \end{empheq}
  \end{minipage}
\end{center}

In these equations, $X$ denotes a $(V\times W)$-matrix of variables and $\boldsymbol 1_{U}$ denotes the all-1 vector over the index set~$U$.
Equations~\eqref{eq:F2} and~\eqref{eq:F3} simply state that all row and column sums
of~$X$ are supposed to be $1$.
Thus the nonnegative integer solutions to $\Fiso(G,H)$ are permutation
matrices, which due to~\eqref{eq:F1} describe isomorphisms between~$G$ and~$H$.
The nonnegative real solutions to $\Fiso(G,H)$, which in fact are always rational, are called \emph{fractional isomorphisms} between~$G$ and~$H$.
Tinhofer proved that two graphs are fractionally isomorphic if and only if color refinement does not distinguish them.

For every $k\ge 2$, color refinement has a generalization, known as
the \emph{$k$-dimensional Weisfeiler-Leman algorithm ($k$-WL)}, which colors
not the vertices of the given graph but $k$-tuples of vertices.
Atserias and Maneva~\cite{atsman13} (also see~\cite{mal14}) generalized
Tinhofer's theorem by establishing a close correspondence between $k$-WL and
the level-$k$ Sherali-Adams relaxation of $\Fiso(G,H)$.

\subsection*{Our results}

How expressive are homomorphism vectors $\vechom_{\cF}(G)$ for restricted graph classes $\cF$\,?
We consider the class $\cT$ of trees first, where the answer is surprisingly clean.

\begin{theorem}\label{theo:1}
  For all graphs $G$ and $H$, the following are equivalent:
  \begin{enumerate}[i]
    \item\label{it:homvec trees} $\vechom_{\cT}(G)=\vechom_{\cT}(H)$.
    \item\label{it:colref} Color refinement does not distinguish $G$ and $H$.
    \item\label{it:fraciso} $G$ and $H$ are fractionally isomorphic, that is, the system $\Fiso(G,H)$ of linear equations has a nonnegative real solution.
  \end{enumerate}
\end{theorem}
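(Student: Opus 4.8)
The plan is to prove the cycle of implications $\ref{it:colref}\Rightarrow\ref{it:fraciso}\Rightarrow\ref{it:homvec trees}\Rightarrow\ref{it:colref}$, since the equivalence $\ref{it:colref}\Leftrightarrow\ref{it:fraciso}$ is exactly Tinhofer's theorem and may be quoted. So the real content is $\ref{it:fraciso}\Rightarrow\ref{it:homvec trees}$ and $\ref{it:homvec trees}\Rightarrow\ref{it:colref}$. For the first of these, suppose $X$ is a fractional isomorphism, i.e.\ a doubly-stochastic matrix with $AX=XB$. The key idea is that for any tree $T$, the quantity $\hom(T,G)$ can be written as a contraction of tensor powers of $A$ against the all-ones vector, following the structure of $T$: fixing a root, $\hom(T,G)=\boldsymbol 1_V^T\, M_T(A)\, \boldsymbol 1_V$ where $M_T$ is built by taking Hadamard (entrywise) products at branching vertices and multiplying by $A$ along edges. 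I would prove by induction on the structure of $T$ that $X$ intertwines these operators: more precisely, if $f_T^G\in\R^V$ is the vector whose $v$-entry counts homomorphisms from the rooted tree $T$ to $G$ mapping the root to $v$, then $X f_T^H = f_T^G$. The induction step uses $AX=XB$ to push $X$ through an edge, and uses the fact that $X$ has $0/1$-scaled structure \dots\ actually, the clean way is: $X$ being doubly stochastic with $AX=XB$ implies $X$ commutes appropriately with entrywise products of such vectors, because $X(\,u\odot v\,)=(Xu)\odot(Xv)$ holds whenever $u,v$ lie in the span of iterated-degree (equitable-partition) indicator vectors, which is precisely the setting produced by color refinement. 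Summing over the root via \eqref{eq:F2}, \eqref{eq:F3} then gives $\hom(T,G)=\boldsymbol 1_V^T f_T^G=\boldsymbol 1_W^T X f_T^H\cdot(\dots)=\hom(T,H)$.

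For the direction $\ref{it:homvec trees}\Rightarrow\ref{it:colref}$, I would argue contrapositively. If color refinement distinguishes $G$ and $H$, it produces, after finitely many rounds, a vertex color $c$ (an iterated-degree pattern) realized a different number of times in $G$ than in $H$. The point is that each color class after $i$ rounds of refinement is definable by a rooted tree $T_c$ of depth $i$: the number of vertices of color $c$ in a graph equals a fixed integer linear combination, with combinatorially determined coefficients, of the quantities $\hom(T,G)$ for trees $T$ of depth at most $i$ (this is a standard inclusion–exclusion / Möbius-type argument relating "homomorphisms" to "vertices with a prescribed local tree"). Since the color-class sizes differ, some $\hom(T,G)\neq\hom(T,H)$ for a tree $T$, contradicting $\ref{it:homvec trees}$. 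Equivalently and perhaps more cleanly, one shows the homomorphism counts from depth-$i$ trees and the color-refinement coloring after $i$ rounds determine each other, so equal tree-homomorphism vectors force equal stable colorings.

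The main obstacle I expect is making the tensor/Hadamard-product bookkeeping in the $\ref{it:fraciso}\Rightarrow\ref{it:homvec trees}$ direction rigorous — specifically, proving the "$X$ distributes over entrywise products" property $X(u\odot v)=(Xu)\odot(Xv)$ on the relevant subspace. A doubly stochastic matrix does not distribute over Hadamard products in general, so one must either (a) use the finer structure of a fractional isomorphism, namely that $X$ is block-constant with respect to the coarsest equitable partitions of $G$ and $H$ (a known consequence of $AX=XB$ plus double stochasticity), which forces the identity exactly on spans of equitable-partition indicators — and the vectors $f_T^G$ live in that span by induction; or (b) bypass Hadamard products entirely by working with the "colored" refinement of $G$ and setting up the induction so that $X$ only ever acts on indicator vectors of color classes and products of an indicator with $A$-images thereof. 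Option (a) feels cleanest and is the route I would write up. The other steps are routine: Tinhofer's theorem is cited, and the tree-to-color-class correspondence in the last implication is an elementary counting argument.
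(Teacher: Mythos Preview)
Your plan is sound and both directions go through, but it differs from the paper's proof and one intermediate claim needs correcting.

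The paper proves only (i)$\iff$(ii), citing Tinhofer for (ii)$\iff$(iii); it never touches the matrix $X$. Its argument is entirely via a global matrix identity: it introduces the vector $\veccrr(G)$ of tree-unfolding counts (so that color refinement distinguishes $G,H$ iff $\veccrr(G)\neq\veccrr(H)$, Lemma~\ref{lem: colref cr}), proves $\vechom_r = \dhom\cdot\veccrr$ (Lemma~\ref{lem:hom dhom cr}), and shows $\dhom$ admits an $LU$-decomposition $\dsurj\cdot\dsub$ into invertible triangular infinite matrices (Lemma~\ref{lem: dhom dsurj sub}). Both directions then fall out of invertibility, with some care about the infinite-dimensional setting. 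Your (i)$\Rightarrow$(ii) via M\"obius-type inversion is exactly this idea in different clothing. Your (iii)$\Rightarrow$(i), via the rooted-tree vectors $f_T$ and the identity $Xf_T^H=f_T^G$, is genuinely different from anything in the paper. What each route buys: the paper's matrix-identity argument lifts verbatim to the $k$-WL/treewidth-$k$ setting (their Theorem~\ref{theo:3}), since one only needs analogues of $\dhom$, $\dsurj$, $\dsub$; your Hadamard-product route is more concrete for $k=1$ but has no obvious higher-$k$ analogue.

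One correction: your claim that a fractional isomorphism is ``block-constant with respect to the coarsest equitable partitions'' is \emph{false} --- within a block of corresponding color classes, $X$ can be any appropriately scaled doubly stochastic matrix. What is true, and what you actually need, is that $X$ is \emph{block-supported}: $X_{vw}\neq 0$ forces $v,w$ to have the same stable color (this is the standard consequence of $AX=XB$ plus nonnegativity). Combine this with the easy induction that each $Bf_{T_i}^H$ is constant on stable color classes of $H$: then for $v$ in $G$-class $C$ corresponding to $H$-class $D$ and $u$ constant on classes, $(Xu)_v=\sum_{w\in D}X_{vw}\,u|_D=u|_D$, so $X$ acts as the ``identity on colors'' on that subspace, and $X(u\odot v)=(Xu)\odot(Xv)$ follows. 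So your option (a) reaches the right conclusion, but if you tried to actually prove block-constancy you would get stuck; make sure to argue via block support instead.
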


As mentioned before, the equivalence between \ref{it:colref} and
\ref{it:fraciso} is due to Tinhofer~\cite{tin86,tin91}. 
An unexpected consequence of our theorem is that we can decide in
time $O((n+m)\log n)$ whether $\vechom_{\cT}(G)=\vechom_{\cT}(H)$ holds for two
given graphs $G$ and $H$ with $n$ vertices and $m$~edges. (If two graphs have a
different number of vertices or edges, then their homomorphism counts
already differ on the 1-vertex or 2-vertex trees.) This is remarkable,
because every known algorithm for computing the entry $\hom(T,G)$ of the vector
$\vechom_{\cT}(G)$ requires quadratic time when~$T$ and~$G$ are given as input.

It is a consequence of the proof of Theorem~\ref{theo:1} that, in order
to characterize an $n$-vertex graph~$G$ up to fractional isomorphisms, it
suffices to restrict the homomorphism vector~$\vechom_{\cT}(G)$ to
trees of height at most $n-1$. What happens if we restrict the structure
of trees even further?
In particular, let us restrict the homomorphism vector
to its path entries, that is, consider~$\vechom_{\cP}(G)$ for the class~$\cP$ of all
paths. Figure~\ref{fig:path} shows an example of two graphs~$G$ and~$H$ with
$\vechom_{\cP}(G)=\vechom_{\cP}(H)$ and $\vechom_{\cT}(G)\neq\vechom_{\cT}(H)$. 

\begin{figure}
  \centering
  \begin{tikzpicture}
    \path[use as bounding box] (120:1.6) -- (-120:1.6) -- (0:1.6);
    \node[bullet] (n)   at (0   :0  ) {};
    \node[bullet] (n1)  at (0   :0.7) {};
    \node[bullet] (n1b) at (0   :1.4) {};
    \node[bullet] (n2)  at (120 :0.7) {};
    \node[bullet] (n2b) at (120 :1.4) {};
    \node[bullet] (n3)  at (-120:0.7) {};
    \node[bullet] (n3b) at (-120:1.4) {};
    \draw[thick] (n) -- (n1) -- (n1b);
    \draw[thick] (n) -- (n2) -- (n2b);
    \draw[thick] (n) -- (n3) -- (n3b);
  \end{tikzpicture}
  \hspace{2cm}
  \begin{tikzpicture}
    \path[use as bounding box] (120:1.6) -- (-120:1.6) -- (0:1.6);
    \node[bullet] (n1)  at (0   :0.7) {};
    \node[bullet] (n2)  at (60  :0.7) {};
    \node[bullet] (n3)  at (120 :0.7) {};
    \node[bullet] (n4)  at (180 :0.7) {};
    \node[bullet] (n5)  at (240 :0.7) {};
    \node[bullet] (n6)  at (300 :0.7) {};
    \node[bullet] (n)   at (0:1.4) {};
    \draw[thick] (n1) -- (n2) -- (n3) -- (n4) -- (n5) -- (n6) -- (n1);
  \end{tikzpicture}
  \caption{Two fractionally non-isomorphic graphs with the same path
    homomorphism counts.}
  \label{fig:path}
\end{figure}
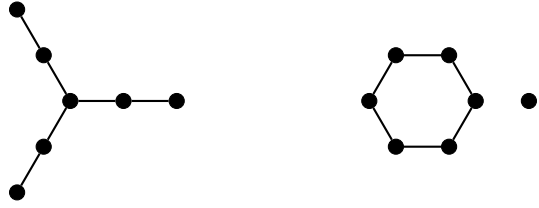

Despite their weaker distinguishing capabilities, the vectors $\vechom_{\cP}(G)$ are quite
interesting. They are related to graph kernels based on counting walks,
and they have a clean algebraic description: it is easy to see that
$\hom(P_k,G)$, the number of homomorphisms from the path $P_k$ of
length $k$ to $G$, is equal to the number of length-$k$ walks in~$G$, which in turn is equal to~$\boldsymbol 1^TA^k\boldsymbol 1$, where~$A$ is
the adjacency matrix of $G$ and $\boldsymbol 1$ is the all-$1$ vector of
appropriate length. 

\begin{theorem}\label{theo:2}
  For all graphs $G$ and $H$, the following are equivalent:
  \begin{enumerate}[i]
    \item\label{it:hom paths}
      $\vechom_{\cP}(G)=\vechom_{\cP}(H)$.
    \item\label{it:Fiso}
      The system $\Fiso(G,H)$ of linear equations has a real solution.
  \end{enumerate}
\end{theorem}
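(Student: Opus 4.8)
I would prove the theorem by the two implications, the substantial one being \ref{it:hom paths}$\,\Rightarrow\,$\ref{it:Fiso}. Throughout I use that $\hom(P_k,G)=\boldsymbol 1_V^TA^k\boldsymbol 1_V$ and $\hom(P_k,H)=\boldsymbol 1_W^TB^k\boldsymbol 1_W$ (as recalled above), so that \ref{it:hom paths} is equivalent to the equality of all \emph{walk moments} $m_k:=\boldsymbol 1_V^TA^k\boldsymbol 1_V=\boldsymbol 1_W^TB^k\boldsymbol 1_W=:m_k'$, $k\ge 0$ (the case $k=0$ giving $\abs{V}=\abs{W}$). For \ref{it:Fiso}$\,\Rightarrow\,$\ref{it:hom paths}: a solution $X$ of \eqref{eq:F1}--\eqref{eq:F3} satisfies $A^kX=XB^k$ for all $k$ by an immediate induction, and then $m_k=\boldsymbol 1_V^TA^k(X\boldsymbol 1_W)=(\boldsymbol 1_V^TX)B^k\boldsymbol 1_W=m_k'$.

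For the converse, the plan is to work with the \emph{walk spaces} $\mathcal W_G:=\op{span}\{A^i\boldsymbol 1_V:i\ge 0\}\subseteq\R^V$ and $\mathcal W_H:=\op{span}\{B^i\boldsymbol 1_W:i\ge 0\}\subseteq\R^W$, i.e.\ the smallest $A$- resp.\ $B$-invariant subspaces containing the all-ones vector. Since $A$ and $B$ are symmetric, the standard inner product restricted to $\mathcal W_G$ is entirely encoded by the moments, $\langle A^i\boldsymbol 1_V,A^j\boldsymbol 1_V\rangle=m_{i+j}$, and likewise $\langle B^i\boldsymbol 1_W,B^j\boldsymbol 1_W\rangle=m'_{i+j}$. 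Assuming \ref{it:hom paths}, these Gram data agree; in particular a linear dependence $\sum_ic_iA^i\boldsymbol 1_V=0$ holds if and only if $\sum_ic_im_{i+j}=0$ for all $j\ge 0$ (the nontrivial direction: such a combination lies in $\mathcal W_G$ and is orthogonal to the spanning set $\{A^j\boldsymbol 1_V\}_j$, hence vanishes), and correspondingly for $\mathcal W_H$. Therefore the assignment $B^i\boldsymbol 1_W\mapsto A^i\boldsymbol 1_V$ extends to a well-defined linear bijection $\phi\colon\mathcal W_H\to\mathcal W_G$, and by construction $\phi$ is an isometry, intertwines the operators ($\phi\circ B|_{\mathcal W_H}=A|_{\mathcal W_G}\circ\phi$), and sends $\boldsymbol 1_W$ to $\boldsymbol 1_V$.

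The solution is then $X:=\phi\circ\pi$ where $\pi\colon\R^W\to\mathcal W_H$ is the orthogonal projection, i.e.\ $X$ acts as $\phi$ on $\mathcal W_H$ and as $0$ on $\mathcal W_H^\perp$. Since $B$ is symmetric, both $\mathcal W_H$ and $\mathcal W_H^\perp$ are $B$-invariant, so $AX=XB$ can be checked separately on $\mathcal W_H$ (where it is the intertwining property) and on $\mathcal W_H^\perp$ (where both sides are $0$); this is \eqref{eq:F1}. Equation \eqref{eq:F2} is immediate from $\boldsymbol 1_W\in\mathcal W_H$, which gives $X\boldsymbol 1_W=\phi\boldsymbol 1_W=\boldsymbol 1_V$; and for \eqref{eq:F3} one computes, for every $w\in\R^W$, $\boldsymbol 1_V^TXw=\langle\phi\boldsymbol 1_W,\phi\pi w\rangle=\langle\boldsymbol 1_W,\pi w\rangle=\langle\boldsymbol 1_W,w\rangle=\boldsymbol 1_W^Tw$, using the isometry and $\boldsymbol 1_W\in\mathcal W_H$.

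The step I expect to be the crux is the rigidity statement in the second paragraph: that equality of all walk moments forces the walk spaces of $G$ and $H$ to be isometrically isomorphic in a way compatible with $A$ and $B$. It may be cleaner (and is equivalent) to argue spectrally: writing $A=\sum_\lambda\lambda P_\lambda$ over the distinct eigenvalues of $A$, one has $m_k=\sum_\lambda\lambda^k\lVert P_\lambda\boldsymbol 1_V\rVert^2$, and linear independence of the sequences $(\lambda^k)_{k\ge 0}$ for distinct $\lambda$ (a Vandermonde argument) turns the moment identities into a matching of the pairs $\bigl(\lambda,\lVert P_\lambda\boldsymbol 1_V\rVert^2\bigr)$ between $G$ and $H$ --- that is, $G$ and $H$ have the same \emph{main eigenvalues} with the same \emph{main angles}. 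This also yields an explicit solution $X=\sum_\lambda\lVert P_\lambda\boldsymbol 1_V\rVert^{-2}\,(P_\lambda\boldsymbol 1_V)(Q_\lambda\boldsymbol 1_W)^T$, the sum ranging over the common main eigenvalues $\lambda$, with $Q_\lambda$ the $\lambda$-eigenprojection of $B$. Finally it is worth noting, in contrast with fractional isomorphisms, that $X$ need not be rational a priori, although --- the system $\Fiso(G,H)$ being rational --- a rational solution then exists as well.
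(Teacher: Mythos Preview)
Your proof is correct. The implication \ref{it:Fiso}$\Rightarrow$\ref{it:hom paths} is identical to the paper's. For the converse, the paper proceeds purely spectrally: it introduces the \emph{useful eigenvalues} of $A$ (those whose eigenspaces are not orthogonal to $\boldsymbol 1$), writes $\boldsymbol 1=\sum_i u_i$ as a sum of eigenvectors, expands $\boldsymbol 1^TA^\ell\boldsymbol 1=\sum_i\|u_i\|^2\lambda_i^\ell$, and then invokes a separate technical lemma (proved by a limit argument on the sequences $\lambda_i^\ell/\hat\lambda^\ell$) to conclude that the useful eigenvalues and the norms $\|u_i\|$ match between $G$ and $H$. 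From this it builds $X=\sum_i\|u_i\|^{-2}u_iv_i^T$, which is exactly the formula you wrote in your spectral alternative.

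Your primary route via the Krylov/walk spaces $\mathcal W_G$ and $\mathcal W_H$ is a genuinely different and somewhat cleaner packaging. By reading the Gram matrix of the generators $A^i\boldsymbol 1_V$ directly off the moments $m_{i+j}$, you bypass the paper's limit lemma entirely and obtain the intertwining isometry $\phi$ without ever naming an eigenvalue; the solution $X=\phi\circ\pi$ then falls out with almost no computation. The two constructions produce the same matrix $X$ (your $\phi$ sends the eigencomponents $v_i$ of $\boldsymbol 1_W$ to the $u_i$, so $\phi\circ\pi=\sum_i\|u_i\|^{-2}u_iv_i^T$), but your argument is coordinate-free and makes the underlying structural reason---equality of moments $\Leftrightarrow$ isometry of the cyclic $A$- and $B$-modules generated by $\boldsymbol 1$---transparent. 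The price is a slight loss of explicitness; the paper's version has the advantage of isolating the concrete invariant (main eigenvalues with main angles) as a third equivalent condition.
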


While the proof of Theorem~\ref{theo:1} is mainly graph-theoretic---we
establish the equivalence between the assertions \ref{it:hom paths} and \ref{it:Fiso} by expressing
the ``colors'' of color refinement in terms of specific tree
homomorphisms---the proof of Theorem~\ref{theo:2} is purely
algebraic. We use spectral techniques, but with a twist, because
neither does the spectrum of a graph $G$ determine the
vector $\vechom_{\cP}(G)$ nor does the vector determine the spectrum.
This is in contrast with~$\vechom_{\cC}(G)$ for the class~$\cC$ of all cycles, which, as we already mentioned, distinguishes two graphs if and only if they have the same spectrum.

Let us now turn to homomorphism vectors $\vechom_{\cT_k}(G)$ for the
class $\cT_k$ of all graphs of tree width at most~$k$. We will relate these
to $k$-WL, the $k$-dimensional generalization of color refinement. We
also obtain a
corresponding system of linear equations. Let $G$ and $H$ be graphs
with vertex sets $V$ and $W$, respectively. Instead of variables $X_{vw}$
for vertex pairs ${(v,w)\in V \times W}$, as in the system $\Fiso(G,H)$, the new system
has variables $X_{\pi}$ for $\pi\subseteq V\times W$ of size $|\pi|\le
k$. We call $\pi=\{(v_1,w_1),\ldots,(v_\ell,w_\ell)\}\subseteq V\times W$ a
\emph{partial bijection} if $v_i=v_j\iff w_i=w_j$ holds for all $i,j$, and 
we call it a \emph{partial isomorphism} if in addition
${v_iv_j\in E(G)\iff w_iw_j\in E(H)}$ holds for all~$i,j$.
Now consider the following system~$\Liso[k](G,H)$ of linear equations:

\vspace{-1.5em}
\begin{center}
\begin{minipage}{13cm}
  \begin{empheq}[
      left={\Liso[k](G,H):\quad\empheqlbrace},
      box=\colbox
    ]{align}
    \sum_{v\in V} X_{\pi\cup\{(v,w)\}}&= X_\pi &&
    \parbox[t]{4cm}{for all $\pi\subseteq V\times W$ of size $|\pi|\le k-1$ and all $w\in W$}
    \tag{L1}\label{eq:ck1}\\
    \sum_{w\in W} X_{\pi\cup\{(v,w)\}}&=X_\pi &&
    \parbox[t]{4cm}{for all $\pi\subseteq V\times W$ of size $|\pi|\le k-1$ and all $v\in V$}
    \tag{L2}\label{eq:ck2}\\
    X_\pi&=0 &&
    \parbox[t]{5cm}{for all $\pi\subseteq V\times W$ of size $|\pi|\le k$ such that $\pi$ is not a partial isomorphism from $G$ to $H$}
    \tag{L3}\label{eq:pk}\\
    X_\emptyset&=1
    \tag{L4}\label{eq:ck3}
  \end{empheq}
\end{minipage}
\end{center}

This system is closely related to the Sherali-Adams relaxations of
$\Fiso(G,H)$: Every solution for the level-$k$ Sherali-Adams
relaxation of $\Fiso(G,H)$ yields a solution to $\Liso[k](G,H)$, and
every solution to $\Liso[k](G,H)$ yields a solution to the level $k-1$
Sherali-Adams relaxation of $\Fiso(G,H)$ \cite{atsman13,groott15}.
Our result is this:

\begin{theorem}\label{theo:3}
    For all $k\ge 1$ and for all graphs $G$ and $H$, the following are equivalent:
    \begin{enumerate}[i]
      \item\label{it:hom twk}
        $\vechom_{\cT_k}(G)=\vechom_{\cT_k}(H)$.
      \item\label{it:kWL}
        $k$-WL does not distinguish $G$ and $H$.
      \item\label{it:Liso nonneg}
        $\Liso[k+1](G,H)$ has a nonnegative real solution.
  \end{enumerate}
\end{theorem}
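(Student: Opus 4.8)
The plan is to close a cycle of implications, reusing the strategy behind Theorem~\ref{theo:1}: the equivalence (ii)$\Leftrightarrow$(iii) is essentially the Weisfeiler--Leman/Sherali--Adams correspondence, so the substance lies in the two-way link between $k$-WL and homomorphism counts from tree width $k$. For (ii)$\Leftrightarrow$(iii) I would combine the theorem of Atserias and Maneva~\cite{atsman13}, that the level-$k$ Sherali--Adams relaxation of $\Fiso(G,H)$ is feasible exactly when $k$-WL does not distinguish $G$ and $H$, with the observation of~\cite{atsman13,groott15} that solutions to $\Liso[k+1](G,H)$ and to level-$k$ Sherali--Adams translate into each other (the shift in the level index being harmless since consecutive rounds of $k$-WL interleave with consecutive Sherali--Adams levels). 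This direction also admits a self-contained proof through the bijective $(k+1)$-pebble game: a nonnegative solution to $\Liso[k+1](G,H)$ yields, via its support, a nonempty family of partial isomorphisms closed under pebble moves, i.e.\ a winning strategy for Duplicator, and conversely a Duplicator strategy gives a feasible fractional solution by averaging over $k$-WL color classes.

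For (ii)$\Rightarrow$(i), fix a graph $F$ with $\tw(F)\le k$ and a tree decomposition of width $\le k$, so every bag has at most $k+1$ vertices. I would compute $\hom(F,G)$ by the standard leaf-to-root dynamic program: for a node $t$ with bag $\beta(t)$ and a map $\phi\colon\beta(t)\to V(G)$, let $D_t(\phi)$ be the number of homomorphisms to $G$ of the subgraph of $F$ spanned by the vertices occurring at or below $t$ that agree with $\phi$ on $\beta(t)$. An induction over the decomposition shows that $D_t(\phi)$ depends only on the stable $k$-WL color of the tuple $\phi(\beta(t))$, the introduce, forget, and join steps becoming intersection with an atomic type, summation over the $k$-WL update multiset, and a product, respectively. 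Hence $\hom(F,G)$ is a function of the multiset of $k$-WL colors realized in $G$; if $k$-WL does not distinguish $G$ from $H$, that multiset coincides with the one for $H$, so $\hom(F,G)=\hom(F,H)$, and as $F$ was arbitrary, $\vechom_{\cT_k}(G)=\vechom_{\cT_k}(H)$.

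The implication (i)$\Rightarrow$(ii) is the hard direction, and I would lift the argument for Theorem~\ref{theo:1}, where the color-refinement colors were recovered from tree homomorphism counts. To each $k$-WL color $c$ produced after $i$ refinement rounds one assigns a graph $F_c$ of tree width at most $k$ with a designated tuple: $F_c$ is an ``unfolding'' whose tree decomposition follows the recursive definition of $c$, its bags carrying the configurations of at most $k+1$ vertices that the algorithm inspects. It is arranged so that, for a fixed tuple $\bar u$ in $G$, the number of homomorphisms $F_c\to G$ sending the designated tuple to $\bar u$ expands as a nonnegative combination of the counts of round-$i$ colors $c'$ with $c'\preceq c$ (for the natural refinement order), with a nonzero coefficient on $c'=c$; grouping $\bar u$ by its color, $\hom(F_c,G)$ becomes a nonnegative combination of the $k$-WL color multiplicities of $G$ via a triangular, hence invertible, matrix. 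Thus $(\hom(F_c,G))_c$ and the multiset of stable $k$-WL colors of $G$ determine one another, so $\vechom_{\cT_k}(G)=\vechom_{\cT_k}(H)$ forces $G$ and $H$ to have the same $k$-WL color multiset, i.e.\ $k$-WL does not distinguish them. The main obstacle is the construction of $\{F_c\}$: one must ensure that each unfolding genuinely has tree width at most $k$, that the induced linear system is non-degenerate, and that the bookkeeping for tuples of length up to $k+1$ faithfully mirrors the $k$-WL update rule --- reconciling ``tree width $\le k$'' with ``non-degenerate'' is the delicate point. (A shorter but less self-contained route is to use the equivalence of $k$-WL-indistinguishability with equivalence in $(k+1)$-variable counting logic and translate sentences into bounded-tree-width homomorphism counts.)
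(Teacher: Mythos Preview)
Your plan is sound and, for (ii)$\Leftrightarrow$(iii) and for the skeleton of (i)$\Rightarrow$(ii), it matches the paper. The paper derives (ii)$\Leftrightarrow$(iii) by going through the $\mathsf C^{k+1}$ characterisation of $k$-WL and an auxiliary system $\Fiso[k+1/2]$ from~\cite{groott15}, shown equivalent to $\Liso[k+1]$ for nonnegative solutions; your appeal to~\cite{atsman13,groott15} is the same route. Your (ii)$\Rightarrow$(i), however, is a genuinely different and more elementary argument: you compute $\hom(F,G)$ by dynamic programming along a width-$k$ decomposition and show inductively that each DP entry depends only on the stable $k$-WL colour of the tuple placed in the bag. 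That works (it is essentially Dvo\v{r}\'ak's argument) and is cleaner for this single direction. The paper instead derives both directions at once from matrix identities: it introduces the intermediate quantity $\biso((F,T),G)$ counting homomorphisms that are isomorphisms on every bag, proves $\biso=\dbiso\cdot\wl$ (the direct analogue of $\hom=\dhom\cdot\crr$ from the tree case) and separately $\hom=\surj\cdot\bext\cdot\aut^{-1}\cdot\biso$, and then inverts.

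For (i)$\Rightarrow$(ii) your outline is right but one simplification hides real work. The transformation from the $\hom(F_c,G)$ to the colour multiplicities is \emph{not} a single triangular matrix in any natural order on colours; it is a product of several invertible matrices, some lower and some upper triangular. The reason is that for $k\ge 2$ a homomorphism from an unfolding $F_c$ need not be a bag-wise isomorphism: it may identify vertices inside a bag or send a non-edge of a bag to an edge of $G$, phenomena that do not arise for $k=1$. The paper absorbs these effects by first passing from $\hom$ to $\biso$ via the invertible ``bag-wise extension'' matrix $\bext$ (upper triangular) and a surjection matrix (lower triangular), and only then from $\biso$ to the colour multiplicities $\wl$ via an $LU$-factored $\dbiso$. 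If you attempt the direct route your ``$c'\preceq c$'' order would have to respect both edge-additions and surjective quotients simultaneously, and it is not clear such an order exists; factoring through $\biso$ is precisely what disentangles the two. Your closing remark that one could instead translate $\mathsf C^{k+1}$-sentences into bounded-tree-width counts is correct, but the paper does not take that shortcut for (i)$\Leftrightarrow$(ii).
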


The equivalence between \ref{it:kWL} and \ref{it:Liso nonneg} is implicit in previous work~\cite{immlan90,atsman13,groott15}. 
The system~$\Liso[k](G,H)$ has another nice interpretation related to
the proof complexity of graph isomorphism: it is shown in
\cite{bergro15} that $\Liso[k](G,H)$ has a real solution if and only if a natural system of polynomial
equations encoding the isomorphisms between $G$ and $H$ has a
degree-$k$ solution in the Hilbert Nullstellensatz proof system \cite{beaimpkra+94,bus98}. In view of Theorem~\ref{theo:2}, it is tempting to
conjecture that the solvability of $\Liso[k+1](G,H)$ characterizes the
expressiveness of the homomorphism vectors $\vechom_{\cP_k}(G)$ for the
class $\cP_k$ of all graphs of path width $k$. Unfortunately, we 
only prove one direction of this conjecture.

\begin{theorem}\label{theo:4}
  Let $k$ be an integer with $k\ge 2$ and let $G,H$ be graphs.
  If $\Liso[k+1](G,H)$ has a real solution, then $\vechom_{\cP_k}(G)=\vechom_{\cP_k}(H)$.
\end{theorem}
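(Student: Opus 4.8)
Write $V=V(G)$ and $W=V(H)$, and suppose $X=(X_\pi)$ is a real solution of $\Liso[k+1](G,H)$. It suffices to show $\hom(F,G)=\hom(F,H)$ for every graph $F$ of path width at most $k$; this is the ``easy'' analogue of the implication \ref{it:Fiso}$\Rightarrow$\ref{it:hom paths} in Theorem~\ref{theo:2}. Fix such an $F$ together with a nice path decomposition $B_0,\dots,B_m$ of width at most $k$: here $B_0=B_m=\emptyset$, every $|B_i|\le k+1$, consecutive bags differ by introducing or forgetting a single vertex, and (as is standard) each edge of $F$ has a designated bag, at which the dynamic program below will process it. Over $G$ I would run the usual homomorphism-counting dynamic program: tables $T^G_i\colon V^{B_i}\to\R$ with $T^G_0=1$; at an \emph{introduce} node adding $u$, with $N$ the set of $F$-neighbours of $u$ inside $B_{i-1}$, set $T^G_i(\alpha)=T^G_{i-1}(\alpha|_{B_{i-1}})$ if $\alpha(u)\alpha(b)\in E(G)$ for all $b\in N$ and $T^G_i(\alpha)=0$ otherwise; at a \emph{forget} node removing $u$, set $T^G_i(\alpha)=\sum_{x\in V}T^G_{i-1}(\alpha\cup\{u\mapsto x\})$. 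Then $T^G_m=\hom(F,G)$, and the identical recursion over $H$ yields $T^H_i$ with $T^H_m=\hom(F,H)$. All transitions are linear, and—crucially for the bound on $X$—an introduce node has $|B_{i-1}|\le k$ while a forget node has $|B_i|\le k$.

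The core of the proof is a ``bridge'' between the two dynamic programs built from~$X$. For each $i$ define a linear map $\Phi_i\colon\R^{V^{B_i}}\to\R^{W^{B_i}}$ by
\[
  \Phi_i(f)(\beta)\;:=\;\sum_{\alpha\colon B_i\to V}X_{\{(\alpha(b),\beta(b))\,:\,b\in B_i\}}\cdot f(\alpha),
\]
which is well defined since the indexing set has at most $|B_i|\le k+1$ elements. I would prove $\Phi_i(T^G_i)=T^H_i$ by induction on~$i$. The base case holds because $\Phi_0$ is multiplication by $X_\emptyset$ and $X_\emptyset=1$ by~(L4). For the step it suffices to show $\Phi_i$ intertwines the $G$- and $H$-transitions, $\Phi_i\circ\tau^G=\tau^H\circ\Phi_{i-1}$, and then apply this to $f=T^G_{i-1}$.

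For a forget node, expanding both sides, reindexing $\alpha=\alpha'\cup\{u\mapsto x\}$ on the $\Phi_i\circ\tau^G$ side and comparing coefficients of $f(\alpha')$ reduces the intertwining identity to $X_\pi=\sum_{y\in W}X_{\pi\cup\{(x,y)\}}$ for all $\pi$ with $|\pi|\le|B_i|\le k$ and all $x\in V$, which is exactly~(L2). For an introduce node the same reindexing reduces the identity, for each $\alpha'\colon B_{i-1}\to V$ and $\beta$ on $B_i$, writing $\pi_0=\{(\alpha'(b),\beta(b)):b\in B_{i-1}\}$, $w_0=\beta(u)$, to
\[
  \sum_{x\in V}X_{\pi_0\cup\{(x,w_0)\}}\cdot[\,\forall b\in N\colon x\alpha'(b)\in E(G)\,]\;=\;X_{\pi_0}\cdot[\,\forall b\in N\colon w_0\beta(b)\in E(H)\,].
\]
A short case analysis—on whether $\pi_0$, resp.\ $\pi_0\cup\{(x,w_0)\}$, is a partial isomorphism—settles this: (L3) kills every term in which it is not; and on the surviving terms the pairs $(x,w_0)$ and $(\alpha'(b),\beta(b))$ lie in a common partial isomorphism, so $x\alpha'(b)\in E(G)\iff w_0\beta(b)\in E(H)$ for every $b\in B_{i-1}$, which lets me pull the $E(H)$-bracket out of the sum; what remains, $\sum_{x\in V}X_{\pi_0\cup\{(x,w_0)\}}=X_{\pi_0}$, is (L1) applied to $\pi_0$ (legitimate since $|\pi_0|\le|B_{i-1}|\le k$; the $x\in\op{dom}(\pi_0)$ give non-partial-bijections, hence vanish). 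This closes the induction, and taking $i=m$ gives $\hom(F,G)=T^G_m=\Phi_m(T^G_m)=T^H_m=\hom(F,H)$ since $\Phi_m$ is multiplication by $X_\emptyset=1$. As $F$ was an arbitrary graph of path width at most $k$, this proves $\vechom_{\cP_k}(G)=\vechom_{\cP_k}(H)$.

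I expect the only genuine obstacle to be the introduce-node identity above. Equations (L1) and (L2) are plain marginalization statements, and it is not a priori clear that they survive being multiplied by the local edge constraints an introduce node imposes. The resolution—and the single point where the argument really uses that $X$ encodes \emph{isomorphisms} rather than bare marginals—is that (L3) confines the support of $X$ to partial isomorphisms, so on the support of each marginalization sum the ``$G$-edge'' and ``$H$-edge'' constraints coincide and can be factored out; and this is precisely where the defining inequality $|B_{i-1}|\le k$ of a width-$k$ path decomposition is needed, so that (L1) is available for $\pi_0$. The rest (correctness of the dynamic program, that the decomposition may be taken nice with designated bags for edges, the two reindexings) is routine bookkeeping.
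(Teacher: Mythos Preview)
Your proof is correct. The core mechanism is the same as the paper's: a transfer identity, proved by induction along the path decomposition, that expresses the $G$-count at the current bag as an $X$-weighted combination of the corresponding $H$-counts, with~(L1)/(L2) handling the marginalisations and~(L3) aligning the local edge constraints on the support of~$X$.

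The packaging differs in one useful way. The paper proves its transfer lemma for the \emph{bag-wise isomorphic} homomorphism count $\biso((F,P),\cdot)$ conditioned on the first bag, and then invokes the $\surj\cdot\bext\cdot\aut^{-1}$ machinery from the tree-width section to pass from $\biso$ to $\hom$. You instead run the standard homomorphism-counting dynamic program on a nice path decomposition and intertwine it directly with~$\Phi_i$, so you land on $\hom(F,G)=\hom(F,H)$ without the $\biso\to\hom$ conversion. This makes your argument more elementary and self-contained; the paper's route, on the other hand, reuses the infrastructure already built for Theorem~\ref{theo:3}. One small cosmetic point: your parenthetical ``the $x\in\op{dom}(\pi_0)$ give non-partial-bijections, hence vanish'' is not quite right (if $w_0$ happens to equal $\pi_0(x)$ then $\pi_0\cup\{(x,w_0)\}=\pi_0$), but your main argument does not rely on it---(L1) is literally the sum over all $x\in V$, so no special handling of those terms is needed.
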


Combining this theorem with a recent result from \cite{gropak17}
separating the nonnegative from arbitrary real solutions of our
systems of equations, we obtain the following corollary.

\begin{corollary}
  For every $k$, there are graphs $G$ and $H$ with
  $\vechom_{\cP_k}(G)=\vechom_{\cP_k}(H)$ and
  $\vechom_{\cT_2}(G)\neq\vechom_{\cT_2}(H)$.
\end{corollary}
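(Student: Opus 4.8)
The plan is to derive the corollary by combining Theorem~\ref{theo:4} with the equivalences in Theorem~\ref{theo:3} and the separation result of~\cite{gropak17}; essentially all that is required is to line up the levels of the linear systems correctly. First I would fix $k\ge 2$; the cases $k\le 1$ need no separate treatment, since $\cP_{k'}\subseteq\cP_k$ for $k'\le k$ and hence any pair of graphs witnessing the statement for $k$ also witnesses it for every smaller $k'$. The goal is then to exhibit graphs $G$ and $H$ with the two properties: (a) the system $\Liso[k+1](G,H)$ has a real solution, and (b) the system $\Liso[3](G,H)$ has no \emph{nonnegative} real solution.

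Given such a pair, the corollary follows immediately. Property~(a) is precisely the hypothesis of Theorem~\ref{theo:4} (applied with parameter~$k$), which then yields $\vechom_{\cP_k}(G)=\vechom_{\cP_k}(H)$. Property~(b), read through the equivalence of items~\ref{it:hom twk} and~\ref{it:Liso nonneg} of Theorem~\ref{theo:3} with parameter~$2$ (so that the system $\Liso[k+1]$ appearing there specializes to $\Liso[3]$), says exactly that $\vechom_{\cT_2}(G)\neq\vechom_{\cT_2}(H)$. These are the two assertions of the corollary.

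It remains to produce graphs satisfying (a) and (b), and here I would invoke~\cite{gropak17}. That work separates the nonnegative from the arbitrary real solutions of the systems $\Liso[\ell]$: for every $k$ it constructs graphs $G$ and $H$ such that $\Liso[k+1](G,H)$ is solvable over the reals, while $G$ and $H$ are nevertheless distinguished by $2$-WL. The former is property~(a) verbatim; the latter is property~(b), because by the equivalence of items~\ref{it:kWL} and~\ref{it:Liso nonneg} of Theorem~\ref{theo:3} (again with parameter~$2$), the statement that $2$-WL distinguishes $G$ and $H$ means exactly that $\Liso[3](G,H)$ has no nonnegative real solution.

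The one point worth flagging is that (a) and (b) concern systems of different levels. This causes no difficulty: restricting any real solution of $\Liso[k+1](G,H)$ to the coordinates $X_\pi$ with $\abs{\pi}\le 3$ yields a real solution of $\Liso[3](G,H)$ (this is where $k\ge 2$ is used, so that the constraints~\eqref{eq:ck1}--\eqref{eq:pk} of $\Liso[3]$ survive), so~(a) already implies that $\Liso[3](G,H)$ is real-solvable; what we genuinely need from~\cite{gropak17} is the \emph{stronger} failure of nonnegative solvability at level~$3$, and that is precisely what the construction there delivers. Thus the substantive content is carried entirely by the cited construction, and the only thing to be careful about on our side is the off-by-one in the level parameters of $\Liso$ when passing through Theorems~\ref{theo:3} and~\ref{theo:4}.
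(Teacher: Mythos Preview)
Your proposal is correct and follows exactly the approach the paper intends: the paper itself does not spell out a proof of the corollary beyond the sentence ``Combining this theorem with a recent result from~\cite{gropak17}\ldots'', and you have correctly unpacked this into the two ingredients---Theorem~\ref{theo:4} applied at level~$k+1$ to obtain equal path-width-$k$ homomorphism vectors, and Theorem~\ref{theo:3} at level~$3$ (equivalently, $2$-WL) to obtain unequal $\cT_2$-vectors. Your handling of $k\le 1$ via the inclusion $\cP_{k'}\subseteq\cP_k$ and your remark about restricting solutions from level~$k+1$ to level~$3$ are careful touches that the paper leaves implicit.
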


\section{Preliminaries}

\subparagraph*{Basics.}
Graphs in this paper are simple, undirected, and finite
(even though our results transfer to directed graphs and even to
weighted graphs).
For a graph~$G$, we write~$V(G)$ for its vertex set and~$E(G)$ for its edge set.
For $v\in V(G)$, the set of neighbors of~$v$ are denoted with~$N_G(v)$.
For $S\subseteq V(G)$, we denote with~$G[S]$ the subgraph of~$G$ induced by the vertices of~$S$.
A \emph{rooted graph} is a graph~$G$ together with a designated root vertex~$r(G)\in V(G)$.
We write multisets using the notation $\mset{1,1,6,2}$.

\subparagraph*{Matrices.}
An $LU$-decomposition of a matrix~$A$ consists of a lower triangular matrix~$L$ and an upper triangular matrix~$U$ such that $A=LU$ holds.
Every finite matrix~$A$ over~$\R$ has an $LU$-decomposition.
We also use \emph{infinite matrices} over~$\R$, which are functions~$A:I\times J\to\R$ where~$I$ and~$J$ are locally finite posets and countable.
The matrix product~$AB$ is defined in the natural way via $(AB)_{ij} =
\sum_{k} A_{ik} B_{kj}$ if all of these inner products are finite
sums, and otherwise we leave it undefined. 
An $n\times n$ real symmetric matrix has real eigenvalues and a corresponding set of orthogonal eigenspaces.
The spectral decomposition of a real symmetric matrix $M$ is 
of the form $M = \lambda_1 P_1 + \dots + \lambda_l P_l$ where 
$\lambda_1,\dots,\lambda_l$ are the eigenvalues of~$M$ with 
corresponding eigenspaces $W_1,\dots,W_l$. Moreover, 
each~$P_j$ is the projection matrix corresponding to the projection
onto the eigenspace $W_j$. Usually, $P_j$ is expressed 
as $P_j = UU^T$ for a matrix $U$ whose columns form an orthonormal basis of~$W_j$.

\subparagraph*{Homomorphism numbers.}
Recall that a mapping $h:V(F)\to V(G)$ is a homomorphism if $h(e)\in
E(G)$ holds for all $e\in E(F)$ and that $\hom(F,G)$ is the number of homomorphisms from~$F$ to~$G$.
Let $\surj(F,G)$ be the number of homomorphisms from~$F$ to~$G$ that are surjective on both the vertices and edges of~$G$.
Let $\inj(F,G)$ be the number of injective homomorphisms from~$F$ to~$G$.
Let $\sub(F,G)=\inj(F,G)/\aut(F)$, where $\aut(F)$ is the number of
automorphisms of $F$. Observe that $\sub(F,G)$ is the number of
subgraphs of~$G$ that are isomorphic to $F$.
Where convenient, we view the objects $\hom$, $\surj$, and $\inj$ as
infinite matrices; the matrix indices are all unlabeled graphs, sorted
by their size. However, we only use one representative of each
isomorphism class, called the \emph{isomorphism type} of the graphs in
the class, as an index in the matrix.
Then $\surj$ is lower triangular and $\inj$ is upper triangular,
so $\hom = \surj \cdot \sub$ is an LU-decomposition of $\hom$.
Finally, $\ind(F,G)$ is the number of times~$F$ occurs as an induced
subgraph in~$G$. 
Similarly to the homomorphism vectors $\vechom_{\cF}(G)$ we define vectors $\vecinj_\cF(G)$ and $\vecind_\cF(G)$.
Finally, let $G,H$ be rooted graphs. A homomorphism from~$G$ to~$H$ is a graph homomorphism that maps the root of~$G$ to the root of~$H$.
Moreover, two rooted graphs are isomorphic if there is an isomorphism mapping the root to the root.

\section{Homomorphisms from trees}
\label{sec:tree}
\subsection{Color refinement and tree unfolding}
Color refinement iteratively colors the
vertices of a graph in a sequence of \emph{refinement rounds}. Initially, all
vertices get the same color. In each refinement round, any two
vertices~$v$ and~$w$ that still have the same color get different colors if
there is some color $c$ such that~$v$ and~$w$ have a different number
of neighbors of color $c$; otherwise they keep the same color. 
We stop the refinement process if the vertex partition that is induced by the colors does not change anymore, that is, all pairs of
vertices that have the same color before the refinement round still
have the same color after the round. 
More formally, we define the sequence $C^G_0,C^G_1,C^G_2,\ldots$ of colorings as follows.
We let $C^G_0(v) = 1$ for all $v\in V(G)$, and for $i\ge 0$ we let
$C^G_{i+1}(v) = \msetc{C^G_i(u)}{u\in N_G(v)}$.
We say that color refinement \emph{distinguishes} two graphs~$G$ and~$H$ if there is an $i\ge 0$ with
\begin{equation}\label{eq:colref distinguishes}
  \msetc{C_{i}^G(v)}{v\in V(G)}\neq\msetc{C_i^H(v)}{v\in V(H)}\,.
\end{equation}

We argue now that the color refinement algorithm implicitly constructs a tree at~$v$ obtained by simultaneously taking all possible walks starting at~$v$ (and not remembering nodes visited in the past).
For a rooted tree~$T$ with root~$r$, a graph~$G$, and a vertex~$v\in V(G)$, we say
that $T$ is \emph{a tree at~$v$} if there is a homomorphism $f$ from
$T$ to $G$ such that 
    $f(r)=v$ and,
    for all non-leaves $t\in V(T)$, the function~$f$ induces a
    bijection between the set of children of $t$ in in $T$ and the set
    of neighbors of~$f(t)$ in $G$. 
In other words, $f$ is a homomorphism from~$T$ to~$G$ that is \emph{locally bijective}.
If~$T$ is an infinite tree at~$v$ and does not have any leaves,
then~$T$ is uniquely determined up to isomorphisms, and we call this
\emph{the infinite tree at~$v$} (or the \emph{tree unfolding} of~$G$ at~$v$), denoted with~$T(G,v)$.
For an infinite rooted tree~$T$, let~$T_{\le d}$ be the finite rooted subtree of~$T$ where all leaves are at depth exactly~$d$.
For all finite trees~$T$ of depth~$d$, define $\crr(T,G)\in\set{0,\dots,\abs{V(G)}}$ to be the number of vertices~$v\in V(G)$ for which~$T$ is isomorphic to $T(G,v)_{\le d}$.
Note that this number is zero if not all leaves of~$T$ are at the same depth~$d$ or if some node of~$T$ has more than~$n-1$ children.
The $\veccrr$-vector of~$G$ is the
vector~$\veccrr(G)=(\crr(T,G))_{T\in\cT_r}$, where~$\cT_r$ denotes the
family of all rooted trees.
The following connection between the color refinement algorithm and the $\veccrr$-vector is known.

\begin{lemma}[Angluin~\cite{DBLP:conf/stoc/Angluin80}, also see Krebs and Verbitsky~{\cite[Lemma~2.5]{DBLP:conf/lics/KrebsV15}}]\label{lem: colref cr}
  For all graphs $G$ and $H$, color refinement distinguishes $G$ and $H$
  if and only if $\veccrr(G)\neq\veccrr(H)$ holds.
\end{lemma}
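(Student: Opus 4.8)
The plan is to show that the color $C^G_i(v)$ produced by color refinement after~$i$ rounds is merely an encoding of the isomorphism type of the truncated tree unfolding $T(G,v)_{\le i}$, and then to read off the lemma by translating the multiset comparison~\eqref{eq:colref distinguishes} into a comparison of the numbers $\crr(T,\cdot)$. Concretely, I would first prove by induction on~$i$ the following claim: for all \emph{non-isolated} vertices $v\in V(G)$ and $w\in V(H)$ we have $C^G_i(v)=C^H_i(w)$ if and only if $T(G,v)_{\le i}\cong T(H,w)_{\le i}$ as rooted trees; moreover, at every round $i\ge 1$ an isolated vertex receives the empty-multiset color, and no non-isolated vertex ever does. (If $\abs{V(G)}\ne\abs{V(H)}$, then already the round-$0$ multisets differ and $\crr(T_0,G)\ne\crr(T_0,H)$ for the one-vertex tree~$T_0$, so we may assume $\abs{V(G)}=\abs{V(H)}$.)

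For the induction, the case $i=0$ is trivial since all colors and all trees $T(G,v)_{\le 0}$ are equal. For the step, the structural observation to use is that, because the tree unfolding is allowed to walk back along the edge it just used, the subtree of $T(G,v)$ rooted at the child corresponding to a neighbor $u\in N_G(v)$ is isomorphic to the whole unfolding $T(G,u)$; truncating $T(G,v)$ at depth $i+1$ therefore truncates that subtree at depth~$i$, so $T(G,v)_{\le i+1}$ is, up to isomorphism, the rooted tree whose root has, for each $u\in N_G(v)$, one child below which sits a copy of $T(G,u)_{\le i}$. Hence the isomorphism type of $T(G,v)_{\le i+1}$ is determined by, and determines, the multiset $\msetc{[T(G,u)_{\le i}]}{u\in N_G(v)}$ of isomorphism types. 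Combining this with $C^G_{i+1}(v)=\msetc{C^G_i(u)}{u\in N_G(v)}$ and the induction hypothesis (which says the entries of these two multisets are in count-preserving bijection, consistently across both graphs) yields the claim at level $i+1$: equality of the $C_{i+1}$-colors amounts to a color-matching bijection $N_G(v)\to N_H(w)$, which by the hypothesis is the same as a subtree-matching bijection, which is the same as an isomorphism $T(G,v)_{\le i+1}\cong T(H,w)_{\le i+1}$.

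Granting the claim, the lemma follows by translation. Fix a depth~$d$ and a rooted tree~$T$ all of whose leaves are at depth~$d$ (the only case in which $\crr(T,\cdot)$ can be nonzero). By the claim, $\crr(T,G)$ equals the multiplicity, in the round-$d$ coloring of~$G$, of the single color shared by those non-isolated $v$ with $T(G,v)_{\le d}\cong T$; and that color is absent from the round-$d$ coloring of~$H$ precisely when $T$ occurs in~$H$ for no vertex, i.e.\ when $\crr(T,H)=0$, so the identification of ``the color of~$T$'' is consistent across~$G$ and~$H$. Therefore the round-$d$ color multisets of~$G$ and~$H$ agree if and only if $\crr(T,G)=\crr(T,H)$ for every depth-$d$ tree~$T$ and, in addition, the numbers of isolated vertices agree; but the latter is redundant once $\abs{V(G)}=\abs{V(H)}$, since equality of all $\crr(T,\cdot)$ over depth-$d$ trees forces the sums $\sum_T\crr(T,\cdot)=\abs{V}-(\text{number of isolated vertices})$ to agree. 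Taking the disjunction over all $d\ge 0$ gives exactly: color refinement distinguishes $G$ and~$H$ iff some depth-$d$ tree~$T$ has $\crr(T,G)\ne\crr(T,H)$ iff $\veccrr(G)\ne\veccrr(H)$. The conceptual core is the routine induction of the second paragraph; the only place needing care is the bookkeeping around it---keeping the color-to-type correspondence coherent between the two graphs, and separating off isolated vertices and graphs of different orders---which I expect to be the main (though minor) obstacle.
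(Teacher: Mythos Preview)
Your proposal is correct and follows essentially the same approach as the paper: both establish by induction on~$i$ that the round-$i$ color of a vertex encodes the isomorphism type of its depth-$i$ truncated unfolding. The paper phrases this slightly more cleanly as a single graph-independent bijection~$\pi$ between possible round-$i$ colors and rooted trees with all leaves at depth~$i$, which makes your extra bookkeeping around isolated vertices and unequal vertex counts unnecessary (once the correspondence is defined at the level of abstract color-types rather than vertex-by-vertex, consistency across $G$ and $H$ is automatic).
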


\subsection{Proof of Theorem~\ref{theo:1}}
Throughout this section, we work with rooted trees. For a
rooted tree $T$ and an (unrooted) graph $G$, we simply let $\hom(T,G)$
be the number of homomorphisms of the plain tree underlying $T$ to
$G$, ignoring the root.

Let~$T$ and $T'$ be rooted trees.  A homomorphism~$h$ from $T$ to $T'$
is \emph{depth-preserving} if, for all vertices~$v\in V(T)$, the depth
of~$v$ in~$T$ is equal to the depth of~$h(v)$ in~$T'$.  Moreover, a
homomorphism~$h$ from~$T$ to~$T'$ is \emph{depth-surjective} if the
image of~$T$ under~$h$ contains vertices at every depth present
in~$T'$.  We define~$\dhom(T,T')$ as the number of homomorphisms from~$T$ to~$T'$ that are both depth-preserving and depth-surjective.
Note that $\dhom(T,T')=0$ holds if and only if~$T$ and $T'$ have different
depths.

\begin{lemma}\label{lem:hom dhom cr}
  Let $T$ be a rooted tree and let $G$ be a graph.
  We have
  \begin{equation}\label{eq:hom dhom cr}
    \hom( T, G )
    =
    \sum_{T'}
    \dhom(T,T')
    \cdot
    \crr(T',G)
    \,,
  \end{equation}
  where the sum is over all unlabeled rooted trees~$T'$.
  In other words, the matrix identity
  $\hom = \dhom \cdot \crr$ holds.
\end{lemma}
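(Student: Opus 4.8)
The plan is to prove the pointwise refinement that, for every vertex $v\in V(G)$, the number of homomorphisms $f\colon T\to G$ with $f(r(T))=v$ equals $\dhom\bigl(T,T(G,v)_{\le d}\bigr)$, where $d$ denotes the depth of $T$; summing this over $v\in V(G)$ and grouping the vertices by the isomorphism type of $T(G,v)_{\le d}$ then yields~\eqref{eq:hom dhom cr}. So the first step is merely to write $\hom(T,G)=\sum_{v\in V(G)}h_v$, where $h_v$ is the number of homomorphisms from $T$ to $G$ that send the root to $v$.

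The heart of the proof is an \emph{unfolding bijection} identifying these root-respecting homomorphisms with the depth-preserving homomorphisms from $T$ into the tree unfolding $T(G,v)$. Let $\pi\colon T(G,v)\to G$ be the canonical locally bijective homomorphism with $\pi(r(T(G,v)))=v$, and assume for now that $v$ is not isolated, so that $T(G,v)$ is infinite and leafless. Given a homomorphism $f\colon T\to G$ with $f(r(T))=v$, I would lift it top-down: put $\tilde f(r(T))=r(T(G,v))$, and for a node $t$ with parent $t^-$, once $\tilde f(t^-)$ is defined with $\pi(\tilde f(t^-))=f(t^-)$, observe that $f(t)\in N_G(f(t^-))$ and that, since $\tilde f(t^-)$ is a non-leaf of $T(G,v)$, the map $\pi$ restricts to a bijection from the children of $\tilde f(t^-)$ onto $N_G(f(t^-))$; let $\tilde f(t)$ be the unique child that is mapped to $f(t)$. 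By construction $\tilde f$ is depth-preserving and $\pi\circ\tilde f=f$. Conversely, any depth-preserving homomorphism $g\colon T\to T(G,v)$ must send $r(T)$ to the unique depth-$0$ vertex $r(T(G,v))$, so $\pi\circ g$ is a root-respecting homomorphism $T\to G$; and a straightforward induction (using that a depth-preserving homomorphism sends each non-root node to a \emph{child} of the image of its parent) shows that $f\mapsto\tilde f$ and $g\mapsto\pi\circ g$ are mutually inverse. Hence $h_v$ equals the number of depth-preserving homomorphisms $T\to T(G,v)$. Such a homomorphism has image inside the first $d+1$ levels of $T(G,v)$, i.e.\ inside $T(G,v)_{\le d}$, and since $T$ and $T(G,v)_{\le d}$ both have depth $d$, every depth-preserving homomorphism between them is automatically depth-surjective. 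Therefore $h_v=\dhom\bigl(T,T(G,v)_{\le d}\bigr)$.

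It remains to collect terms. Grouping the sum $\sum_{v\in V(G)}\dhom(T,T(G,v)_{\le d})$ by the isomorphism type $T'$ of $T(G,v)_{\le d}$, the definition of $\crr$ gives exactly $\crr(T',G)$ vertices $v$ for each fixed $T'$ of depth $d$, so the sum equals $\sum_{T'\colon\op{depth}(T')=d}\crr(T',G)\cdot\dhom(T,T')$; adding the vanishing terms for the remaining $T'$ — recall $\dhom(T,T')=0$ unless $T$ and $T'$ have equal depth — produces~\eqref{eq:hom dhom cr}, and the sum is finite since $\crr(T',G)\neq 0$ for at most $\abs{V(G)}$ trees $T'$. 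I expect the only genuine work, and thus the main obstacle, to be verifying that the unfolding map is well defined and is a bijection onto the depth-preserving homomorphisms into $T(G,v)$; the degenerate cases — an isolated vertex $v$ of $G$ (where $T(G,v)$ collapses to a single vertex) and a depth-$0$ tree $T$ — affect only the single-vertex summand of~\eqref{eq:hom dhom cr} and are checked by hand.
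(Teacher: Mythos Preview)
Your proof is correct and follows essentially the same approach as the paper's: partition the homomorphisms $T\to G$ by the image $v$ of the root, identify $h_v$ with $\dhom(T,T(G,v)_{\le d})$, and then group by the isomorphism type of $T(G,v)_{\le d}$. The only difference is that the paper asserts the key identity $|H(T',v)|=\dhom(T,T')$ as an ``observation'' without argument, whereas you explicitly construct and verify the unfolding bijection (and note the degenerate cases); structurally the two proofs are the same.
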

\begin{proof}
  Let~$d$ be the depth of~$T$ and let~$r$ be the root of~$T$.
  Every~$T'$ with $\dhom(T,T')\ne 0$ has depth~$d$ too and there are at most~$n$ non-isomorphic rooted trees~$T'$ of depth~$d$ with $\crr(T',G)\ne 0$.
  Thus the sum in \eqref{eq:hom dhom cr} has only finitely many non-zero terms and is well-defined.
  
  For a rooted tree~$T'$ and a vertex $v\in V(G)$, let $H(T',v)$ be the
  set of all homomorphisms~$h$ from~$T$ to~$G$ such that $h(r)=v$ holds and
  the tree unfolding~$T(G,v)_{\le d}$ is isomorphic to~$T'$.
  Let $H(T')=\bigcup_{v\in V(G)}H(T',v)$ and observe
  $|H(T',v)|=\dhom(T,T')$.
  Since $\crr(T',G)$ is the number of $v\in
  V(G)$ with $T(G,v)_{\le d}\cong T'$, we thus have
  $|H(T')|=\dhom(T,T')\cdot \crr(T',G)$.
  Since each homomorphism from~$T$ to~$G$ is contained in exactly one set~$H(T')$,
  we obtain the desired equality \eqref{eq:hom dhom cr}.
\end{proof}

For rooted trees $T$ and $T'$, let $\dsurj(T,T')$ be the number of
depth-preserving and surjective homomorphisms from~$T$ to~$T'$.  In
particular, not only do these homomorphisms have to be
depth-surjective, but they should hit every vertex of~$T'$.  For
rooted trees $T$ and $T'$ of the same depth, let $\dsub(T,T')$ be
the number of subgraphs of~$T'$ that are isomorphic to~$T$ (under an
isomorphism that maps the root to the root); if $T$ and $T'$ have
different depths, we set $\dsub(T,T')=0$.

\begin{lemma}\label{lem: dhom dsurj sub}
  $\dhom = \dsurj \cdot \dsub$ is an $LU$-decomposition of~$\dhom$, and $\dsurj$ and $\dsub$ are invertible.
\end{lemma}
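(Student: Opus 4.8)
The plan is to mimic the structure of the standard $\hom = \surj \cdot \sub$ decomposition, but carried out in the ``depth-preserving'' world of rooted trees. First I would fix a depth $d$ and restrict attention to rooted trees of depth exactly $d$, since all three matrices $\dhom$, $\dsurj$, $\dsub$ vanish between trees of different depths; thus the infinite matrix identity $\dhom = \dsurj \cdot \dsub$ decomposes into a block for each $d$, and each block is indexed by the (infinitely many, but locally finite when ordered by number of vertices) rooted trees of depth $d$. The combinatorial identity to establish is
\begin{equation*}
  \dhom(T,T') = \sum_{T''} \dsurj(T,T'') \cdot \dsub(T'',T'),
\end{equation*}
which follows by classifying each depth-preserving homomorphism $h\colon T\to T'$ according to its image: the image is a subtree $T''\subseteq T'$ that is rooted at the root of $T'$ and has the same depth $d$ (because $h$ is automatically depth-surjective once it is depth-preserving and $T''$ contains the root — a depth-preserving map into a depth-$d$ tree whose image is a subtree containing the root has image of depth exactly $d$, since leaves of $T$ at depth $d$ map to depth-$d$ vertices). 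Summing over the isomorphism type of $T''$ gives the product formula. This is the routine part.

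Next I would verify the triangularity. Order rooted trees of a fixed depth $d$ by number of vertices (breaking ties arbitrarily but consistently with the global ordering used for the $\hom$ matrix). A depth-preserving surjective homomorphism $T\to T'$ can only exist if $\abs{V(T)}\ge \abs{V(T')}$, so $\dsurj$ is lower triangular in this order; moreover $\dsurj(T,T) = \aut(T)$ for the rooted-tree automorphism count, in particular nonzero, so the diagonal entries are nonzero. Conversely $\dsub(T'',T') \ne 0$ requires $T''$ to embed into $T'$, hence $\abs{V(T'')}\le \abs{V(T')}$, so $\dsub$ is upper triangular, and $\dsub(T',T') = 1$, so its diagonal is all ones. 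This gives the $LU$-decomposition claim.

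For invertibility, the point is that each diagonal block of $\dsurj$ and of $\dsub$ is a triangular matrix (possibly infinite, but row- and column-finite when restricted to a fixed depth $d$ and the ordering by size, since there are only finitely many rooted trees of depth $d$ with at most a given number of vertices) with nonzero diagonal entries — $\aut(T)$ and $1$ respectively. A triangular row- and column-finite matrix with nonzero diagonal over a field is invertible, with the inverse computed by the usual back-substitution that terminates at each entry because of local finiteness. Hence $\dsurj$ and $\dsub$ are invertible as infinite matrices, and so is $\dhom = \dsurj\cdot\dsub$.

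The main obstacle I anticipate is not any single step but rather bookkeeping about the infinite matrices: one must be careful that the products $\dsurj\cdot\dsub$ and the inverses are genuinely well-defined, i.e.\ that every relevant sum is finite. The cleanest way around this is to observe at the outset that all three matrices are block-diagonal over the depth parameter $d$, and within depth $d$ the trees with at most $m$ vertices form a finite set for every $m$; combined with the size ordering this makes every matrix ``locally finite triangular,'' so all the finite-matrix reasoning (existence of $LU$, invertibility of triangular matrices with nonzero diagonal) applies verbatim block by block. Once that observation is in place, the rest is the image-factorization argument above together with the two triangularity checks.
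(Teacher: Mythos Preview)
Your proposal is correct and follows essentially the same route as the paper: factor each depth-preserving homomorphism through its image to obtain the product identity, then observe that $\dsurj$ is lower triangular with diagonal entries $\aut(T)\neq 0$ and $\dsub$ is upper triangular with diagonal entries $1$, so both are invertible over the locally finite poset of rooted trees ordered by size. One small slip: $\dsub$ is not row-finite (a fixed $T''$ sits inside infinitely many larger trees of the same depth), but this does not matter---what you actually need, and what the paper uses, is just local finiteness of the size ordering, which you also invoke.
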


As is the case for finite matrices, the inverse of a lower
(upper) triangular matrix is lower (upper) triangular. As the matrix
$\dsurj$ is lower triangular and the matrix $\dsub$ is upper
triangular, their inverses are as well.
We are ready to prove our first main theorem.

\begin{proof}[Proof of Theorem~\ref{theo:1}]
  We only need to prove the equivalence between assertions~\ref{it:homvec trees} and~\ref{it:colref}.
For every graph~$G$, let
$\vechom_r(G):=\big(\hom(T,G)\big)_{T\in\cT_r}$.
By our convention that for a rooted tree~$T$ and an unrooted graph $G$
we let $\hom(T,G)$ be the number of homomorphisms of the plain tree underlying $T$ to
$G$, for all $G$ and $H$ we have $\vechom_r(G)=\vechom_r(H)\iff\vechom(G)=\vechom(H)$.
By Lemma~\ref{lem: colref cr}, it suffices to prove for all graph
$G,H$ that
\begin{equation}\label{eq:homcr1}
\veccrr(G)=\veccrr(H)\iff\vechom_r(G)=\vechom_r(H)\,.
\end{equation}
We view the vectors $\vechom_r(G)$ and $\veccrr(G)$ as infinite column vectors.
  By Lemma~\ref{lem:hom dhom cr}, we have
  \begin{equation}
    \vechom_r(G)=\dhom\cdot\veccrr(G)
    \text{ and }
    \vechom_r(H)=\dhom \cdot \veccrr(H)\,.
  \end{equation}
  The forward direction of \eqref{eq:homcr1} now follows immediately.
  
  It remains to prove the backward direction.
  Since $\dhom=\dsurj\cdot\dsub$ holds by Lemma~\ref{lem: dhom dsurj sub} for two invertible matrices $\dsurj$ and $\dsub$, we can first left-multiply with $\dsurj^{-1}$ to obtain the equivalent identities
  \begin{equation}
    \dsurj^{-1}\cdot\vechom_r(G)=\dsub\cdot\veccrr(G)
    \text{ and }
    \dsurj^{-1}\cdot\vechom_r(H)=\dsub\cdot\veccrr(H)\,.
  \end{equation}
  Now suppose $\vechom_r(G)=\vechom_r(H)$ holds, and set $\boldsymbol v=\vechom_r(G)$.
  Then $\dsurj^{-1}\cdot\boldsymbol v$ is well-defined, because
  $\dsurj$ and its inverse are lower triangular. Thus we obtain
  $\dsub\cdot\veccrr(G)=\dsub\cdot\veccrr(H)$ and set $\boldsymbol w=\veccrr(G)$. Unfortunately,
  $\dsub^{-1}\cdot \boldsymbol w$ may be undefined, since $\dsub^{-1}$ is upper triangular.
  While we can
  still use a matrix inverse, the argument becomes a bit subtle. The
  crucial observation is that $\crr(T',G)$ is non-zero for at most~$n$
  different trees~$T'$, and all such trees have maximum degree at most
  $n-1$.
  Thus we do not need to look at \emph{all} trees but only those with maximum degree~$n$.
  Let $\widetilde{\cT}$ be the set of all unlabeled rooted trees of maximum degree at most~$n$.
  Let $\veccrr'=\veccrr|_{\widetilde{\cT}}$,
  let $\boldsymbol w'=\boldsymbol w|_{\widetilde{\cT}}$, and let
  $\dsub'=\dsub|_{\widetilde{\cT} \times \widetilde{\cT}}$.
  Then we still have the following for all $T\in\widetilde{\cT}$ and $G$:
  \begin{equation}
    \boldsymbol w'_{T} = \sum_{T'\in\widetilde{\cT}} \dsub'(T,T')\cdot\crr'(T',G)\,.
  \end{equation}
  The new matrix $\dsub'$ is a principal minor of~$\dsub$ and thus remains invertible.
  Moreover, $\dsub'^{-1}\cdot \boldsymbol w'$ is well-defined, since
  \begin{equation}\label{eq: dsub inverse expanded}
    \sum_{T'\in\widetilde{\cT}} \dsub'^{-1}(T,T')\cdot \boldsymbol w'_{T'}
  \end{equation}
  is a finite sum for each~$T$: The number of (unlabeled) trees $T'\in\widetilde{\cT}$ that have the same depth~$d$ as~$T$ is bounded by a function in~$n$ and~$d$.
  Thus~$\dsub'^{-1}\cdot \boldsymbol w'=\veccrr'(G)$.
  By a similar argument, we obtain $\dsub'^{-1}\cdot \boldsymbol w'=\veccrr'(H)$.
  This implies $\veccrr'(G)=\veccrr'(H)$ and thus $\veccrr(G)=\veccrr(H)$.
\end{proof}

\section{Homomorphisms from cycles and paths}\label{sec:path}
While the arguments we saw in the proof of Theorem~\ref{theo:1} are
mainly graph-theoretic, the proof of Theorem~\ref{theo:2} uses
spectral techniques. 
To introduce the techniques, we first prove a
simple, known result already mentioned in the introduction. We call two square
matrices \emph{co-spectral} if they have the same
eigenvalues with the same multiplicities, and we call two graphs
\emph{co-spectral} if their adjacency matrices are co-spectral.

\begin{prop}[e.g. {\cite[Lemma~1]{van2003graphs}}]\label{prop:spectrum}
  Let $\cC$ be the class of all cycles (including the degenerate cycle of length
  $0$, which is just a single vertex).
  For all graphs $G$ and $H$, we have
  $\vechom_{\cC}(G)=\vechom_{\cC}(H)$
  if and only if $G$ and $H$ are co-spectral.
\end{prop}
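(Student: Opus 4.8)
The plan is to translate the combinatorial quantities $\hom(C_k,G)$ into spectral data and then apply Newton's identities. For $k\ge 1$, a homomorphism from the cycle $C_k$ to $G$ is exactly a closed walk of length $k$ in $G$, and the number of such walks is $\tr(A^k)$, where $A$ is the adjacency matrix of $G$; for the degenerate cycle of length $0$ we have $\hom(C_0,G)=\abs{V(G)}=\tr(A^0)$. Writing $\lambda_1,\dots,\lambda_n$ for the eigenvalues of $A$ with multiplicity, this gives $\hom(C_k,G)=\sum_i\lambda_i^k$, the $k$-th power sum of the spectrum of $G$. So $\vechom_\cC(G)=\vechom_\cC(H)$ says precisely that $G$ and $H$ have equal power sums $p_k=\sum_i\lambda_i^k$ of their respective spectra, for all relevant $k$.

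The forward implication is then immediate: power sums are symmetric functions of the eigenvalue multiset, so co-spectral graphs satisfy $p_k(G)=p_k(H)$ for every $k$, hence $\hom(C_k,G)=\hom(C_k,H)$ for all $k$.

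For the converse I would argue as follows. From $p_0=n$ we first get that $G$ and $H$ have the same number $n$ of vertices, so their characteristic polynomials $\det(xI-A)$ and $\det(xI-B)$ are both monic of degree $n$. Newton's identities express each elementary symmetric polynomial $e_j$ of the eigenvalues, for $1\le j\le n$, as a polynomial over $\Q$ in $p_1,\dots,p_j$ (the divisions by integers at most $n$ are harmless over $\Q$). Since $p_1,\dots,p_n$ agree for $G$ and $H$, so do $e_1,\dots,e_n$; hence the characteristic polynomials coincide, and therefore so do the spectra, i.e.\ $G$ and $H$ are co-spectral.

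The one point that needs care, and the only real obstacle, is that $C_1$ and $C_2$ are not simple graphs, so $\vechom_\cC$ does not literally contain $\tr(A)$ and $\tr(A^2)$, whereas Newton's identities need $p_1$ and $p_2$. The value $p_1=\tr(A)=0$ holds for every simple graph, and $p_2=\tr(A^2)$ (twice the edge count) can be recovered either by reading $C_2$ as the length-$2$ multigraph cycle, for which $\hom(C_2,G)=\hom(K_2,G)=2\abs{E(G)}$, or, if one insists on simple cycles only, from the higher power sums: the generating-function identity $\sum_i(1-\lambda_i t)^{-1}=\sum_{k\ge 0}p_k t^k$ shows that agreement of $p_0$, $p_1$ and all $p_k$ with $k\ge 3$ forces the rational function $\det(sI-A)/\det(sI-B)$ (in the variable $s=1/t$) to equal, up to a multiplicative constant, $e^{-(p_2(G)-p_2(H))/(2s^2)}$, which is impossible unless $p_2(G)=p_2(H)$. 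Either way all power sums up to degree $n$ agree, and the argument above goes through.
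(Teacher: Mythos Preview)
Your argument is correct and follows a genuinely different route from the paper. Both start from $\hom(C_\ell,G)=\tr(A^\ell)=\sum_i\lambda_i^\ell$. For the converse, the paper invokes its Lemma~\ref{lem: limit lemma}, an analytic statement proved by a limiting argument (divide by $\hat x^\ell$ and let $\ell\to\infty$): if $\sum_{x\in X}c_x x^\ell=\sum_{y\in Y}d_y y^\ell$ for all $\ell$, then $(X,c)=(Y,d)$. Applied with the $c$'s and $d$'s equal to the eigenvalue multiplicities, this gives equality of the spectra directly. You instead use the purely algebraic Newton identities to recover the elementary symmetric functions and hence the characteristic polynomial. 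Your route is the standard textbook argument and is entirely self-contained; the paper's route looks heavier for this proposition in isolation, but the point is that the same limit lemma is reused verbatim in the proof of Theorem~\ref{theo:2} on paths, where the coefficients are the squared norms $\|u_i\|^2$ rather than integer multiplicities and Newton's identities no longer apply. You are also more scrupulous than the paper about whether $C_1$ and $C_2$ belong to $\cC$: the paper tacitly uses $\hom(C_\ell,G)=\tr(A^\ell)$ for every $\ell\ge 0$, whereas your recovery of $p_2$ from the rationality of $\det(sI-A)/\det(sI-B)$ (more precisely, from $\log\bigl(\det(sI-A)/\det(sI-B)\bigr)=-\sum_{k\ge 1}(p_k(G)-p_k(H))/(k s^k)$, which collapses to $-(p_2(G)-p_2(H))/(2s^2)$ under your hypotheses) is a clean way to close that gap if one insists on simple cycles only.
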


For the proof, we review a few simple facts from linear algebra. The
trace $\tr(A)$ of a square matrix $A\in\mathbb R^{n\times n}$ is the sum of the
diagonal entries. If the eigenvalues of
$A$ are $\lambda_1,\ldots,\lambda_n$, then
$\tr(A)=\sum_{i=1}^n\lambda_i$. Moreover, for each $\ell\ge0$ the eigenvalues
of the matrix~$A^\ell$ are $\lambda_1^\ell,\ldots,\lambda_n^\ell$, and thus
$\tr(A^\ell)=\sum_{i=1}^n\lambda_i^\ell$.
The following technical lemma encapsulates the fact that the information $\tr(A^\ell)$ for all~$\ell\in\N$ suffices to reconstruct the spectrum of~$A$ with multiplicities.
We use the same lemma to prove Theorem~\ref{theo:2}, but for Proposition~\ref{prop:spectrum} a less general version would suffice.
\begin{lemma}\label{lem: limit lemma}
  Let $X,Y\subseteq\R$ be two finite sets and let $c\in\R^X_{\ne 0}$ and $d\in\R^{Y}_{\ne 0}$ be two vectors.
  If the equation
  \begin{equation}\label{eq:sumidentity}
    \sum_{x\in X} c_x x^\ell
    =
    \sum_{y\in Y} d_{y} y^\ell
  \end{equation}
  holds for all~$\ell\in\N$,
  then $X=Y$ and $c=d$.
\end{lemma}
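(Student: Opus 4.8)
The plan is to prove the statement by a polynomial-identity argument combined with a limiting (``dominant term'') analysis, which is the classical trick behind reconstructing a spectrum from moment sequences. First I would set $Z = X \cup Y$ and rewrite \eqref{eq:sumidentity} as $\sum_{z \in Z} e_z z^\ell = 0$ for all $\ell \in \N$, where $e_z = c_z$ if $z \in X \setminus Y$, $e_z = -d_z$ if $z \in Y \setminus X$, and $e_z = c_z - d_z$ if $z \in X \cap Y$. The goal becomes: if $\sum_{z \in Z} e_z z^\ell = 0$ for all $\ell \ge 0$, then $e_z = 0$ for every $z$; applying this to the three cases then forces $X = Y$ and $c = d$ (using that $c, d$ have no zero entries).

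For the core claim, I would consider two approaches and present the cleaner one. The first is the Vandermonde approach: taking $\ell = 0, 1, \dots, |Z|-1$ gives a linear system whose coefficient matrix is the (square) Vandermonde matrix on the distinct reals in $Z$, which is invertible, forcing $e \equiv 0$. This is short and elementary, so it is likely the route I would write up. The alternative, which matches the ``limit lemma'' phrasing and is more in the spirit of the surrounding text, is a dominant-eigenvalue argument: if $e \not\equiv 0$, let $z^\star \in Z$ be an element of maximum absolute value among those with $e_{z^\star} \ne 0$; dividing the identity by $(z^\star)^\ell$ and letting $\ell \to \infty$ along suitable residues isolates the contribution of $\pm z^\star$ and yields a contradiction with $e_{z^\star} \ne 0$. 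One must be slightly careful here because both $z^\star$ and $-z^\star$ could lie in $Z$ with the same absolute value; splitting $\ell$ into even and odd values handles this, since $z^\ell/(z^\star)^\ell \to 0$ for $|z| < |z^\star|$, tends to $1$ for $z = z^\star$, and alternates between $\pm 1$ for $z = -z^\star$.

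The main obstacle is purely bookkeeping: making the limiting argument rigorous when several elements of $Z$ share the maximal modulus (only $\pm z^\star$ can, since elements are real and distinct), and tracking the sign patterns for even versus odd $\ell$. If I instead use the Vandermonde route, there is essentially no obstacle at all — the only thing to note is that $Z$ is finite so the Vandermonde system is genuinely finite-dimensional, and that distinctness of the real numbers in $Z$ is exactly what guarantees invertibility. Given that Lemma~\ref{lem: limit lemma} is later reused for Theorem~\ref{theo:2} where the relevant scalars are eigenvalues of a real symmetric matrix (hence again real), the finite Vandermonde argument suffices in all intended applications, and I would present that as the proof, perhaps remarking that it also follows from the standard fact that $\{z^\ell : \ell \in \N\}$ separates distinct reals.
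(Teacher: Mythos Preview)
Your proposal is correct. Both approaches you sketch work, and the second one---the dominant-term limiting argument with the even/odd split to handle the pair $\pm z^\star$---is exactly what the paper does (formulated inductively on $|X|+|Y|$ rather than via the auxiliary vector $e$, but the mechanism is identical).

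Your preferred Vandermonde route, on the other hand, is a genuinely different and shorter argument than the paper's. By passing to $e_z$ on $Z=X\cup Y$ you reduce to a single homogeneous system, and the invertibility of the Vandermonde matrix on finitely many distinct reals finishes it in one stroke; the nonvanishing of $c$ and $d$ is then used only at the very end to rule out $X\setminus Y$ and $Y\setminus X$. This buys brevity and avoids the case analysis on $\hat x$ versus $-\hat x$ and on convergence of subsequences that the paper carries out. What the paper's approach buys is that it is self-contained (no appeal to the Vandermonde determinant) and it makes the ``limit'' in the lemma's name visible: one literally divides by the dominant term and reads off the coefficients from the limits of the even and odd subsequences. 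Either proof is fine for the applications in the paper; the spectral arguments for Proposition~\ref{prop:spectrum} and Theorem~\ref{theo:2} only need the statement, not the method.
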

\begin{proof}
  We prove the claim by induction on~$k:=\abs{X}+\abs{Y}$.
  For $k=0$, the claim is trivially true since both sums in~\eqref{eq:sumidentity} are equal to zero by convention.

  Let $\hat x=\arg\max\setc{\abs{x}}{x\in X\cup Y}$ and let $\hat x\in X$ without loss of generality.
  If~$\hat x=0$, then $X=\set{0}$ and we claim that~$Y=\set{0}$ holds.
  Clearly \eqref{eq:sumidentity} for~$\ell=0$ yields
  $0\ne c_0 = \sum_{y\in Y} d_y$.
  In particular, $Y\ne\emptyset$ holds. Since~$\hat x=0$ is the maximum of~$X\cup Y$ in absolute value, we have~$Y=\set{0}$ and thus also $c=d$.

  Now suppose that $\hat x\ne 0$ holds.
  We consider the sequences~$(a_\ell)_{\ell\in\N}$
  and $(b_\ell)_{\ell\in\N}$ with
  \begin{align}
    a_\ell
    =
    \frac{1}{\hat x^{\ell}}\cdot
    \sum_{x\in X} c_x x^\ell
    \quad\text{and}\quad
    b_\ell
    =
    \frac{1}{\hat x^{\ell}}\cdot
    \sum_{y\in Y} d_y y^\ell
    \,.
  \end{align}
  Note that $a_\ell=b_\ell$ holds for all $\ell\in\N$ by assumption.
  Observe the following simple facts:
  \begin{enumerate}
    \item[1)]
      If $-\hat x\not\in X$, then
      $\lim_{\ell\to\infty} a_\ell=c_{\hat x}$.
    \item[2)]
      If $-\hat x\in X$, then
      $\lim_{\ell\to\infty} a_{2\ell}=c_{\hat x}+c_{-\hat x}$ and
      $\lim_{\ell\to\infty} a_{2\ell+1}=c_{\hat x}-c_{-\hat x}$.
  \end{enumerate}
  As well as the following exhaustive case distinction for~$Y$:
  \begin{enumerate}
    \item[a)]
      If $\hat x,-\hat x\not\in Y$, then
      $\lim_{\ell\to\infty} b_\ell = 0$.
    \item[b)]
      If $\hat x\in Y$ and $-\hat x\not\in Y$, then
      $\lim_{\ell\to\infty} b_{\ell} = d_{\hat x}$.
    \item[c)]
      If $\hat x\not\in Y$ and $-\hat x\in Y$, then
      $\lim_{\ell\to\infty} b_{2\ell} = d_{-\hat x}$ and
      $\lim_{\ell\to\infty} b_{2\ell+1} = -d_{-\hat x}$.
    \item[d)]
      If $\hat x,-\hat x\in Y$, then
      $\lim_{\ell\to\infty} b_{2\ell} = d_{\hat x}+d_{-\hat x}$ and
      $\lim_{\ell\to\infty} b_{2\ell+1} = d_{\hat x}-d_{-\hat x}$.
  \end{enumerate}
  If~$-\hat x\not\in X$ holds, we see from 1) that~$a_\ell$ converges to the non-zero value~$c_{\hat x}$.
  Since the two sequences are equal, the sequence~$b_\ell$ also converges to a non-zero value.
  The only case for~$Y$ where this happens is b), and we get $\hat x\in Y$, $-\hat x\not\in Y$, and $c_{\hat x}=d_{\hat x}$.
  On the other hand, if~$-\hat x\in X$, we see from 2) that $a_\ell$ does not converge, but the even and odd subsequences do.
  The only cases for~$Y$ where this happens for~$b_\ell$ too are c) and d).
  We cannot be in case c), since the two accumulation points of~$b_\ell$ just differ in their sign, while the two accumulation points of~$a_\ell$ do not have the same absolute value.
  Thus we must be in case d) and obtain $x,\hat x\in Y$ as well as
  \[
    c_{\hat x} + c_{-\hat x}
    =
    d_{\hat x} + d_{-\hat x}
    \quad\text{and}\quad
    c_{\hat x} - c_{-\hat x}
    =
    d_{\hat x} - d_{-\hat x}\,.
  \]
  This linear system has full rank and implies $c_{\hat x}=d_{\hat x}$ and $c_{-\hat x}=d_{-\hat x}$.

  Either way, we can remove $\set{\hat x}$ or $\set{\hat x,-\hat x}$ from both~$X$ and~$Y$ and apply the induction hypothesis on the resulting instance $X',Y',c',d'$.
  Then~$(X,c)=(Y,d)$ follows as claimed.
\end{proof}

\begin{proof}[Proof of Proposition~\ref{prop:spectrum}]
  For all $\ell\ge 0$, the number of homomorphisms from the cycle~$C_\ell$ of length~$\ell$ to a graph $G$ with adjacency matrix~$A$ is equal to the number of closed length-$\ell$ walks in~$G$, which in turn is equal to the trace of $A^\ell$.
  Thus for graphs $G,H$ with adjacency matrices $A,B$, we have $\vechom_{\cC}(G)=\vechom_{\cC}(H)$ if and only if $\tr(A^\ell)=\tr(B^\ell)$ holds for all~$\ell\ge 0$.

  If~$A$ and~$B$ have the same spectrum~$\lambda_1,\dots,\lambda_n$, then
  $\tr(A^\ell)=\lambda_1^\ell+\dots+\lambda_n^\ell=\tr(B^\ell)$ holds for all~$\ell\in\N$.
  For the reverse direction, suppose~$\tr(A^\ell)=\tr(B^\ell)$ for all~$\ell\in\N$.
  Let~$X\subseteq\R$ be the set of eigenvalues of~$A$ and for each~$\lambda\in X$, let~$c_\lambda\in\set{1,\dots,n}$ be the multiplicity of the eigenvalue~$\lambda$.
  Let~$Y\subseteq\R$ and $d_\lambda$ for~$\lambda\in Y$ be the corresponding eigenvalues and multiplicities for~$B$.
  Then for all~$\ell\in\N$, we have
  \[
    \sum_{\lambda\in X} c_\lambda \lambda^\ell
    =
    \tr(A^\ell) = \tr(B^\ell)
    =
    \sum_{\lambda\in Y} d_\lambda \lambda^\ell
    \,.
  \]
  By Lemma~\ref{lem: limit lemma}, this implies~$(X,c)=(Y,d)$, that is, the spectra of~$A$ and~$B$ are identical.
\end{proof}

In the following example, we show that the vectors $\vechom_{\cC}$ for
the class $\cC$ of cycles and
$\vechom_{\cT}$ for the class $\cT$ of trees are incomparable in their
expressiveness. 

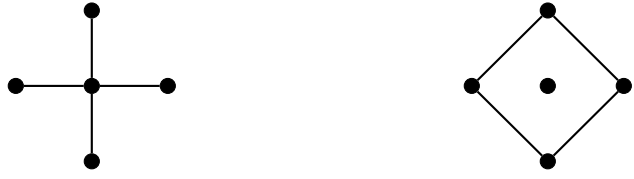
\begin{figure}
  \centering
  \begin{tikzpicture}
    \begin{scope}
      \node[bullet] (a)   at (0   ,0  ) {};
    \node[bullet] (b1)  at (1   ,0) {};
    \node[bullet] (b2)  at (0   ,1) {};
    \node[bullet] (b3)  at (-1   ,0) {};
    \node[bullet] (b4)  at (0   ,-1) {};
    
    \draw[thick] (a) edge (b1) edge (b2) edge (b3) edge (b4);
  \end{scope}
  \begin{scope}[xshift=6cm]
    \node[bullet] (a)   at (0   ,0  ) {};
    \node[bullet] (b1)  at (1   ,0) {};
    \node[bullet] (b2)  at (0   ,1) {};
    \node[bullet] (b3)  at (-1   ,0) {};
    \node[bullet] (b4)  at (0   ,-1) {};
    
    \draw[thick] (b1) edge (b2) (b2) edge (b3) (b3) edge (b4) (b4)
    edge (b1);
    \end{scope}
  \end{tikzpicture}
  \caption{Two co-spectral graphs}
  \label{fig:cospectral}
\end{figure}

\begin{example}
  The graphs $G$ and $H$ shown in Figure~\ref{fig:cospectral} are
  co-spectral and thus $\vechom_{\cC}(G)=
  \vechom_{\cC}(H)$, but it is easy to see that $\vechom_{\cP}(G)\neq
  \vechom_{\cP}(H)$ for the class $\cP$ of all paths.

  Let $G'$ be a cycle of length $6$ and $H'$ the disjoint union of two
  triangles. Then obviously, $\vechom_{\cC}(G')\neq
  \vechom_{\cC}(H')$. However, color refinement does not distinguish
  $G'$ and $H'$ and thus $\vechom_{\cT}(G')=
  \vechom_{\cT}(H')$.
\end{example}

Let us now turn to the proof of Theorem \ref{theo:2}.
\begin{proof}[Proof of Theorem~\ref{theo:2}]
  Let~$A$ and~$B$ be the adjacency matrices of~$G$ and~$H$, respectively.
  Since~$A$ is a symmetric and real matrix, its eigenvalues are real and the corresponding eigenspaces are orthogonal and span~$\R^n$.
  Let~$\boldsymbol 1$ be the~$n$-dimensional all-$1$ vector, and let~$X=\set{\lambda_1,\dots,\lambda_k}$ be the set of all eigenvalues of~$A$ whose corresponding eigenspaces are not orthogonal to~$\boldsymbol 1$.
  We call these eigenvalues the \emph{useful} eigenvalues of~$A$ and without loss of generality assume $\lambda_1>\dots>\lambda_k$.
  The~$n$-dimensional all-$1$ vector~$\boldsymbol 1$ can be expressed as a direct sum of eigenvectors of~$A$ corresponding to useful eigenvalues.
  In particular, there is a unique decomposition
  $\boldsymbol 1=\sum_{i=1}^k u_i$ such that
  each~$u_i$ is a non-zero eigenvector in the eigenspace of~$\lambda_i$.
  Moreover, the vectors~$u_1,\dots,u_k$ are orthogonal.
  For the matrix~$B$, we analogously define its set of useful eigenvalues~$Y=\set{\mu_1,\dots,\mu_{k'}}$ and the direct sum~$\boldsymbol 1=\sum_{i=1}^{k'} v_i$.

  We prove the equivalence of the following three assertions (of which the first and third appear in the statement of Theorem~\ref{theo:2}).
  \begin{enumerate}
  \item $\vechom_{\cP}(G)=\vechom_{\cP}(H)$.
  \item $A$ and $B$ have the same set of useful eigenvalues~$\lambda_1,\dots,\lambda_k$ and $\|u_i\|=\|v_i\|$ holds for all~$i\in\set{1,\dots,k}$.
  Here,~$\|.\|$ denotes the Euclidean norm with~$\|x\|^2=\sum_j x_j^2$.
  \item The system $\Fiso(G,H)$ of linear equations has a
    real solution.
  \end{enumerate}
  Note that in 2, we do not require that the useful eigenvalues occur with the same multiplicities in~$A$ and~$B$.
  We show the implications (1 $\Rightarrow$ 2), (2 $\Rightarrow$ 3), and (3 $\Rightarrow$ 1). 

\medskip
 (1 $\Rightarrow$ 2): Suppose that $\hom(P_\ell,G)=\hom(P_\ell,H)$ holds for all paths~$P_\ell$. Equivalently, this can be stated 
 in terms of the adjacency matrices $A$ and $B$: for all $\ell\in\N$, we have $\trans{\boldsymbol 1} A^{\ell}  \boldsymbol 1 = \trans{\boldsymbol 1} B^{\ell}  \boldsymbol 1$.
 We claim that~$A$ and~$B$ have the same useful eigenvalues, and that the projections of~$\boldsymbol 1$ onto the corresponding eigenspaces have the same lengths.

  Note that~$A^\ell\boldsymbol 1=\sum_{i=1}^k \lambda^\ell_i u_i$ holds.
  Thus we have
  \begin{equation}
    \trans{\boldsymbol 1} A^{\ell} \boldsymbol 1 =
    \paren*{\sum_{i=1}^k \trans{u_i}}
    \paren*{\sum_{i=1}^k \lambda^\ell_i u_i}
    =
    \sum_{i=1}^k \|u_i\|^2 \cdot \lambda^\ell_i
    \,.
  \end{equation}
  The term
  $\trans{\boldsymbol 1} B^{\ell} \boldsymbol 1$ can be expanded analogously, which together yields
  \begin{equation}
    \sum_{i=1}^k \|u_i\|^2 \cdot \lambda^\ell_i
    =
    \sum_{i=1}^{k'} \|v_i\|^2 \cdot \mu^\ell_i
    \quad\text{for all $\ell\in\N$.}
  \end{equation}
  Since all coefficients $c_{\lambda_i}=\|u_i\|^2$ and $d_{\mu_i}=\|v_i\|^2$ are non-zero, we are in the situation of Lemma~\ref{lem: limit lemma}.
  We obtain $k=k'$ and, for all $i\in\set{1,\dots,k}$, we obtain $\lambda_i=\mu_i$ and $\|u_i\|=\|v_i\|$.
  This is exactly the claim that we want to show.

\medskip 
(2 $\Rightarrow$ 3): We claim that the $(n\times n$)-matrix~$X$ defined via
\begin{equation}
   X = \displaystyle\sum_{i=1}^k\frac{1}{\|u_i\|^2} \cdot u_i{v^T_i}
 \end{equation}
satisfies the $\Fiso$ equations $AX=XB$ and $X \boldsymbol 1 = \boldsymbol 1=X^T\boldsymbol 1$.
Indeed, we have
\begin{align}
  AX
  =
  \sum_{i=1}^k\frac{1}{\|u_i\|^2} \cdot 
  A u_i{v^T_i}
  =
  \sum_{i=1}^k\frac{\lambda_i}{\|u_i\|^2} \cdot 
  u_i{v^T_i}
  =
  \sum_{i=1}^k\frac{1}{\|u_i\|^2} \cdot 
  u_i{v^T_i}B^T
  = XB^T
  = XB\,,
\end{align}
This follows, since~$Au_i=\lambda_iu_i$, $Bv_i=\lambda_iv_i$, and $B$ is symmetric.
Moreover, we have
\begin{align}
  X\boldsymbol 1
  =
  \sum_{i=1}^k\frac{1}{\|u_i\|^2} \cdot 
  A u_i{v^T_i}\boldsymbol 1
  =
  \sum_{i=1}^k\frac{1}{\|u_i\|^2} \cdot 
  u_i{v^T_i}\sum_{j=1}^k v_j
  =
  \sum_{i=1}^k\frac{1}{\|u_i\|^2} \cdot 
  u_i \cdot {v^T_i}v_i
  = \boldsymbol 1\,.
\end{align}
This holds by definition of~$u_i$ and~$v_i$ and from $v_i^Tv_i=\|v_i\|^2=\|u_i\|^2$.
The claim $X^T\boldsymbol 1 = \boldsymbol 1$ follows analogously.

\medskip
 (3 $\Rightarrow$ 1): 
Suppose there is a matrix~$X$ with $X^T{\boldsymbol 1} = X{\boldsymbol 1}= {\boldsymbol 1}$ and $AX=XB$.
We obtain $A^{\ell} X = X B^{\ell}$ by induction for all~$\ell\in\N_{>0}$.
For $\ell=0$, this also holds since $A^0=I_n$ by convention.
As a result, we have
$\trans{{\boldsymbol 1}} A^\ell {\boldsymbol 1} =\trans{{\boldsymbol 1}} A^\ell X {\boldsymbol 1} = \trans{{\boldsymbol 1}} X B^\ell {\boldsymbol 1} = \trans{{\boldsymbol 1}} B^\ell {\boldsymbol 1}$ for all~$\ell\in\N$.
Since these scalars count the length-$\ell$ walks in $G$ and $H$, 
respectively, we obtain $\hom(P_\ell,G)=\hom(P_\ell,H)$ for
all paths~$P_\ell$ as claimed.
\end{proof}

\section{Homomorphisms from bounded tree width and path width}

We briefly outline the main ideas of the proofs of Theorems~\ref{theo:3} and \ref{theo:4}; the technical details are deferred to the appendix.
In Theorem~\ref{theo:3}, the equivalence between \ref{it:kWL} and~\ref{it:Liso nonneg} is essentially
known, so we focus on the equivalence between~\ref{it:hom twk} and~\ref{it:kWL}.
The proof is similar to the proof of Theorem~\ref{theo:1} in Section \ref{sec:tree}.

Let us fix $k\ge 2$.
The idea of the $k$-WL algorithm is to iteratively color $k$-tuples of
vertices. Initially, each $k$-tuple $(v_1,\ldots,v_k)$ is colored by
its \emph{atomic type}, that is, the isomorphism type of the labeled
graph $G[\{v_1,\ldots,v_k\}]$. Then in the refinement step, to define
the new color of a $k$-tuple $\bar v$ we look at the current color of
all $k$-tuples that can be reached from $k$ by adding one vertex and
then removing one vertex. 

Similar to the tree unfolding of a graph $G$ at a vertex $v$, we define the \emph{Weisfeiler-Leman tree unfolding} at a $k$-tuple $\bar v$ of vertices.
These objects have some resemblance to the pebbling comonad, which was defined by Abramsky, Dawar, and Wang~\cite{DBLP:conf/lics/AbramskyDW17} in the language of category theory.
The WL-tree unfolding describes the color of $\bar v$ computed by $k$-WL;
formally it may be a viewed as a pair $(T,F)$ consisting of a graph $F$
together with a ``rooted'' tree decomposition (potentially infinite, but again
we cut it off at some finite depth).
Similar to
the numbers $\crr(T,G)$ and the vector $\veccrr(G)$, we now have numbers
$\wl((T,F),G)$ and a vector $\wl(G)$ such that
$\wl(G)=\wl(H)$ holds if and only if $k$-WL does not distinguish~$G$
and $H$. Then we define a linear transformation $\Phi$ with
$\vechom_{\cT_k}(G)=\Phi \wl(G)$.
The existence of this linear transformation directly yields the
implication \ref{it:kWL}$\implies$\ref{it:hom twk} of Theorem~\ref{theo:3}. To prove the
converse, we show that the transformation $\Phi$ is
invertible by giving a suitable $LU$-decomposition of full rank.
This completes our sketch of the proof of Theorem~\ref{theo:3}.

\medskip
The proof of Theorem~\ref{theo:4} requires a different argument,
because now we have to use a solution~$(X_\pi)$ of the system
$\Liso[k+1](G,H)$ to prove that the path width $k$ homomorphism vectors
$\vechom_{\cP_k}(G)$ and $\vechom_{\cP_k}(H)$ are equal. The key idea
is to express entries of a suitable variant of $\vechom_{\cP_k}(G)$ as a linear
combinations of entries of the corresponding vector for $H$ using the
values $X_\pi$ as coefficients.

\section{Conclusions}

We have studied the homomorphism vectors $\vechom_{\cF}(G)$ for
various graph classes $\cF$, focusing on classes $\cF$ where it is
tractable to compute the entries $\hom(F,G)$ of the vector. Our main
interest was in the ``expressiveness'' of these vectors, that is, in
the question what $\vechom_{\cF}(G)$ tells us about the graph
$G$. For the classes $\cC$ of cycles, $\cT$ of trees, $\cT_k$ of graphs of tree width
at most $k$, and $\cP$ of paths, we have obtained surprisingly clean
answers to this question, relating the homomorphism vectors to various
other well studied formalisms that on the surface have nothing to do
with homomorphism counts.

Some interesting questions remain open. The most obvious is whether
the converse of Theorem~\ref{theo:4} holds, that is, whether for two
graphs $G$, $H$ with $\vechom_{\cP_k}(G)=\vechom_{\cP_k}(H)$,
the system $\Liso[k+1](G,H)$ has a real solution (and hence the
Nullstellensatz propositional proof system has no degree-$(k+1)$ refutation of
$G$ and $H$ being isomorphic).

Another related open problem in spectral graph theory is to characterize graphs 
which are identified by their spectrum,
up to isomorphism. In our framework, Proposition \ref{prop:spectrum}
ensures that we can equivalently 
ask for the following characterization: for which graphs $G$ does 
the vector $\vechom_{\cC}(G)$ determine the entire 
homomorphism vector $\vechom(G)$?

Despite the computational intractability, it is also interesting to
study the vectors $\vechom_{\cF}(G)$ for classes $\cF$ of unbounded
tree width. Are there natural classes $\cF$ (except of course the class
of all graphs) for which the vectors $\vechom_{\cF}(G)$ characterize
$G$ up to isomorphism? For example, what about classes of bounded
degree or the class of planar graphs? And what is the complexity of deciding whether
$\vechom_{\cF}(G)=\vechom_{\cF}(H)$ holds when $G$ and
$H$ are given as input? Our results imply that this problem is in polynomial time for the
classes $\cT$, $\cT_k$, and~$\cP$.
For the class of all graphs, it is in quasi-polynomial time by Babai's quasi-polynomial isomorphism test~\cite{bab16}.
Yet it seems plausible that there are classes~$\cF$ (even natural classes decidable in polynomial time) for which the problem is co-NP-hard. 

Maybe the most interesting direction for further research is to study
the graph similarity measures induced by homomorphism
vectors. A
simple way of defining an inner product on the homomorphism vectors is by letting
\[
 \Big\langle  \vechom_{\cF}(G) , \vechom_{\cF}(H)\Big\rangle:=
    \sum_{\substack{k\ge 1\\\cF_k\neq\emptyset}}\frac{1}{k^k|\cF_k|}\sum_{F\in\cF_k}\hom(F,G)\hom(F,H),
  \]
where $\cF_k$ denotes the class of all graph $F\in\cF$ with~$k$ vertices. The mapping $(G,H)\mapsto \langle  \vechom_{\cF}(G) ,
\vechom_{\cF}(H)\rangle$ is what is known as a \emph{graph
  kernel} in machine learning. It induces a (pseudo)metric $d_{\cT}$ on the class
of graphs. It is an interesting question how it relates to other graph
similarity measures, for example, the metric induced by the
Weisfeiler-Leman graph kernel. Our Theorem~\ref{theo:1} implies that
the metric $d_{\cT}$ for the class $\cT$ of trees and the metric
induced by the Weisfeiler-Leman graph kernel have the same graphs of distance zero. 
 
\bibliography{L-Meets-WL}

\newpage
\appendix

\section{Proofs Missing in Section~\ref{sec:tree}}
\label{app:tree}

\subparagraph*{Proof of Lemma~\ref{lem: colref cr}.}
  We devise a bijection~$\pi$ between
  possible colors~$C_i(v)$ and rooted trees~$T$ where each leaf is at
  depth exactly~$i$.  For $i=0$, the only allowed color is~$1$ and, up
  to isomorphism, the only tree is~$T_0$, the tree that only contains
  the root vertex, and so we set $\pi(1)=T_0$.  For $i>0$, let $C$ be
  any color that could appear as the $i$-th round~$C^G_i(v)$ for any
  graph~$G$ and any~$v\in V(G)$.  Then $C$ is a multiset
  $\mset{C_1, \dots, C_\ell}$ consisting of~$\ell$ colors possible to
  create in round~$i-1$.  Let $T_1,\dots,T_\ell$ be rooted trees of
  depth~$i-1$ such that $T_j=\pi(C_j)$ holds for all~$j\in\set{1,\dots,\ell}$.  We define $\pi(C)$ as the
  (unlabeled) rooted tree~$T$ with a new root~$r$ whose
  $\ell$~children are the roots of~$T_1,\dots,T_\ell$.  It is easy to
  see that $\pi$ is a bijection.  Now note that $\pi(C_i(v))$ is
  exactly the isomorphism type~$T$ of the tree~$T(G,v)_{\leq i}$.
  Thus the number $\crr(T, G)$ is equal to the number of
  vertices~$u\in V(G)$ that satisfy~$C_i(u)=C_i(v)$, which proves the
  claim.

\subparagraph*{Proof of Lemma~\ref{lem: dhom dsurj sub}.}
  Let $T$ and $T'$ be rooted trees.
  We want to prove that
  $\sum_{T''}\dsurj(T,T'') \cdot \dsub(T'',T')$
  is well-defined and equal to~$\dhom(T,T')$, where the sum is over all unlabeled rooted trees~$T''$.
  To see that the sum has only finitely many non-zero terms, note that $\dsurj(T,T'')=0$ holds if~$T''$ has more edges or vertices than~$T$, and so the infinite matrix~$\dsurj$ is lower triangular.
  Thus $\dsurj(T,T'')$ is non-zero for finitely many~$T''$.
  
  Note next that $\dsurj(T'',T'')=\aut(T'')\ne 0$ holds for all~$T''$,
  so the lower-triangular matrix $\dsurj$ has nonzero diagonal
  entries, which implies that it is invertible. (This can be seen
  inductively also for infinite matrices, by using forward
  substitution.) 
  Similarly, $\dsub(T'',T')=0$ holds if~$T'$ has fewer edges or vertices than~$T''$, and so the infinite matrix~$\dsub$ is upper triangular.

  Moreover, the diagonal entries satisfy $\dsub(T'',T'')=1$, and so
  the matrix $\dsub$ is invertible as well.

  To prove
  \begin{equation}\label{eq: dsurj dsub dhom expanded}
    \sum_{T''}\dsurj(T,T'') \cdot \dsub(T'',T')=\dhom(T,T')\,,
  \end{equation}
  we devise a bijection~$\pi$ between depth-preserving and depth-surjective homomorphisms~$h$ from~$T$ to~$T'$, and pairs $(h',S)$ where
  \begin{enumerate}
    \item[(i)] $S\subseteq V(T')$ contains the root of~$T'$, at least one deepest leaf of~$T'$, and is connected in~$T'$, and
    \item[(ii)] $h'$ is a depth-preserving and (totally) surjective
      homomorphism from $T$ to~$T''$, where~$T''$ is the isomorphism type of~$T'[S]$.
  \end{enumerate}
  We call~$T''$ the type of the pair~$(h',S)$.  Such a bijection~$\pi$
  implies~\eqref{eq: dsurj dsub dhom expanded}, since pairs~$(h',S)$
  of type~$T''$ can be obtained by choosing one of~$\dsurj(T,T'')$
  possible depth-preserving and surjective~$h'$ from~$T$ to~$T''$ and
  one of~$\dsub(T'',T')$ possible sets~$S$ with the property that
  $T'[S]$ is isomorphic to~$T''$.  Since these choices are
  independent, the number of pairs is equal to the left hand side
  of~\eqref{eq: dsurj dsub dhom expanded}.

  We define $\pi$ as follows. For every $S\subseteq V(T')$ as in (i),
  we fix some isomorphism $\varphi_S$ from $T'[S]$ to $T''$. For
  $h\in\dhom(T,T')$, we let $\pi(h)$ be the pair~$(h',S)$ where $S=h(V(T))$ and $h'=\varphi_S\circ h$.
  Clearly, $h'$ is depth-preserving and surjective from~$T$ to~$T''$.
  Since $h$ is depth-surjective, $S$ contains the root and a deepest leaf of~$T'$, and since~$h$ is a homomorphism and~$T$ is connected, its image~$S$ must also be connected in~$T'$.
  So $\pi$ is a mapping with the correct range, meaning that $(h',S)$ satisfies the two items above.
  To prove that~$\pi$ is injective, note~$\pi(h)\ne\pi(f)$ holds when $h$ and~$f$ have a different image.
  Otherwise they have the same image~$S$ and thus also the same type~$T''$.
   Then $h'=\varphi_S\circ h$ and $f'=\varphi_S\circ f$.
  Since~$\varphi$ is bijective, this implies that~$\pi(h)=\pi(f)$ holds if and only if~$h=f$.
  Finally, to see that $\pi$ is surjective, let $(h',S)$ be any pair from the claimed range of~$\pi$.
  Then $h:=\varphi_S^{-1}\circ h$ satisfies~$\pi(h)=(h',S)$.
  So $\pi$ is a bijection and \eqref{eq: dsurj dsub dhom expanded} holds, which implies the claim.

\section{The Weisfeiler--Leman Algorithm}
\label{app:wl}

Recall that a \emph{partial isomorphism} from a graph $G$ to a graph $H$ is
a set $\pi\subseteq
V(G)\times V(H)$ such that all $(v,w),(v',w')\in\pi$ satisfy the equivalences
$v=v'\iff w=w'$ and $vv'\in E(G)\iff ww'\in E(H)$. We may view $\pi$ as a
bijective mapping from a subset $X\subseteq V$ to a subset of
$Y\subseteq W$ that is an isomorphism from the induced subgraph $G[X]$
to the induced subgraph $H[Y]$.

The \emph{atomic type} $\atp(G,\bar v)$ of a $k$-tuple
$\bar v=(v_1,\ldots,v_k)$ of vertices of a graph $G$ is a description
of the labeled subgraph induced by $G$ on this tuple; formally we may
describe it by a $(k\times k)$-matrix $A$ whose entries satisfy
\begin{equation}\label{eq:atomic type def}
  A_{ij}
  =
  \begin{cases}
    2 & \text{if $i=j$,}\\
    1 & \text{if $i\ne j$ and $v_iv_j\in E$,}\\
    0 & \text{otherwise.}
  \end{cases}
\end{equation}
A crucial property of atomic types is that,
for any two tuples
$\bar v=(v_1,\ldots,v_k)\in V(G)^k$ and $\bar w=(w_1,\ldots,w_k)\in V(H)^k$, we
have $\atp(G,\bar v)=\atp(H,\bar w)$ if and only if
$\pi=\{(v_1,w_1),\ldots,(v_k,w_k)\}$ is a partial isomorphism from $G$
to $H$.
 
\medskip
Let $k\ge 1$.
The $k$-dimensional Weisfeiler-Leman algorithm ($k$-WL) computes a
sequence of \emph{colorings} $C_i^k$ of $V^k$ for a given graph
$G=(V,E)$.
A coloring~$C_i^k$ is \emph{stable} if all $\bar v,\bar w\in V^k$ satisfy
$C_i^k(\bar v)=C_i^k(\bar w)\iff C_{i+1}^k(\bar v)=C_{i+1}^k(\bar w)$.
The output $C_\infty^k$ of~$k$-WL is the coloring with $C_\infty^k=C_i^k$ for the smallest~$i$ such that~$C_i^k$ is stable.

The initial coloring~$C_0^k$ assigns to
each tuple its atomic type: $C_0^k(\bar v):=\atp(\bar v)$. 
In the $(i+1)$st refinement round, the coloring $C_{i+1}^k$
at a tuple $\bar v=(v_1,\ldots,v_k)$
is defined
by
$
C_{i+1}^k(\bar v):=\big(C_i^k(\bar v),M_{i}(\bar v)\big),
$
where $M_i(\bar v)$ is the multiset
{\small
\[
\Big\{\!\!\!\Big\{\big(\atp(v_1,\ldots,v_k,w),C_i^k(v_1,\ldots,v_{k-1},w),C_i^k(v_1,\ldots,v_{k-2},w,v_k),\ldots,C_i^k(w,v_2,\ldots,v_k)\big)\:\Big|\;w\in V
\Big\}\!\!\!\Big\}.
\]}%
If $k\ge 2$ holds, then we can omit the entry
$\atp(v_1,\ldots,v_k,w)$ from the tuples in~$M(\bar v)$, because all
the information it contains is also contained in the entries $C_i^k(\ldots)$ of these tuples.
It is easy to see that the coloring $C_i^1$ computed by $1$-WL coincides with the
colorings $C_i$ computed by color refinement, in the sense that $C_i(v)=C_i(w)\iff
C_i^1(v)=C_i^1(w)$ holds for
all vertices $v,w\in V$.
We say that \emph{$k$-WL distinguishes two graphs~$G$ and~$H$} if
\begin{equation}
  \msetc{C^k_\infty(G;\bar v)}{\bar v\in V(G)^k}
  \ne
  \msetc{C^k_\infty(H;\bar w)}{\bar w\in V(H)^k}
  \,.
\end{equation}

To analyze the strength of $k$-WL as an (incomplete) graph isomorphism test, it is often helpful to use its
characterization as an
equivalence test for the logics $\textsf C^{k+1}$, the $k+1$-variable
fragment of first-order logic with counting quantifiers (see
\cite{gro17}, it is safe to treat the logic and the following lemma as
a black-box here).

\begin{lemma}[\cite{immlan90}]\label{lem:ck}
  For all $k\ge 1$ and all graphs $G$ and $H$ the following are equivalent.
  \begin{enumerate}
  \item $k$-WL distinguishes $G$ and $H$.
  \item The logic $\mathsf C^{k+1}$ distinguishes $G$ and $H$.
  \end{enumerate}
\end{lemma}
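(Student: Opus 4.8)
This is the theorem of Immerman and Lander; here is one route I would take. The plan is to derive both implications from a tight correspondence between the stable colouring $C^k_\infty$ that $k$-WL assigns to $k$-tuples and the $\SC^{k+1}$-types of $k$-tuples (the sets of $\SC^{k+1}$-formulas with at most $k$ free variables that they satisfy); the extra variable of $\SC^{k+1}$ over $\SC^k$ is exactly the single auxiliary vertex $w$ that appears in the refinement step of $k$-WL. If $\abs{V(G)}\ne\abs{V(H)}$ both (1) and (2) hold trivially, so I assume $\abs{V(G)}=\abs{V(H)}$ and, to compare colours across the two graphs, run $k$-WL on the disjoint union $G\sqcup H$.

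For ``(1)$\Rightarrow$(2)'' I would show by induction on $i$ that each round-$i$ colour $c$ of $k$-WL is \emph{definable} by a formula $\varphi^{\,i}_c(x_1,\dots,x_k)\in\SC^{k+1}$, i.e.\ $G\models\varphi^{\,i}_c(\bar v)\iff C^k_i(\bar v)=c$ for all graphs $G$ and tuples $\bar v\in V(G)^k$. The base case uses that the atomic type $\atp(\bar v)$ is quantifier-free definable in $x_1,\dots,x_k$. In the step, $C^k_{i+1}(\bar v)=(C^k_i(\bar v),M_i(\bar v))$ is captured by conjoining $\varphi^{\,i}_{C^k_i(\bar v)}$ with, for each combined colour $\gamma=(\alpha,c_1,\dots,c_k)$ and the multiplicity $m_\gamma$ with which it occurs, a formula $\exists^{=m_\gamma} y\,\bigl(\text{``}(x_1,\dots,x_k,y)\text{ has atomic type }\alpha\text{''}\wedge\varphi^{\,i}_{c_1}(y,x_2,\dots,x_k)\wedge\dots\wedge\varphi^{\,i}_{c_k}(x_1,\dots,x_{k-1},y)\bigr)$; renaming bound variables as usual, everything stays within the $k+1$ variables $x_1,\dots,x_k,y$. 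After the round $r$ at which $k$-WL stabilises, the $\varphi^{\,r}_c$ define the stable colour classes, and ``$k$-WL distinguishes $G$ and $H$'' — some stable colour occurs with different numbers of $k$-tuples in $G$ and in $H$ — is then witnessed by a $\SC^{k+1}$-sentence obtained from the $\varphi^{\,r}_c$ by further counting quantifiers (using that every $\SC^{k+1}$-sentence is, up to Boolean combination, a count of the elements satisfying a one-free-variable $\SC^{k+1}$-formula, and that such counts are determined by the stable colouring).

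For ``(2)$\Rightarrow$(1)'' I would prove the contrapositive: if $k$-WL does not distinguish $G$ and $H$, then no $\SC^{k+1}$-sentence does. From the hypothesis there is a colour-preserving bijection $V(G)^k\to V(H)^k$. The crux is that the stable colouring is \emph{coherent}: whenever two $k$-tuples $\bar v,\bar w$ of $G\sqcup H$ have the same stable colour, then for each slot $j$ there is a bijection $f$ of $V(G\sqcup H)$ mapping $V(G)$ onto $V(H)$ such that, for every $u$, replacing the $j$-th entry of $\bar v$ by $u$ gives a tuple of the same stable colour as the tuple obtained by replacing the $j$-th entry of $\bar w$ by $f(u)$; this is read off from the stability conditions of $k$-WL together with a Hall-type matching argument aligning the colour-class sizes in $G$ and $H$ simultaneously. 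Coherence then lets Duplicator win the bijective $(k+1)$-pebble game on $(G,H)$ from the empty position, maintaining the invariant that the at most $k$ pebbled positions always carry the same stable colour (padding a configuration of fewer than $k$ pebbles to a $k$-tuple in a fixed way) and answering each Spoiler move by the bijection coherence supplies; the position stays a partial isomorphism because atomic types are part of the stable colour. By Hella's pebble-game characterisation of $\equiv_{\SC^{k+1}}$, this yields $G\equiv_{\SC^{k+1}}H$.

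The step I expect to be the real obstacle is the coherence lemma and its deployment in the game. One has to phrase the stability conditions of $k$-WL so that, for a fixed pair $\bar v,\bar w$ of equally coloured tuples and a fixed slot $j$, a \emph{single} bijection $V(G)\to V(H)$ simultaneously matches the colours of all substituted tuples, and then carefully manage the bookkeeping of the one free pebble — which slot it currently occupies and how short configurations are padded — so that Duplicator's invariant genuinely survives both the removal and the placement phase of each round. By comparison, the formula-construction direction is routine once one checks that $k+1$ variables suffice at every level of the induction.
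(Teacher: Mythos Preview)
The paper does not prove this lemma at all: it is quoted from \cite{immlan90}, and immediately before stating it the authors write that ``it is safe to treat the logic and the following lemma as a black-box here.'' So there is no in-paper proof to compare your proposal against.

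That said, your sketch is the standard Cai--F\"urer--Immerman style argument and is essentially correct. One remark on the step you single out as the obstacle: the ``Hall-type matching argument'' is not actually needed. By the definition of the refinement step, equality of stable colours $C^k_\infty(\bar v)=C^k_\infty(\bar w)$ gives equality of the multisets $M_i(\bar v)$ and $M_i(\bar w)$, and each element of that multiset, indexed by a single vertex $u$, already records the colours of \emph{all} $k$ substituted tuples simultaneously (together with $\atp(\bar v,u)$). Any bijection witnessing the multiset equality therefore works for every slot $j$ at once; there is nothing further to align. With that in hand, the bijective $(k{+}1)$-pebble game strategy you describe---Duplicator maintains equal stable colour on the at most $k$ currently pebbled positions and answers each round with the bijection coming from the multiset equality---goes through exactly as you say.
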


\begin{remark}
  Some papers consider a
different version of $k$-WL and count the dimensions differently. If
we denote the version in \cite{bergro15} by $k$-WL$^+$, then $k$-WL distinguishes $G$ and $H$
if and only $(k+1)$-WL$^+$ distinguishes $G$ and $H$. The easiest way
to see this is by going through a logical characterization of the
algorithms (see
  \cite[Section 3.5]{gro17} for details).
\end{remark}

Let us now establish the equivalence between assertions (2) and (3) of
Theorem~\ref{theo:3}. As we mentioned in the introduction, this
equivalence follows easily from the results of \cite{groott15}, but
this may be hard to see for a reader not familiar with that paper. Let
$k\ge 2$, and let~$G$ and~$H$ be graphs with vertex sets $V,W$,
respectively, and adjacency matrices $A,B$, respectively.
In~\cite{groott15}, $k$-WL is characterized in terms of the following
system~$\Fiso[k+1/2](G,H)$ of linear equations in the variables~$X_\pi$ for $\pi\subseteq V\times W$ of size $|\pi|\le k$:

\vspace{-2em}
\begin{center}
  \begin{empheq}[
      left={\Fiso[k+1/2](G,H):\quad\empheqlbrace},
      box=\colbox
    ]{align}
    \sum_{v\in V} X_{\pi\cup\{(v,w)\}} &= X_\pi
    \qquad\text{for all $\pi$ and $w\in W$}
    \label{eq:fhalf1}
    \tag{$\tilde{\text{F}}1$}
    \\
    \sum_{w\in W} X_{\pi\cup\{(v,w)\}}&=X_\pi
    \qquad\text{for all $\pi$ and $v\in V$}
    \label{eq:fhalf2}
    \tag{$\tilde{\text{F}}2$}
    \\
    \sum_{v'}A_{vv'}X_{\pi\cup\{(v',w)\}}&=\sum_{w'}X_{\pi\cup\{(v,w')\}}B_{w'w}
    \quad\parbox[t]{2cm}{for all $\pi,v,w$ with ${\abs{\pi} \lneq k}$}
    \label{eq:fk-1}
    \tag{$\tilde{\text{F}}3$}
    \\
    X_\emptyset&=1
    \tag{$\tilde{\text{F}}$4}
    \label{eq:fhalf3}
  \end{empheq}
\end{center}

\begin{theorem}[\mbox{\cite[Theorem~5.9]{groott15}}]\label{theo:go}
  For all $k\ge 1$ and all graphs $G,H$ the following are equivalent.
  \begin{enumerate}
  \item The logic $\mathsf C^{k+1}$ does not distinguish $G$ and $H$.
  \item $\Fiso[k+1/2](G,H)$ has a nonnegative real solution.
  \end{enumerate}
\end{theorem}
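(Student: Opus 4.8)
The plan is to re-prove Theorem~\ref{theo:go} by generalizing Tinhofer's argument, which settles the base case. For $k=1$ the system $\Fiso[3/2](G,H)$ coincides with $\Fiso(G,H)$, and since $1$-WL is color refinement and $\mathsf C^2$-equivalence coincides with non-distinction by color refinement (Lemma~\ref{lem:ck}), the statement is exactly the equivalence \ref{it:colref}$\,\Leftrightarrow\,$\ref{it:fraciso} of Theorem~\ref{theo:1} (Tinhofer). So assume $k\ge 2$, and, using Lemma~\ref{lem:ck}, read assertion~(1) as ``$k$-WL does not distinguish $G$ and $H$''. It is convenient to set $X_\pi=0$ whenever $\pi$ is not a partial isomorphism (as in constraint~\eqref{eq:pk} of the related system $\Liso[k]$), and to identify a pair of $k$-tuples $(\bar v,\bar w)\in V(G)^k\times V(H)^k$ with the set $\pi(\bar v,\bar w)=\{(v_1,w_1),\dots,(v_k,w_k)\}$, which is a partial isomorphism iff $\atp(G,\bar v)=\atp(H,\bar w)$. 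I prove the two implications separately.

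For ``$k$-WL does not distinguish $\Rightarrow$ a nonnegative solution exists'', I would build the solution from the stable $k$-WL coloring~$\chi$, viewed as a common coloring of the $k$-tuples of $G$ and of $H$, generalizing Tinhofer's formula $X_{vw}=1/|c^{-1}(c(v))|$ if $c(v)=c(w)$ and $X_{vw}=0$ otherwise. The hypothesis says that every color class occurring in $G$ occurs in $H$ with the same number of tuples; for a partial isomorphism $\pi$ of size $\ell\le k$ one then defines $X_\pi$ from the common $\ell$-ary WL-color of the two halves of $\pi$ and the sizes of the relevant $k$-WL color classes, normalized so that $X_\emptyset=1$, which is visibly nonnegative. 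Verifying~\eqref{eq:fhalf1}--\eqref{eq:fhalf3} rests on the coherence (stability) of a stable $k$-WL coloring: the number of one-point extensions of a tuple to a tuple of any fixed color depends only on the color of the original tuple, and these intersection numbers agree between $G$ and $H$. Then~\eqref{eq:fhalf1} and~\eqref{eq:fhalf2} become the bookkeeping identity that these extension counts sum back to the count for the parent tuple, and~\eqref{eq:fk-1} is precisely the coherence condition matching adjacency to a fresh vertex on the $G$-side with adjacency on the $H$-side --- the $k$-dimensional analogue of the computation ``$AX=XB$'' from the proof of Theorem~\ref{theo:2}.

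For the converse, let $(X_\pi)$ be a nonnegative solution. From~\eqref{eq:fhalf1}--\eqref{eq:fhalf3} one first checks that its support is downward closed and consists of partial isomorphisms: if $v$ is already in the domain of $\pi$, the equation $\sum_{w}X_{\pi\cup\{(v,w)\}}=X_\pi$ contains the term $X_\pi$ itself on the left, forcing all other terms to vanish; and the support is rich, in that every $\pi$ in it with $|\pi|<k$ has, for each $v\in V(G)$, some $w$ with $X_{\pi\cup\{(v,w)\}}>0$, and symmetrically. I then prove, by induction on the round~$i$, a statement of the form: whenever $X_{\pi(\bar v,\bar w)}>0$, the colors $C_i^k(G;\bar v)$ and $C_i^k(H;\bar w)$ correspond and the one-point extensions of $\bar v$ and of $\bar w$ have matching color multisets; the base case is the atomic-type agreement noted above, and the inductive step uses~\eqref{eq:fhalf1}--\eqref{eq:fk-1} to build, for each target vertex on the $H$-side, a weight-preserving correspondence between the one-point extensions of $\bar v$ and of $\bar w$ realizing the multisets $M_i(\bar v)$ and $M_i(\bar w)$ that define $C_{i+1}^k$. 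Together with the telescoping identity $\sum_{\bar w}X_{\pi(\bar v,\bar w)}=1$ (obtained by iterating~\eqref{eq:fhalf2} and using~\eqref{eq:fhalf3}), this forces the stable $k$-WL colorings of $G$ and $H$ to be equal as multisets, so $k$-WL --- hence $\mathsf C^{k+1}$ --- does not distinguish them.

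The technical heart, and the main obstacle, is the coherence bookkeeping behind both directions: choosing the normalization so that all of~\eqref{eq:fhalf1}--\eqref{eq:fhalf3} hold simultaneously in the forward direction, and threading the scalar positivity conditions $X_\pi>0$ through the ``add a vertex, then forget a vertex'' refinement step of $k$-WL in the converse --- the genuinely $k$-dimensional feature absent from Tinhofer's one-dimensional argument, where the analogous computation collapses to the single matrix identity $AX=XB$. An alternative route is through LP duality: by Farkas's lemma, the nonexistence of a nonnegative solution is equivalent to a ``local'' linear refutation of $\Fiso[k+1/2](G,H)$, and such refutations correspond to winning Spoiler strategies in the bijective $(k+1)$-pebble game, hence to $\mathsf C^{k+1}$ distinguishing $G$ and $H$; equivalently, one may invoke the correspondence of Atserias and Maneva between $k$-WL and the Sherali-Adams hierarchy together with the sandwiching of $\Fiso[k+1/2](G,H)$ and $\Liso[k+1](G,H)$ between consecutive Sherali-Adams levels of $\Fiso(G,H)$ mentioned earlier. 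Reconciling the off-by-one between the hierarchy level and the dimension of WL takes comparable care, so I would present the direct argument.
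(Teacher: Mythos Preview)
The paper does not prove Theorem~\ref{theo:go}; it is quoted from \cite[Theorem~5.9]{groott15} and used as a black box, together with Lemma~\ref{lem:ck} and Lemmas~\ref{lem:LtoF}--\ref{lem:FtoL}, to obtain the equivalence (ii)$\Leftrightarrow$(iii) of Theorem~\ref{theo:3}. There is therefore no in-paper argument to compare your proposal against.

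Taken on its own terms, your outline is the standard two-sided argument one finds in \cite{groott15}: build a nonnegative solution from the stable $k$-WL partition, and conversely show by induction on the round that the support of any nonnegative solution respects the $k$-WL colors. Two places deserve more than a sketch. In the forward direction, ``define $X_\pi$ from the common $\ell$-ary WL-color of the two halves of $\pi$ and the sizes of the relevant $k$-WL color classes'' hides the entire construction: for $\ell<k$ one must say precisely which quantity is being assigned (e.g.\ extend $\pi$ to size $k$ and sum, or equivalently use the induced coloring on $\ell$-tuples together with the coherent-configuration intersection numbers), and then check that the resulting values are consistent across the many ways a given $\pi$ arises---this consistency, not the verification of~\eqref{eq:fhalf1}--\eqref{eq:fk-1} afterwards, is where the work lies. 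In the converse direction, the claim that the support consists of partial \emph{isomorphisms} (not just partial bijections) genuinely needs the adjacency equation~\eqref{eq:fk-1}; the argument you give from~\eqref{eq:fhalf1}/\eqref{eq:fhalf2} alone only yields partial bijections, and the missing step is exactly the content of Lemma~\ref{lem:FtoL} in this paper. With those two points made explicit, your plan goes through; the alternative Sherali--Adams/duality route you mention is also viable and is essentially how \cite{atsman13,groott15} organize the proof.
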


In the following two lemmas, we prove the equivalence between the
systems $\Fiso[k+1/2](G,H)$ and $\Liso[k+1](G,H)$ with respect to
nonnegative solutions. Observe that the two systems have the same
variables $X_{\pi}$ for $\pi\subseteq V\times W$ with $|\pi|\le k+1$,
and they share the
equations \eqref{eq:fhalf1}, \eqref{eq:fhalf2}, \eqref{eq:fhalf3}
(corresponding to \eqref{eq:ck1}, \eqref{eq:ck2}, \eqref{eq:ck3}).

\begin{lemma}\label{lem:LtoF}
  Let $k\ge 2$. Every solution to $\Liso[k+1](G,H)$ is a solution to $\Fiso[k+1/2](G,H)$.
\end{lemma}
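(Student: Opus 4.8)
The plan is to show that every solution $(X_\pi)$ of $\Liso[k+1](G,H)$ satisfies each of the four equations $(\tilde{\mathrm F}1)$--$(\tilde{\mathrm F}4)$ of $\Fiso[k+1/2](G,H)$. Three of these are immediate: as noted just before the lemma, the equations $(\tilde{\mathrm F}1)$, $(\tilde{\mathrm F}2)$, $(\tilde{\mathrm F}4)$ are literally the same as $(L1)$, $(L2)$, $(L4)$, so any solution to $\Liso[k+1]$ trivially satisfies them. The entire content of the lemma is therefore the single equation $(\tilde{\mathrm F}3)$, namely
\[
  \sum_{v'\in V} A_{vv'}\,X_{\pi\cup\{(v',w)\}}
  \;=\;
  \sum_{w'\in W} X_{\pi\cup\{(v,w')\}}\,B_{w'w}
\]
for every $\pi\subseteq V\times W$ with $\abs\pi\le k-1$ and every $v\in V$, $w\in W$.

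To prove this, I would fix such a $\pi,v,w$ and analyse the two sides by splitting the sums according to whether adding the new pair keeps $\pi$ a partial isomorphism. On the left, a term $A_{vv'}X_{\pi\cup\{(v',w)\}}$ is nonzero only if $vv'\in E(G)$ and $\pi\cup\{(v',w)\}$ is a partial isomorphism (otherwise $(L3)$ forces $X_{\pi\cup\{(v',w)\}}=0$). For the resulting set $\sigma:=\pi\cup\{(v',w)\}$ to be a partial isomorphism it is necessary that $w$ be adjacent in $H$ to the image $w_i$ of every $v_i$ with $v_iv'\in E(G)$; the edge between the new pairs requires in particular that $v'$ and $v$ play compatible roles. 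Symmetrically, on the right a term $X_{\pi\cup\{(v,w')\}}B_{w'w}$ is nonzero only if $w'w\in E(H)$ and $\pi\cup\{(v,w')\}$ is a partial isomorphism. The key step is to add one more pebble: apply $(L1)$ (summing over $v'$) and $(L2)$ (summing over $w'$) to the size-$(\abs\pi+1)$ sets of the form $\pi\cup\{(v,w')\}$ and $\pi\cup\{(v',w)\}$, respectively, so as to rewrite both sides in terms of the common size-$(\abs\pi+2)$ variables $X_{\pi\cup\{(v,w'),(v',w)\}}$. Concretely, using that $A$ is a $0/1$ matrix,
\[
  \sum_{v'} A_{vv'} X_{\pi\cup\{(v',w)\}}
  = \sum_{v'\colon vv'\in E(G)}\ \sum_{w'\in W} X_{\pi\cup\{(v',w),(v,w')\}},
\]
where the inner equality is $(L2)$ applied with the partial bijection $\pi\cup\{(v',w)\}$ and the free vertex $v$ — provided $\abs{\pi\cup\{(v',w)\}}\le k$, which holds since $\abs\pi\le k-1$. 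An analogous manipulation of the right-hand side via $(L1)$ gives
\[
  \sum_{w'} X_{\pi\cup\{(v,w')\}} B_{w'w}
  = \sum_{w'\colon w'w\in E(H)}\ \sum_{v'\in V} X_{\pi\cup\{(v,w'),(v',w)\}}.
\]
Now both double sums range over the same variables $X_\tau$ with $\tau=\pi\cup\{(v,w'),(v',w)\}$, and by $(L3)$ such a term vanishes unless $\tau$ is a partial isomorphism — which in particular forces $vv'\in E(G)\iff ww'\in E(H)$. Hence the restriction ``$vv'\in E(G)$'' on the left and the restriction ``$w'w\in E(H)$'' on the right select exactly the same surviving terms, and the two expressions are equal. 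This establishes $(\tilde{\mathrm F}3)$ and completes the proof.

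The main obstacle I anticipate is purely bookkeeping: one must be careful that all the intermediate sets $\pi\cup\{(v',w)\}$, $\pi\cup\{(v,w')\}$, $\pi\cup\{(v,w'),(v',w)\}$ have size at most $k+1$ so that the corresponding variables exist and the equations $(L1)$, $(L2)$ are applicable — this is where the hypothesis $\abs\pi\le k-1$ (equivalently, the ``$k+1$'' in $\Liso[k+1]$ versus the ``$k$'' bound on $\pi$ in $(\tilde{\mathrm F}3)$) and the assumption $k\ge 2$ get used. One also has to handle the degenerate collisions $v'=v$ or $w'=w$ and check that the partial-bijection condition in $(L3)$ is not violated by them; in those cases the sets $\tau$ either coincide with smaller sets or are not partial bijections, and the corresponding terms still match up on the two sides. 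Beyond these index-range checks, the argument is a routine double-counting identity, so I expect no genuine difficulty.
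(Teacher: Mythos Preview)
Your proposal is correct and follows essentially the same approach as the paper: reduce to proving $(\tilde{\mathrm F}3)$, expand each side via $(L1)$/$(L2)$ to a double sum over the variables $X_{\pi\cup\{(v,w'),(v',w)\}}$, and then use $(L3)$ to see that the edge constraint $vv'\in E(G)$ on one side and $w'w\in E(H)$ on the other pick out exactly the same nonzero terms. The paper organizes the same computation by first recording the two single-variable identities (your inner sums) as separate equations and then chaining them, but the content is identical.
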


\begin{proof}
    Let $(X_\pi)$ be a solution to $\Liso[k+1]$. We need to prove that
    it satisfies the equations \eqref{eq:fk-1}, that is,
    \begin{equation}
      \label{eq:LtoF1}
      \sum_{v'}A_{vv'}X_{\pi\cup\{(v',w)\}}
      =\sum_{w'}X_{\pi\cup\{(v,w')\}}B_{w'w}
    \end{equation}
    for all $\pi\subseteq V\times W$ of size $|\pi|\le k-1$ and all
    $v\in V,w\in W$.

    Let $\pi\subseteq V\times W$ such that
       $|\pi|\le k-1$, and let $v\in V,w\in W$.

    Let $w'\in N(w)$ (that is, $ww'\in E(H)$). Then
    $X_{\pi\cup\{(v,w'),(v',w)\}}=0$ unless $v'\in N(v)$,
    because if $v'\not\in N(v)$ then $\pi\cup\{(v,w'),(v',w)\}$ is not
    a partial isomorphism. Thus by
      \eqref{eq:ck1} applied to $\pi'=\pi\cup\{(v,w')\}$ and $w$,
    \begin{equation}
      \label{eq:3}
      X_{\pi\cup\{(v,w')\}}
      =\sum_{v'\in V}X_{\pi\cup\{(v,w'),(v',w)\}}
      =\sum_{v'\in N(v)}X_{\pi\cup\{(v,w'),(v',w)\}}.
    \end{equation}
    Similarly, for $v'\in N(v)$ we have 
    \begin{equation}
      \label{eq:4}
       X_{\pi\cup\{(v',w)\}}
       =\sum_{w'\in N(w)}X_{\pi\cup\{(v,w'),(v',w)\}}.
    \end{equation}
    These two equations imply \eqref{eq:LtoF1}:
    \begin{align*}
      \sum_{v'\in V}A_{vv'}X_{\pi\cup\{(v',w)\}}&=\sum_{v'\in
                                                  N(v)}X_{\pi\cup\{(v',w)\}}\\
      &=\sum_{v'\in N(v)}\sum_{w'\in
        N(w)}X_{\pi\cup\{(v,w'),(v',w)\}}&\text{by \eqref{eq:4}}\\
      &=\sum_{w'\in
        N(w)}\sum_{v'\in N(v)} X_{\pi\cup\{(v,w'),(v',w)\}}\\
      &=\sum_{w'\in
        N(w)}X_{\pi\cup\{(v,w')\}}&\text{by \eqref{eq:3}}\\
      &=\sum_{w'\in W}X_{\pi\cup\{(v,w')\}}B_{w'w}.
    \end{align*}
\end{proof}

\begin{lemma}[\cite{groott15}]\label{lem:FtoL}
  Let $k\ge 2$. Every nonnegative solution to $\Fiso[k+1/2](G,H)$ is a
  solution to $\Liso[k+1]$.
\end{lemma}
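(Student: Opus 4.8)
The plan is to use the fact, noted just before the lemma, that $\Fiso[k+1/2](G,H)$ and $\Liso[k+1](G,H)$ have the same variables $X_\pi$ (for $|\pi|\le k+1$) and literally share the equations \eqref{eq:fhalf1}, \eqref{eq:fhalf2}, \eqref{eq:fhalf3}, which are \eqref{eq:ck1}, \eqref{eq:ck2}, \eqref{eq:ck3}. Thus a nonnegative solution $(X_\pi)$ of $\Fiso[k+1/2](G,H)$ already satisfies \eqref{eq:ck1}, \eqref{eq:ck2}, \eqref{eq:ck3}, and the whole task is to verify \eqref{eq:pk}, that is, $X_\pi=0$ for every $\pi$ with $|\pi|\le k+1$ that is not a partial isomorphism from $G$ to $H$. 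The first step I would take is to extract a monotonicity property from nonnegativity: since \eqref{eq:fhalf1} and \eqref{eq:fhalf2} write $X_\pi$ as a sum of the nonnegative numbers $X_{\pi\cup\{(v,w)\}}$, we get $0\le X_{\pi\cup\{(v,w)\}}\le X_\pi$, and iterating this (the sets arising all have size at most $k+1$, so the equations are applicable) yields $0\le X_\pi\le X_\sigma$ whenever $\sigma\subseteq\pi$ and $|\pi|\le k+1$. Consequently it suffices to find, inside an arbitrary non-partial-isomorphism $\pi$, a subset $\sigma$ of size at most $2$ with $X_\sigma=0$: if $\pi$ is not a partial bijection, two of its pairs share a first coordinate or share a second coordinate; and if $\pi$ is a partial bijection but is not a partial isomorphism, two of its pairs $(v_1,w_1),(v_2,w_2)$ witness that exactly one of $v_1v_2\in E(G)$ and $w_1w_2\in E(H)$ holds.

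For a witness of the first kind, say $\sigma=\{(v,w),(v,w')\}$ with $w\ne w'$, I would apply \eqref{eq:fhalf2} to the singleton $\{(v,w)\}$ and the vertex $v$: the summand indexed by $w$ already equals $X_{\{(v,w)\}}$, so all the remaining nonnegative summands vanish, in particular $X_\sigma=0$; the symmetric case of a repeated second coordinate is identical with \eqref{eq:fhalf1} in place of \eqref{eq:fhalf2}. For a witness of the second kind, $\sigma=\{(v_1,w_1),(v_2,w_2)\}$ a partial bijection, I would first invoke the $G\leftrightarrow H$ symmetry of the whole system --- transposing every index turns a nonnegative solution of $\Fiso[k+1/2](G,H)$ into one of $\Fiso[k+1/2](H,G)$, using that $A$ and $B$ are symmetric --- so as to assume $v_1v_2\in E(G)$ and $w_1w_2\notin E(H)$. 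Then I would apply \eqref{eq:fk-1} to the singleton $\{(v_1,w_1)\}$, which is legitimate since its size $1$ satisfies $1\le k-1$ as $k\ge 2$, with $v:=v_1$ and $w:=w_2$, obtaining
\[
  \sum_{v'\in N_G(v_1)} X_{\{(v_1,w_1),(v',w_2)\}}
  \;=\;
  \sum_{w'\in N_H(w_2)} X_{\{(v_1,w_1),(v_1,w')\}}\,.
\]
On the right-hand side every index contains the two pairs $(v_1,w_1)$ and $(v_1,w')$ with equal first coordinate, and $w'=w_1$ does not occur in the sum because $w_1w_2\notin E(H)$; hence every summand vanishes by the first kind of witness, and the right-hand side is $0$. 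The left-hand side is then a sum of nonnegative numbers equal to $0$, so each term is $0$; taking $v'=v_2\in N_G(v_1)$ gives $X_\sigma=0$. Together with the monotonicity step this proves \eqref{eq:pk}, so $(X_\pi)$ is a solution of $\Liso[k+1](G,H)$.

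The step I expect to be the real obstacle is the size bookkeeping rather than any single computation: \eqref{eq:fk-1} is only available for $|\pi|\le k-1$, so one cannot apply it directly to a general $\pi$ of size up to $k+1$, and it is exactly the monotonicity reduction that brings the problem down to $2$-element sets --- where the instance of \eqref{eq:fk-1} used has a singleton in the role of $\pi$, which is where the hypothesis $k\ge 2$ is needed. A secondary, routine point is that the ``reversed'' edge inconsistency (where $H$ carries the edge and $G$ does not) has to be obtained from the $G\leftrightarrow H$ transpose of the system, because \eqref{eq:fk-1} is written asymmetrically with $A$ on the left and $B$ on the right.
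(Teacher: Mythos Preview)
Your argument is correct and follows the same skeleton as the paper's proof: a monotonicity step (if $X_\sigma=0$ and $\sigma\subseteq\pi$ with $|\pi|\le k+1$, then $X_\pi=0$), a reduction to two-element witnesses, the bijection-failure case via the row/column-sum equations, and the edge-inconsistency case via~\eqref{eq:fk-1}.

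The one substantive difference is in how \eqref{eq:fk-1} is deployed for the edge-inconsistency witness $\pi'=\{(v,w),(v',w')\}$. The paper plugs $\pi'$ itself into \eqref{eq:fk-1} (with free indices $v,w'$) and uses the bijection-failure case to kill every term on each side except one, leaving $A_{vv'}X_{\pi'}=B_{ww'}X_{\pi'}$; this dispatches both edge directions at once and avoids any appeal to the $G\leftrightarrow H$ transpose. You instead take a \emph{singleton} as the base of \eqref{eq:fk-1}, annihilate one side with the bijection-failure case, and then the other by nonnegativity. Your variant makes the role of the hypothesis $k\ge 2$ completely transparent, since \eqref{eq:fk-1} is only asserted for $|\pi|\le k-1$ and your base has size~$1$; by contrast the paper's choice of base has size~$2$. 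The price you pay is the extra $G\leftrightarrow H$ symmetry step for the reversed edge direction---which, incidentally, you can avoid within the same system by applying \eqref{eq:fk-1} to the singleton $\{(v_1,w_1)\}$ with $v:=v_2$, $w:=w_1$ instead.
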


\begin{proof}
  Let $(X_\pi)$ be a nonnegative solution to $\Fiso[k-1/2](G,H)$. We
  need to prove that $\alpha$ satisfies the equations \eqref{eq:pk}
  for $k+1$, that is, $X_\pi=0$ for all $\pi\subseteq
                                                 V\times W$ of size
                                                 $|\pi|\le k+1$ such that
             $\pi$ is not a partial isomorphism from $G$ to $H$.

  \begin{claim}\label{claim:FtoL1}
    For all $\pi'\subseteq\pi\subseteq V\times W$ such that
    $|\pi|\le k+1$, if $X_{\pi'}=0$ then $X_{\pi}=0$.

   \proof
   Clearly, it suffices
   to prove this for the case that $|\pi\setminus\pi'|=1$, say,
   $\pi=\pi'\cup\{(v,w)\}$. Equation~\eqref{eq:ck1} implies that
   \[
     X_{\pi}\le\sum_{v'}X_{\pi'\cup\{(v',w)\}}=X_{\pi'}=0.
   \]
   Note that the first inequality only holds the $X_\pi$ are    nonnegative.
 \end{claim}

 \begin{claim}\label{claim:FtoL2}
   For all $\pi\subseteq V\times W$ such that
    $|\pi|\le k$, if $X_\pi\neq 0$ then $\pi$ is a partial
    bijection.

    \proof
    Let
   $\pi\subseteq V\times W$ such that $|\pi|\le k+1$ and $\pi$ is not
   a partial bijection. Note that there is a $\pi'\subseteq\pi$ of size
   $|\pi'|=2$ such that $\pi'$ is not a partial bijection. By Claim~1, it
   suffices to prove that $X_{\pi'}=0$. Say,
   $\pi'=\{(v,w),(v',w')\}$.

   Suppose first that $v=v'$ and $w\neq w'$. Then
   $X_{\pi'\cup\{(v,w)\}}=X_{\pi'\cup\{(v,w')\}}=X_{\pi'}$. Thus by
   equation~\eqref{eq:ck2} and the nonnegativity 
   \[
     2X_{\pi'}\le\sum_{w''}X_{\pi'\cup\{(v,w'')}=X_{\pi'}.
   \]
   It follows that $X_{\pi'}=0$. Similarly, if $v\neq v'$ and
   $w=w'$ then $X_{\pi'}=0$.
 \end{claim}

 Let
 $\pi\subseteq V\times W$ such that $|\pi|\le k$ and $\pi$ is not a
 partial isomorphism. We need to prove that $X_\pi=0$. 

 Since $\pi$ is not a partial isomorphism, there is a $\pi'\subseteq\pi$ of size
   $|\pi'|=2$ such that $\pi'$ is not a partial isomorphism. By
   Claim~\ref{claim:FtoL1} it suffices to prove that $X_{\pi'}=0$. Say,
   $\pi'=\{(v,w),(v',w')\}$. By
 Claim~\ref{claim:FtoL2}, we may assume that $\pi'$ is a partial bijection, that is, $v\neq v'$ and $w\neq w'$. Then
   $vv'\in E, ww'\not\in F$ or $vv'\not\in E, ww'\in F$. Equivalently,
   $A_{vv'}\neq B_{ww'}$. We look at the instance of
   \eqref{eq:fk-1} for $v,w'$:
   \begin{equation}\label{eq:2}
     \sum_{v''}A_{vv''}X_{\pi'\cup\{(v'',w')\}}=\sum_{w''}X_{\pi'\cup\{(v,w'')\}}B_{w''w'}.
   \end{equation}
   By Claim~\ref{claim:FtoL2}, for all $v''\neq v'$ we have
   $X_{\pi'\cup\{(v'',w')\}}=0$. Similarly, for all $w''\neq w$
   we have $X_{\pi'\cup\{(v,w'')\}}=0$. As
   $\pi'\cup\{(v',w')\}=\pi'\cup\{(v,w)\}=\pi'$, equation \eqref{eq:2}
   reduces to
   \[
     A_{vv'}X_{\pi'}=X_{\pi'}B_{ww'}.
   \]
   As $A_{vv'}\neq B_{ww'}$, it follows that $X_{\pi'}=0$.
\end{proof}

Observe that Lemma~\ref{lem:ck}, Theorem~\ref{theo:go},
Lemma~\ref{lem:LtoF}, and Lemma~\ref{lem:FtoL} imply the equivalence
between assertions (2) and (3) of Theorem~\ref{theo:3}.

\section{Homomorphisms from Small Treewidth}
\label{app:treewidth}
\subsection{More about infinite matrices}

Let $(I,\le)$ be a countable and partially ordered set.
An \emph{interval~$[i,j]$} consists of all~$k\in I$ with~$i\le k\le j$, and the half-open interval~$[i,j)$ is defined as~$[i,j]\setminus\set{j}$.
We assume that the poset $(I,\le)$ is \emph{locally finite}, that is, every interval has finite size.
We call an infinite matrix~$A$ from~$\R^{I\times I}$ upper triangular if~$A_{i,j}=0$ holds for all $i,j$ with $i\not\le j$.
\begin{lemma}
  Let $(I,\le)$ be a locally finite poset.
  If $A\in\R^{I\times I}$ is an upper triangular matrix with $1$s on the diagonal, then the left-inverse $A^{-1}$ is well-defined and upper triangular.
\end{lemma}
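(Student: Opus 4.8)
The plan is to construct the left inverse $B = A^{-1}$ entry by entry, exploiting local finiteness to make each entry a finite sum, and then verify that the resulting matrix is upper triangular and indeed a left inverse. First I would fix $j \in I$ and build the $j$-th column of $B$ by a downward induction along the interval $[i,j]$ for each $i \le j$; for $i \not\le j$ I simply set $B_{i,j} = 0$, which is what forces $B$ to be upper triangular. The defining equation is $(BA)_{i,j} = [i=j]$, i.e. $\sum_k B_{i,k} A_{k,j} = [i=j]$. Since $A$ is upper triangular, $A_{k,j} = 0$ unless $k \le j$; since (by the inductive construction) $B_{i,k} = 0$ unless $i \le k$; the surviving terms have $i \le k \le j$, so they range over the finite interval $[i,j]$. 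Isolating the $k = i$ term and using $A_{i,i} = 1$ gives the recurrence
\begin{equation}
  B_{i,j} = [i=j] - \sum_{i < k \le j} B_{i,k}\, A_{k,j}\,,
\end{equation}
which is well-defined because every $B_{i,k}$ appearing on the right has been determined at an earlier stage of the induction (namely, at a point $k$ strictly above $i$ within $[i,j]$), and the index set $\{k : i < k \le j\}$ is finite.

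Next I would check that this $B$ actually satisfies $BA = I$, i.e. that the recurrence was not just a formal manipulation but defines a genuine left inverse. This is essentially immediate: for $i \le j$ the recurrence rearranges to $\sum_{i \le k \le j} B_{i,k} A_{k,j} = [i=j]$, and by the triangularity observations above this finite sum equals the full (convergent, since all but finitely many terms vanish) product $(BA)_{i,j}$; for $i \not\le j$ one notes that $(BA)_{i,j} = \sum_k B_{i,k} A_{k,j}$ has no nonzero term, since $B_{i,k} \ne 0$ needs $i \le k$ and $A_{k,j} \ne 0$ needs $k \le j$, which together would give $i \le j$. Hence $BA = I$ and $B$ is upper triangular by construction.

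The one subtlety worth flagging — and the main thing the write-up must handle carefully rather than wave at — is that ``induction'' here is along a poset, not along $\mathbb{N}$: one must argue that the recurrence for $B_{i,j}$ only ever refers to entries $B_{i,k}$ with $k$ in the finite interval $(i,j]$, and set up a well-founded order (e.g. induction on $|[i,j]|$, which is finite by local finiteness and strictly decreases as $k$ moves up from $i$) so that the recursion terminates. With that in place everything else is bookkeeping. Finiteness of $[i,j]$ is used twice: once to guarantee the recursion bottoms out, and once to guarantee the sum defining each entry — and later the product $(BA)_{i,j}$ — is a finite sum, so no convergence questions arise. (The statement asserts only a left inverse; I would not claim $AB = I$, since that can fail for infinite triangular matrices without an analogous column-finiteness hypothesis.)
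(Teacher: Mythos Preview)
Your overall approach is the same as the paper's --- construct the upper-triangular left inverse entrywise via a recurrence derived from $(BA)_{i,j}=[i=j]$ and use local finiteness to keep all sums finite --- but the recurrence you wrote is circular. In the sum $\sum_k B_{i,k}A_{k,j}$ the term with $k=i$ is $B_{i,i}A_{i,j}$, which involves $A_{i,j}$, not $A_{i,i}$; there is nothing forcing that coefficient to be $1$, so isolating it does not give you $B_{i,j}$. The term you should isolate is $k=j$, whose coefficient is $A_{j,j}=1$, giving
\[
  B_{i,j}=[i=j]-\sum_{i\le k<j}B_{i,k}\,A_{k,j}.
\]
Now the right-hand side references only $B_{i,k}$ with $k\in[i,j)$, each such interval $[i,k]$ is a proper subset of $[i,j]$, and your induction on $\lvert[i,j]\rvert$ goes through. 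As written, your summation range $i<k\le j$ includes $k=j$, so $B_{i,j}$ appears on both sides and the recursion never grounds out; your claim that $\lvert[i,j]\rvert$ ``strictly decreases as $k$ moves up from $i$'' fails precisely at $k=j$. The description ``fix $j$ and build the $j$-th column'' also does not match, since the right-hand side involves entries $B_{i,k}$ from other columns of row $i$; with the corrected recurrence it is cleaner to fix the row, or simply to induct globally on $\lvert[i,j]\rvert$. With these fixes your argument is correct and essentially identical to the paper's.
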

\begin{proof}
  We use forward substitution to solve the system~$XA=I$ where $I$ is the identity matrix and~$X$ is going to be the inverse of~$A$.
  We let~$X$ be upper triangular, and for each~$i\in\N$ and $j\in\N$ with $1\le i<j$, we define~$X$ inductively via
  $X_{i,i}=1$ and $X_{i,j}=-\sum_{k\in [i,j)} X_{i,k} A_{k,j}$.

  We verify that~$X$ is indeed the left-inverse of~$A$ by proving that~$XA=I$ holds.
  Indeed, let~$i,j\in I$.
  We have
  \begin{align}
    XA[i,j]
    &=
    \sum_{k} X_{i,k} A_{k,j}\,.
  \end{align}
  Since~$X$ and~$A$ are upper triangular, every term that contributes to the sum satisfies~$i\le k \le j$.
  Since the poset is locally finite, the sum is thus finite and the matrix $XA$ is well-defined.
  In particular, if $i\not\le j$, we have $XA[i,j]=0$.
  Moreover, we have $XA[i,i]=1$.
  Now suppose $i<j$.
  Then
  \begin{align}
    XA[i,j] &=
    X_{i,j} A_{j,j}+
    \sum_{k\in[i,j)} X_{i,k} A_{k,j}
    = 0
  \end{align}
  holds by definition of~$X_{i,j}$.
\end{proof}

\subsection{Strong Homomorphisms}
A homomorphism~$h$ from~$F$ to~$G$ is called \emph{strong} if it also maps non-edges of~$H$ to non-edges of~$G$.
Let $\strhom(F, G)$ be the number of \emph{strong homomorphisms~$h$} from~$F$ to~$G$.
Strong homomorphism numbers turn out to be linear combinations of homomorphism numbers.
To see this, we follow the notation in~\cite{DBLP:conf/stoc/CurticapeanDM17} and define a further counting function $\ext(H,G)$ as follows:
\begin{equation}
  \ext(H,G)=
  \begin{cases}
    0         & \text{if $\abs{V(H)}\ne\abs{V(G)}$, and}\\
    \sub(H,G) & \text{otherwise.}
  \end{cases}
\end{equation}
In particular, $\ext$ inherits its upper triangularity and its~$1$s on the diagonal from~$\sub$.
Moreover, every graph~$H$ has only finitely many graphs~$F$ with~$\ext(H,F)\ne 0$, and so every row and every column of~$\ext$ has finite support.
This implies that~$\ext\cdot A$ is well-defined for any matrix~$A$ of proper dimensions (as opposed to $\sub\cdot A$, which may not be defined if~$A$ has a column of infinite support).
We observe the following matrix identity relating~$\hom$ to~$\strhom$.
\begin{lemma}\label{lem: hom ext strhom}
  We have $\hom=\ext\cdot\strhom$.
\end{lemma}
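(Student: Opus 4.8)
The plan is to prove the identity $\hom = \ext\cdot\strhom$ by a combinatorial partition of the set of all homomorphisms from $F$ to $G$ according to the image of the map. First I would fix graphs $F$ and $G$ and consider an arbitrary homomorphism $h\colon V(F)\to V(G)$. Let $S=h(V(F))\subseteq V(G)$ be the image of $h$, and let $H$ be the isomorphism type of the induced subgraph $G[S]$. The key observation is that $h$, viewed as a map onto its image $G[S]$, is a \emph{surjective} homomorphism; but it need not be a strong homomorphism onto $G[S]$, because $F$ may contain a non-edge $uv$ with $h(u)h(v)\in E(G)$. However, $h$ \emph{is} a strong homomorphism onto the graph on vertex set $S$ whose edge set is exactly $\{h(u)h(v) : uv\in E(F)\}$ — and this latter graph is a subgraph of $G[S]$ on the full vertex set $S$. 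This suggests the right bookkeeping: group homomorphisms $h$ by the pair consisting of (i) the set $S=h(V(F))$ and (ii) the isomorphism type $K$ of the graph $(S, \{h(u)h(v):uv\in E(F)\})$; note $K$ has the same number of vertices as it has a surjective homomorphism from $F$, and $K$ is a subgraph of $G[S]$ spanning all of $S$.

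The next step is to turn this grouping into the matrix product. For each isomorphism type $K$, the number of strong surjective homomorphisms — equivalently, the number of homomorphisms $h\colon F\to K$ whose image is all of $V(K)$ and which additionally map non-edges of $F$ to non-edges of $K$ — is precisely what we want to relate to $\strhom(F,K)$. Here one must be slightly careful: $\strhom(F,K)$ as defined counts strong homomorphisms that need not be surjective. But if $h\colon F\to K$ is strong and $|V(F)|$ maps onto a proper subset, then... actually I would instead argue as follows. Summing over all $S$ with $G[S]$ of a fixed isomorphism type $H$, and over all spanning subgraphs isomorphic to a fixed $K$: the number of spanning subgraphs of $G[S]$ isomorphic to $K$, summed over the $\sub(H,G)$ choices of $S$ with $G[S]\cong H$, is counted by $\ext(H,K)$-type data. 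The cleanest route is to write
\begin{align}
  \hom(F,G)
  &= \sum_{S\subseteq V(G)} \#\{h\colon F\to G \text{ surjective onto } S\}\\
  &= \sum_{H} \sub(H,G)\cdot \#\{h\colon F\to H \text{ surjective}\},
\end{align}
where the second sum is over isomorphism types $H$ with $|V(H)|=|V(G')|$ for the relevant induced subgraphs — and this is exactly where $\ext$ rather than $\sub$ enters, since we need $|V(H)|=|V(S)|$, i.e.\ $\ext(H,G)$ counts induced subgraphs of $G$ isomorphic to $H$ on $|V(H)|$ vertices. Then the quantity $\#\{h\colon F\to H \text{ surjective onto } V(H)\}$ must be shown to equal $\strhom(F,H)$: a surjective homomorphism from $F$ onto all vertices of $H$ that also reflects non-edges is strong, and conversely every strong homomorphism into $H$ that is vertex-surjective reflects non-edges by definition — but I should double-check whether $\strhom$ is meant to include non-surjective strong homomorphisms, in which case the summation over $H$ (all induced subgraph types, including $H=G$ itself and smaller ones) absorbs exactly the surjectivity, because a strong homomorphism $F\to H$ with proper image $S'\subsetneq V(H)$ would be double-counted under the smaller type $H[S']$; the role of $\ext$ vanishing off the diagonal when vertex counts differ is precisely what prevents this, forcing each $h$ to be counted once, under the type of its own image.

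I expect the main obstacle to be getting this last point exactly right: pinning down whether $\strhom(F,H)$ as defined counts \emph{all} strong homomorphisms or only surjective ones, and making the partition of $\hom(F,G)$ match the product $(\ext\cdot\strhom)(F,G) = \sum_H \ext(H,G)\,\strhom(F,H)$ term by term without over- or under-counting. The resolution is that a strong homomorphism $F\to H$ automatically has its image inducing a subgraph of $H$ isomorphic to the image graph, so summing $\strhom(F,H)$ against $\ext(H,G)$ — which counts induced subgraphs of $G$ on exactly $|V(H)|$ vertices isomorphic to $H$ — reconstructs each homomorphism $F\to G$ exactly once, indexed by the induced subgraph $G[h(V(F))]$ and the strong-homomorphism structure on it. Once the counting is set up correctly the identity is immediate; I would also note in passing that the well-definedness of the product $\ext\cdot\strhom$ follows from the finite support of the rows of $\ext$, as already observed in the text preceding the lemma, so no convergence issues arise.
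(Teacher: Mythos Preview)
There is a genuine gap, and it stems from misreading both the matrix product and the function~$\ext$.

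First, the product: by the usual convention $(AB)_{ij}=\sum_k A_{ik}B_{kj}$, the identity $\hom=\ext\cdot\strhom$ unfolds as
\[
\hom(F,G)=\sum_{H}\ext(F,H)\cdot\strhom(H,G),
\]
with the intermediate graph~$H$ appearing as the \emph{second} argument of~$\ext$ and the \emph{first} argument of~$\strhom$. Your proposal has these reversed throughout. Second, $\ext(F,H)$ vanishes unless $|V(F)|=|V(H)|$; it is not a count of induced subgraphs of the target~$G$, and in the paper's usage it counts supergraphs of~$F$ on the vertex set~$V(F)$ (``extensions'') that are isomorphic to~$H$. Consequently the decomposition you are aiming for---partitioning homomorphisms $h\colon F\to G$ by the image $S=h(V(F))$ and the induced subgraph $G[S]$---produces an identity of the shape $\hom(F,G)=\sum_H \ind(H,G)\cdot(\text{vertex-surjective homs }F\to H)$, which is a different (and also true) relation, but not the one stated.

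The paper's argument works on the \emph{source} side rather than the target side. Given a homomorphism $h\colon F\to G$, define the graph~$F_h$ on~$V(F)$ by declaring $uv\in E(F_h)$ if and only if $h(u)h(v)\in E(G)$. Then $F_h\supseteq F$ (because $h$ is a homomorphism), and $h$ is a strong homomorphism from~$F_h$ to~$G$ (by construction non-edges of~$F_h$ map to non-edges of~$G$). Moreover $F_h$ is the \emph{unique} graph on~$V(F)$ from which $h$ is strong, since the strong condition forces $E(F_h)=\{uv:h(u)h(v)\in E(G)\}$. This sets up a bijection between homomorphisms $h\colon F\to G$ and pairs $(F',h)$ where $F'$ is an extension of~$F$ on~$V(F)$ and $h$ is a strong homomorphism from~$F'$ to~$G$; grouping the pairs by the isomorphism type~$H$ of~$F'$ gives exactly $\sum_H \ext(F,H)\,\strhom(H,G)$. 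Your image-side partition does not recover this, and the conflation of ``strong'' with ``surjective'' in several places of your sketch is a symptom of the mismatch.
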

\begin{figure}[tp]
  \centering
  \begin{tikzpicture}[very thick,inner sep=2pt]
    \node {\includegraphics[scale=.5]{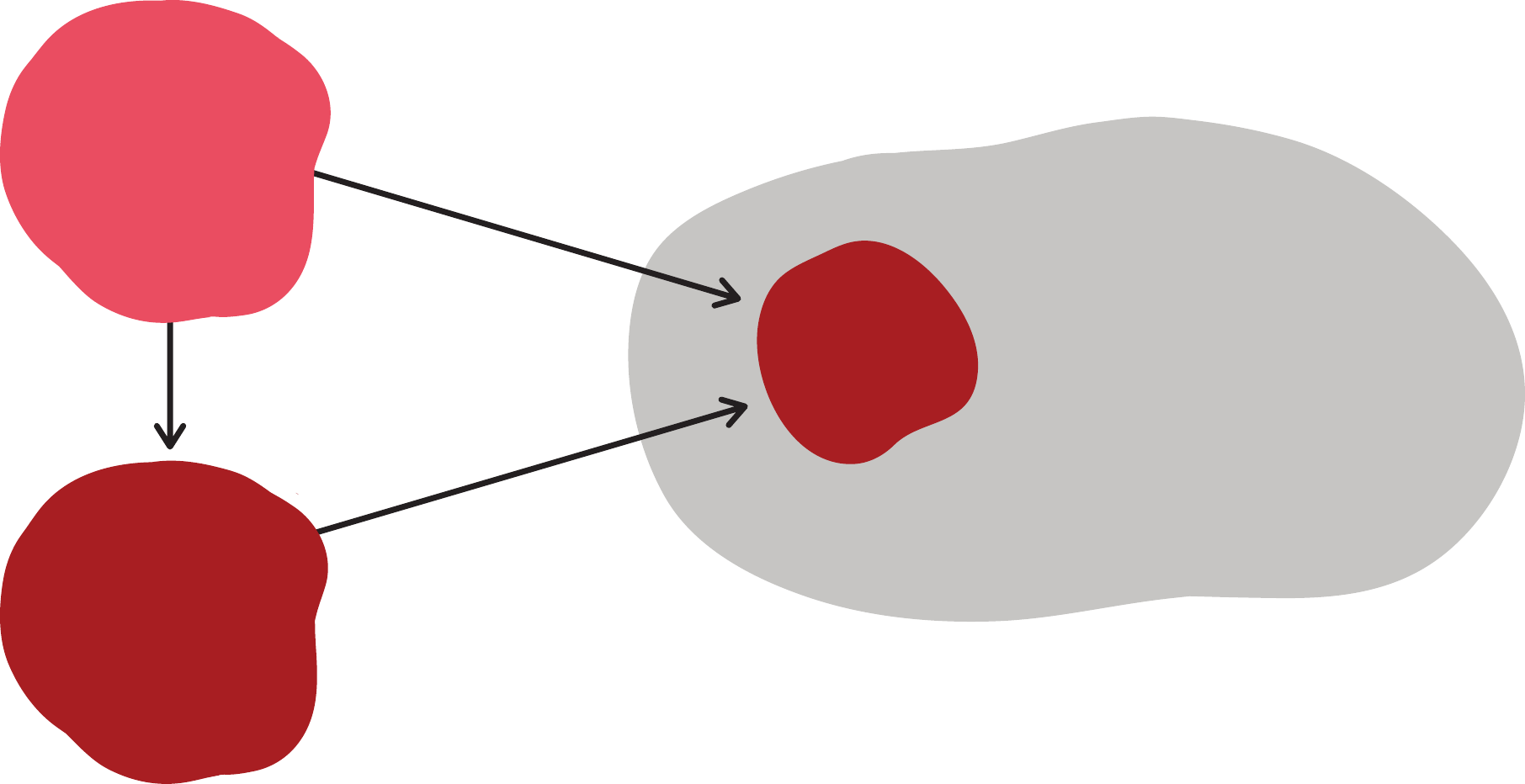}};
    \node[circle,draw] (a2) at (-3.8, 1.6) {};
    \node[circle,draw,fill] (a1) at (-3.2, 1.6) {};
    \node[circle,draw,fill] (a3) at (-3.8, 1.0) {};
    \draw[dotted] (a1) -- (a2);
    \draw (a2) -- (a3);

    \node[circle,draw] (a2) at (-3.8, -1.2) {};
    \node[circle,draw,fill] (a1) at (-3.2, -1.2) {};
    \node[circle,draw,fill] (a3) at (-3.8, -1.8) {};
    \draw (a1) -- (a2) -- (a3);

    \node[circle,draw] (a2) at (0.5, 0.55) {};
    \node[circle,draw,fill] (a3) at (0.5,-0.05) {};
    \draw (a2) -- (a3);

    \node               at ( 5.0, 0.0) {$G$};
    \node[anchor=west]  at (-5.3, 1.4) {$F$};
    \node[anchor=west]  at (-5.3,-1.4) {$F_h$};
    \node[anchor=east]  at (-3.6, 0.1) {$\ext$};
    \node               at (-1.5, 1.4) {$h$};
    \node[anchor=west]  at ( 1.35, 0.3) {$G[\operatorname{im} h]$};
    \node               at (-1.5,-0.9) {$h$};
  \end{tikzpicture}
  \caption{\label{fig:hom-ext-strhom}%
    The proof of Lemma~\ref{lem: hom ext strhom}:
    Every homomorphism~$h$ from~$F$ to~$G$ has a unique extension~$F_h\supseteq F$ such that~$h$ is a strong homomorphism from~$F_h$ to~$G$.
    Depicted are the graph~$F$ (\emph{light shading on the left}), the graph~$G$ (\emph{very light shading on the right}), and the graph~$G[\operatorname{im} h]$ (\emph{dark shading on the right}), which may have fewer vertices (e.g., the two \emph{solid} vertices on the left get mapped to the same \emph{solid} vertex on the right).
    The graph~$F_h$ (\emph{dark shading on the left}) has the same vertex set as~$F$, but gets extend by all possible edges that do not break the homomorphism property of~$h$; doing this ensures that~$h$ becomes a strong homomorphism from~$F_h$ to~$G$.
  }
\end{figure}
\begin{proof}
  Let~$H$ and~$G$ be graphs.
  Let $h$ be a homomorphism from~$H$ to~$G$.
  As depicted in Figure~\ref{fig:hom-ext-strhom}, we define the extension~$H_h$ of~$H$ via the edge relation~$E_{H_h}$ with
  \begin{equation}
    E_{H_h}(u,v) = E_G\paren[\big]{h(u),h(v)} \text{ for all $u,v\in V(H)$}\,.
  \end{equation}
  Then $h$ is a strong homomorphism from~$H_h$ to~$G$ by definition, and~$H_h$ is indeed an extension of~$H$ because~$h$ is a homomorphism from~$H$ to~$G$.
  Moreover, the graph~$H_h$ is the only graph on the vertex set~$V(H)$ such that~$h$ is a strong homomorphism from it.
  Thus we have established a bijection between homomorphisms~$h$ from~$H$ to~$G$ and pairs~$(H',h)$ where~$H'$ is an extension of~$H$ and~$h$ is a strong homomorphism from~$H'$ to~$G$.
  This implies
  \begin{equation}\label{eq: hom ext strhom expanded}
    \hom(H,G)
    =
    \sum_{\substack{H'\supseteq H\\V(H')=V(H)}}
    \strhom(H',G)
    =
    \sum_F
    \ext(H,F)\cdot\strhom(F,G)
    \,,
  \end{equation}
  where the second equality follows by collecting terms for isomorphic graphs~$H'$.
  We arrive at the claimed matrix identity.
\end{proof}

\subsection{Homomorphisms with bag-wise properties}
Let $k$ be a fixed positive integer.
Let $T$ be a width-$k$ tree decomposition of a finite undirected graph~$F$. Recall that $T$ is a rooted tree whose bags are sets $\beta(t)\subseteq V(F)$ for all nodes~$t\in V(T)$.
We further assume that all bags are distinct, all bags at even depths (including the root) have size~$k$, and all bags at odd depths have size~$k+1$.
The pair~$(F,T)$ is called a tree-decomposed graph.
\begin{definition}\label{def:hom variants}
  We define homomorphism numbers for tree-decomposed graphs~$(F,T)$:%
\begin{enumerate}
  \item
    $\hom( (F,T), G) = \hom(F,G)$ is the number of homomorphisms from~$F$ to~$G$.
  \item
    $\biso( (F,T), G)$
    is the number of homomorphisms~$h$ from~$F$ to~$G$ such that, for all $t\in V(T)$, the mapping~$h:\beta(t)\to V(G)$ is an isomorphism from~$F[\beta(t)]$ to~$G[h(\beta(t))]$.
    That is, $\biso((F,T),G)$ is the number of homomorphisms that are bag-wise isomorphisms.
  \item
    $\binj( (F,T), G)$
    is the number of homomorphisms $h$ from~$F$ to~$G$ such that, for all $t\in V(T)$, the mapping~$h:\beta(t)\to V(G)$ is injective.
    That is, it counts bag-wise injective homomorphisms.
  \item
    $\bstrhom( (F,T), G)$
    is the number of homomorphisms~$h$ from~$F$ to~$G$ such that, for all $t\in V(T)$, the mapping~$h:\beta(t)\to V(G)$ is a strong homomorphism from~$F[\beta(t)]$ to~$G[h(\beta(t))]$, that is, it also maps non-edges of bags to non-edges of~$G$.
  \item
    $\dbiso( (F,T), (F',T'))$
    is the number of homomorphisms~$h$ from~$F$ to~$F'$ such that, for all $t\in V(T)$, the set~$h(\beta(t))$ is equal to a bag~$\beta(t')$ of~$T'$, the mapping $h$ is an isomorphism from $F[\beta(t)]$ to $F'[\beta(t')]$, and the corresponding mapping from~$V(T)$ to~$V(T')$ is a depth-preserving and depth-surjective homomorphism from~$T$ to~$T'$.
    Similarly, $\dbsurj( (F,T), (F',T'))$ counts~$h$ only if the latter mapping is depth-preserving and surjective from~$T$ to~$T'$, and
    $\dbsub( (F,T), (F',T'))$ counts~$h$ only if the mapping is an injective homomorphism of~$T$ in~$T'$ with the property that~$T$ and~$T'$ have the same depth.
  \item An isomorphism from $(F,T)$ to $(F',T')$ is an isomorphism~$h$ from~$F$ to~$F'$ such that the corresponding mapping from~$V(T)$ to~$V(T')$ induced on the bags is an isomorphism from~$T$ to~$T'$.
  \item $\bext( (F,T), (F',T'))$ is the number of \emph{bag-wise extension of~$(F,T)$} isomorphic to~$(F',T')$, which are graphs~$(H,T)$ with~$V(H)=V(F)$ such that $H[\beta(t)]$ is an extension of~$F[\beta(t)]$ for every node~$t\in V(T)$.
\end{enumerate}
\end{definition}

We also need a partial order~$\le$ on the set of all tree-decomposed graphs.
We let this be the partial order induced by the lexicographic order on the tuple~$(w,d,n+m)$ computed from~$(F,T)$ by relying on the width~$w$ of~$T$, the maximum degree~$d$ of~$T$, the number~$n$ of vertices of~$F$, and the number~$m$ of edges of~$F$.

We will prove some matrix identities for these matrices, as they are used in the proof of our main result.
We start with a matrix identity for $\bstrhom$ analogous to Lemma~\ref{lem: hom ext strhom}.
For this, we introduce bag-wise extensions.
Note that every row and every column of~$\bext$ has finite support.
\begin{lemma}\label{lem:bhom bext bstrhom}
  We have $\hom = \bext \cdot\bstrhom$.
\end{lemma}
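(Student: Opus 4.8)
The plan is to mirror the proof of Lemma~\ref{lem: hom ext strhom}, but carrying the tree decomposition along bag by bag. Fix a tree-decomposed graph $(F,T)$ and an arbitrary graph $G$. Given a homomorphism $h$ from $F$ to $G$, I define a bag-wise extension $(F_h,T)$ with the same vertex set and the same underlying tree $T$: for each node $t\in V(T)$ and each pair $u,v\in\beta(t)$, put an edge $uv$ in $F_h$ exactly when $h(u)h(v)\in E(G)$. One first checks this is well-defined, i.e.\ that the edge relation does not depend on which bag $t$ containing both $u$ and $v$ we use — but it cannot, since the condition $h(u)h(v)\in E(G)$ refers only to $h$, $u$, $v$ and $G$. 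Moreover $F_h\supseteq F$ within each bag, because $h$ is a homomorphism of $F$, so $(F_h,T)$ really is a bag-wise extension of $(F,T)$; and by construction $h$ restricted to each bag is a strong homomorphism from $F_h[\beta(t)]$ to $G[h(\beta(t))]$, i.e.\ $h$ is a bag-wise strong homomorphism from $(F_h,T)$ to $G$.

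Next I argue that $F_h$ is the \emph{only} bag-wise extension $(H,T)$ of $(F,T)$ (sharing vertex set and tree) for which $h$ is a bag-wise strong homomorphism to $G$: for any node $t$ and $u,v\in\beta(t)$, being a strong homomorphism on that bag forces $uv\in E(H)\iff h(u)h(v)\in E(G)$, pinning down all edges inside bags, and $H$ has no edges outside bags since a tree decomposition covers every edge. Hence $h\mapsto((F_h,T),h)$ is a bijection between homomorphisms $h$ from $F$ to $G$ and pairs consisting of a bag-wise extension $(H,T)$ of $(F,T)$ together with a bag-wise strong homomorphism $h$ from $(H,T)$ to $G$. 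Counting both sides gives
\begin{equation}\label{eq:bhom-bext-bstrhom-expanded}
  \hom((F,T),G)
  =
  \sum_{(H,T)}
  \bstrhom((H,T),G)
  =
  \sum_{(F',T')}
  \bext((F,T),(F',T'))\cdot\bstrhom((F',T'),G)\,,
\end{equation}
where the first sum is over bag-wise extensions $(H,T)$ of $(F,T)$ and the second arises by grouping isomorphic tree-decomposed graphs, $\bext((F,T),(F',T'))$ being exactly the number of such $(H,T)$ isomorphic to $(F',T')$. Since every row and column of $\bext$ has finite support, the right-hand side is a finite sum and all products are defined, so this is precisely the matrix identity $\hom=\bext\cdot\bstrhom$.

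The one point that needs a little care — and is the main thing beyond a verbatim copy of Lemma~\ref{lem: hom ext strhom} — is checking that the constructed $(F_h,T)$ still satisfies the standing conventions we imposed on tree-decomposed graphs: $T$ is unchanged, so its shape, the alternating bag sizes ($k$ at even depths, $k{+}1$ at odd depths), and the requirement that all bags be distinct are inherited, provided we agree that "bag-wise extension" in Definition~\ref{def:hom variants}(7) is taken with respect to the same tree $T$; the only genuine subtlety is that adding edges could in principle make two previously distinct bags induce the same labeled subgraph, but distinctness of bags is a property of the vertex sets $\beta(t)$, not of the induced graphs, so it is preserved. With that observed, \eqref{eq:bhom-bext-bstrhom-expanded} establishes the lemma; the argument is otherwise a straightforward bag-local analogue of the global extension argument in Figure~\ref{fig:hom-ext-strhom}.
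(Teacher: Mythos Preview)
Your proof is correct and follows essentially the same approach as the paper: define the canonical bag-wise extension $F_h$ by pulling back the edge relation of $G$ along $h$ inside each bag, verify that $h$ is a bag-wise strong homomorphism from $(F_h,T)$ to $G$, show uniqueness of this extension, and read off the identity by grouping extensions by isomorphism type. Your treatment is in fact slightly more careful than the paper's (you explicitly check well-definedness across overlapping bags and that the standing conventions on $T$ are preserved), but the argument is the same.
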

\begin{proof}
  The proof is analogous to the proof of Lemma~\ref{lem: hom ext strhom}.
  Let~$(H,T)$ and~$(G,T)$ be tree-decomposed graphs.
  Let $h$ be a homomorphism from~$H$ to~$G$.
  We define the extension~$H_h$ of~$H$ via the edge relation~$E_{H_h}$ that satisfies the following for all~$u,v\in V(H)$:
  \begin{equation}
    E_{H_h}(u,v) =
    \begin{cases}
      E_G\paren[\big]{h(u),h(v)} & \text{if $u$ and~$v$ co-occur in some bag of~$T$,}\\
      E_H\paren[\big]{h(u),h(v)} & \text{otherwise.}
    \end{cases}
  \end{equation}
  Equivalently, to obtain~$H_h$ from~$H$, we add edges between any two non-adjacent vertices~$u$ and~$v$ that occur together in the some bag~$\beta(t)$ of~$T$ and whose image $h(u)h(v)$ forms an edge in~$G$.
  Non-edges of~$H$ that do not occur in any bag remain non-edges in~$H_h$.

  By construction, $h$ is a bag-wise strong homomorphism from~$H_h$ to~$G$, and~$H_h$ is indeed an extension of~$H$.
  Moreover, $H_h$ is the only graph on the vertex set~$V(H)$ such that~$T$ remains a tree decomposition for~$H_h$ and~$h$ is a bag-wise strong homomorphism from $(H_h,T)$ to~$G$.
  Thus we have established a bijection between homomorphisms~$h$ from~$(H,T)$ to~$G$ and pairs~$(H',h)$ where~$(H',T)$ is a bag-wise extension of~$(H,T)$ and~$h$ is a bag-wise strong homomorphism from~$(H',T)$ to~$G$.
  This implies
  \begin{align}
    \hom((H,T),G)
    &=
    \sum_{\substack{H'\\\text{$(H',T)$ extends $(H,T)$ bag-wise}}}
    \bstrhom((H',T),G)\\
    &=\label{eq: bhom bext bstrhom expanded}
    \sum_{(F,T')}
    \bext((H,T),(F,T'))\cdot\bstrhom((F,T'),G)
    \,,
  \end{align}
  where the second equality follows by collecting terms for isomorphic tree-decomposed graphs~$(H',T)$ and the sum is over all isomorphism types of tree-decomposed graphs~$(F,T')$.
  Since~$(H,T)$ has only finitely many bag-wise extensions, the sums are indeed finite.
  We arrive at the claimed matrix identity.
\end{proof}
If~$F$ is a tree, we can choose a tree decomposition~$T$ of width~$1$ whose bags don't contain any non-edges, and so $\bstrhom(F,T)$ is equal to~$\hom(F,T)$.
This explains why we did not have to deal with strong homomorphisms in the proof of Theorem~\ref{theo:1}, where we established the equivalence between color refinement and homomorphism numbers from trees.

The next matrix identity is an analogue of Lemma~\ref{lem: dhom dsurj sub}.
\begin{lemma}
  $\dbiso=\dbsurj\cdot\dbsub$ is an $LU$-decomposition and $\dbsurj$ and $\dbsub$ are invertible.
\end{lemma}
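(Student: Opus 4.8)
The plan is to mimic, essentially verbatim, the proof of Lemma~\ref{lem: dhom dsurj sub} (whose detailed version appears in Appendix~\ref{app:tree}), replacing ``rooted tree'' by ``tree-decomposed graph $(F,T)$'' throughout and keeping track of the extra bag-wise structure. First I would observe that all three matrices are defined with respect to the partial order $\le$ on tree-decomposed graphs induced by the lexicographic order on $(w,d,n+m)$: if $\dbsurj((F,T),(F',T'))\neq 0$, then there is a depth-preserving surjective homomorphism of the underlying trees together with a vertex-surjective bag-respecting homomorphism $F\to F'$, which forces $(F',T')$ to have width $\le w$, tree-degree $\le d$, and at most as many vertices and edges as $F$; hence $\dbsurj$ is lower triangular for $\le$. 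Dually, $\dbsub((F,T),(F',T'))\neq 0$ forces $F'$ (equal depth, same width, and at least as many vertices/edges) to dominate $(F,T)$, so $\dbsub$ is upper triangular. Next I would compute the diagonal entries: $\dbsurj((F,T),(F,T))$ counts bag-respecting depth-preserving surjections of $(F,T)$ onto itself, which are exactly the automorphisms of $(F,T)$, so the diagonal is $\aut((F,T))\neq 0$; and $\dbsub((F,T),(F,T))=1$ since the only subgraph isomorphic to $(F,T)$ inside itself is itself. By the forward-substitution argument for locally finite posets (the lemma in Appendix~\ref{app:treewidth}, applied after dividing $\dbsurj$ by its diagonal), both matrices are invertible, with $\dbsurj^{-1}$ lower triangular and $\dbsub^{-1}$ upper triangular.

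The heart of the proof is the identity $\dbiso=\dbsurj\cdot\dbsub$, i.e.\ for all tree-decomposed graphs $(F,T)$ and $(F',T')$,
\begin{equation*}
  \dbiso((F,T),(F',T'))=\sum_{(F'',T'')}\dbsurj((F,T),(F'',T''))\cdot\dbsub((F'',T''),(F',T'))\,,
\end{equation*}
with the sum well-defined because only finitely many $(F'',T'')$ make $\dbsurj((F,T),(F'',T''))$ nonzero. I would prove this by a bijection exactly parallel to the one in the appendix: a homomorphism $h$ from $(F,T)$ to $(F',T')$ of the $\dbiso$-type corresponds to the pair $(h',S)$ where $S=h(V(F))$ (together with the induced subtree $S_T\subseteq V(T')$ of bags hit by the tree-map), and $h'=\varphi_S\circ h$ for a fixed isomorphism $\varphi_S$ from $(F',T')[S]$ onto its isomorphism type $(F'',T'')$; here $h'$ is a bag-respecting depth-preserving \emph{totally} surjective homomorphism onto $(F'',T'')$, and $S$ ranges over the vertex sets of sub-tree-decomposed-graphs of $(F',T')$ isomorphic to $(F'',T'')$. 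One checks that the image $S$ together with $S_T$ really is of the right form (contains the root bag and a deepest bag of $T'$ since $h$ is depth-surjective, and is connected since $F$ is connected along its tree decomposition), that the map $h\mapsto(h',S)$ is injective because $\varphi_S$ is a bijection, and that it is surjective because given $(h',S)$ one recovers $h=\varphi_S^{-1}\circ h'$. Summing over the type $(F'',T'')$ and factoring the independent choices of $h'$ (there are $\dbsurj((F,T),(F'',T''))$ of them) and of $S$ (there are $\dbsub((F'',T''),(F',T'))$ of them) gives the displayed identity.

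The main obstacle I anticipate is not the combinatorial bijection itself—which transfers almost mechanically—but the bookkeeping needed to verify that ``bag-respecting'' structure is preserved under composition and restriction: one must confirm that restricting $(F',T')$ to the image $S$ yields a genuine tree-decomposed graph in the sense of our normal form (distinct bags, size-$k$ bags at even depth, size-$(k+1)$ bags at odd depth), and that the induced tree-map of $h$ composed with $\varphi_S$ remains depth-preserving and surjective. A careful but routine check shows that the induced subtree of bags, with duplicates collapsed, still alternates sizes correctly because $h$ maps each bag $\beta(t)$ bijectively onto a bag $\beta(t')$ of the same size, so depth parity is respected; the connectivity and ``contains a deepest leaf'' conditions are exactly what the $\dbiso$-definition of depth-surjectivity guarantees. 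Once these normal-form details are dispatched, the $LU$-decomposition and invertibility follow immediately from the triangularity and nonzero diagonals established in the first step.
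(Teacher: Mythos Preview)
The paper does not actually supply a proof of this lemma; it simply states it as ``an analogue of Lemma~\ref{lem: dhom dsurj sub}'' and moves on. Your proposal fills in precisely that analogue---triangularity with respect to the partial order on tree-decomposed graphs, nonzero diagonals, and the image-factorization bijection $h\mapsto(\varphi_S\circ h,S)$---and the bookkeeping concerns you flag (bag sizes and depth parity under restriction) are the right ones to check, so the approach is correct and coincides with the paper's intended argument.
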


\subsection{Weisfeiler--Leman tree unfoldings and homomorphisms}
Recall that the \emph{atomic type} $\atp(G,\bar v)$ of
a $k$-tuple $\bar v=(v_1,\ldots,v_k)$ of vertices of a graph~$G$ is a description
of the labeled subgraph induced by $G$ on this tuple; formally
we may describe it by a $(k\times k)$-matrix $A$ with entries
$A_{ij}=2$ if $v_i=v_j$ and $A_{ij}=1$ if $v_iv_j\in E$ and $A_{ij}=0$
otherwise.
A matrix $A\in\set{0,1,2}^{k\times k}$ is called an \emph{atomic type} if it is symmetric and has $2$'s on the diagonal.

We define the Weisfeiler-Leman tree unfolding of a graph, which can be viewed as the strategy tree of an Ehrenfeucht–Fraïssé game with~$k$ pebbles.
\begin{definition}\label{def: WL unfolding}
  Let $k$ be a positive integer, let $G$ be a graph, and let $v_1,\dots,v_k\in V(G)$ be distinct vertices.
  The \emph{WL-tree unfolding at $(v_1,\dots,v_k)$} is the tree-decomposed graph~$(F,T)$ that is constructed together with a bag-wise isomorphic homomorphism~$\pi$ from~$(F,T)$ to~$G$ as follows:
  \begin{enumerate}
    \item
      We start with $F$ having~$k$ vertices~$1,\dots,k$ and~$T$ being the trivial tree
      decomposition with a single bag~$\beta(t):=V(F)$ at the unique node~$t\in
      V(T)$.
      Let $\pi: V(F)\to \set{v_1,\dots,v_k}$ be the function with $\pi(i)=v_i$ for $i\in\set{1,\dots,k}$.
      Let the edges of $F$ be chosen such that $\pi$ is an isomorphism from $F$ to~$G[\set{v_1,\dots,v_k}]$.
    \item\textit{(Introduce nodes)}
      If $t$ is a leaf node of~$T$ with $\abs{\beta(t)}=k$, then for each $w\in V(G)$ with~$w\not\in \pi(\beta(t))$, we do the following:
      \begin{enumerate}
        \item
          Add a fresh child node~$t_w$ to~$t$ in~$T$.
        \item
          Add a fresh vertex~$f$ to~$F$ and extend~$\pi$ with $[f\mapsto w]$
        \item
          Let the bag of~$t_w$ be defined via
          $\beta(t_w)=\beta(t)\cup\set{f}$.
        \item
          Add edges between~$f$ and~$\beta(t)$ to~$F$ in the unique way so that~$\pi$ is an isomorphism from~$F[\beta(t_w)]$ to~$G[\pi(\beta(t_w))]$.
      \end{enumerate}
    \item\textit{(Forget nodes)}
      If $t$ is a leaf node of~$T$ with $\abs{\beta(t)}=k+1$, then for each $j\in\set{1,\dots,k}$, we do the following:
      \begin{enumerate}
        \item
          Add a fresh child node~$t_j$ of~$t$ to~$T$.
        \item
          Let $f$ be the vertex that was introduced at~$t$, that is, we have $\beta(t')\setminus\beta(t)=\set{f}$ for the parent~$t'$ of~$t$.
        \item 
          Let $\set{f_1,\dots,f_k}=\beta(t)\setminus\set{f}$ where the $f_i$ are sorted in a canonical way.
        \item
          We define $\beta(t_j):=\beta(t)\setminus\set{f_j}$.
      \end{enumerate}
  \end{enumerate}
  Clearly, applying rules (2) and (3) above a finite number of times constructs a tree-decomposed graph~$(F,T)$ and a bag-wise isomorphic homomorphism~$\pi$.
  If we exhaustively expand leaves of~$T$ at depth less than~$d$ and then stop the process, all leaves of the final tree~$T$ are at depth~$d$ and we say that $(F,T)$ is \emph{the depth-$d$ WL-tree unfolding of~$G$ at~$\bar v$}.

  For all tree-decomposed graphs~$(F,T)$, let
  $\wl( (F,T), G)$ be the number of tuples $\bar v:= (v_1,\dots,v_k)$ of vertices in~$G$ for which the WL-tree unfolding~$(F',T')$ at $\bar v$ is isomorphic to~$(F,T)$.
\end{definition}

We remark already here that, if $(T,F)$ has leaves at different depths or if it has non-leaves with more than~$n$ children, then $\wl((F,T),G)=0$ holds.
We now state the equivalence between the $k$-dimensional Weisfeiler--Leman algorithm and the homomorphism numbers from treewidth-$k$ graphs.
\begin{theorem}\label{thm:high-dim}
  Let $k$ be a positive integer, and let $G$ and $H$ be finite undirected graphs.
  Then the following are equivalent:
  \begin{enumerate}
    \item\label{item: WL does not distinguish}
      $C_\infty^k(G)=C_\infty^k(H)$
    \item\label{item: WL-unfolding does not distinguish}
      For all finite undirected graphs~$F$ with a tree decomposition~$T$ of width at most~$k$, we have $\wl( (F,T), G) = \wl( (F,T), H)$.
    \item\label{item: bisovector does not distinguish}
      For all finite undirected graphs~$F$ with a tree decomposition~$T$ of width at most~$k$, we have $\biso( (F,T), G) = \biso( (F,T), H)$.
    \item\label{item: homvector does not distinguish}
      For all finite undirected graphs~$F$ with $\tw(F)\le k$, we have $\hom( F, G) = \hom( F, H)$.
  \end{enumerate}
\end{theorem}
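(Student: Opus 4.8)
The plan is to transfer the architecture of the proof of Theorem~\ref{theo:1} to the present setting, with tree-decomposed graphs in place of rooted trees, the WL-tree unfoldings in place of the tree unfoldings~$T(G,v)$, $k$-WL in place of color refinement, and the counting matrices $\biso,\dbiso,\dbsurj,\dbsub,\wl,\bext,\bstrhom$ in place of $\hom,\dhom,\dsurj,\dsub,\crr,\ext,\surj$; note that for a tree decomposition of width~$1$ the bags contain no non-edges, so $\bstrhom=\hom$ there and this extra layer is invisible, which is why it did not appear in Section~\ref{sec:tree}. We prove $(1)\Leftrightarrow(2)$, $(2)\Leftrightarrow(3)$, and $(3)\Leftrightarrow(4)$.

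The equivalence $(1)\Leftrightarrow(2)$ is the high-dimensional version of Lemma~\ref{lem: colref cr}. First I would build, by induction on~$i$, a canonical bijection between the colors that $k$-WL can assign to a $k$-tuple after~$i$ refinement rounds and the isomorphism types of depth-$2i$ WL-tree unfoldings, using two levels of the unfolding per round (one ``introduce'' level of bag size $k+1$ and one ``forget'' level of bag size~$k$). The base case matches $C_0^k(\bar v)=\atp(G,\bar v)$ with the root bag; in the inductive step, the component $M_i(\bar v)$ of $C_{i+1}^k(\bar v)$ matches, summand by summand over $w\in V(G)$, the atomic type recorded at the introduce node~$t_w$ together with the depth-$2i$ subtrees hanging below the $k$ forget nodes beneath~$t_w$, while the diagonal summands with $w\in\pi(\beta(t))$ that the unfolding omits carry nothing beyond what is already present in $C_i^k(\bar v)$. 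It then follows that the numbers $\wl((F,T),G)$ over depth-$2i$ tree-decomposed graphs~$(F,T)$ encode exactly the multiset $\msetc{C^k_i(G;\bar v)}{\bar v\in V(G)^k}$; taking~$i$ past the stabilization round of both graphs (which is harmless since two graphs with different vertex counts are already separated at depth~$0$) yields the equivalence.

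For $(2)\Leftrightarrow(3)$ I would repeat the proof of Theorem~\ref{theo:1} almost verbatim. The structural ingredient, analogous to Lemma~\ref{lem:hom dhom cr}, is the matrix identity $\biso=\dbiso\cdot\wl$: every bag-wise isomorphic homomorphism from~$(F,T)$ to~$G$ factors uniquely as a $\dbiso$-map from~$(F,T)$ into the WL-tree unfolding of~$G$ at the image of the root bag of~$T$, cut off at the depth of~$T$, followed by the canonical unfolding homomorphism to~$G$; the sum over~$(F',T')$ is finite because only unfoldings whose tree has maximum degree at most~$n$ can occur. The direction $(2)\Rightarrow(3)$ is then immediate. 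For $(3)\Rightarrow(2)$ I would use the $LU$-decomposition $\dbiso=\dbsurj\cdot\dbsub$ with $\dbsurj,\dbsub$ invertible, cancel the lower-triangular factor $\dbsurj^{-1}$, and then---exactly as in Section~\ref{sec:tree}---pass to the principal submatrix of $\dbsub$ indexed by tree-decomposed graphs whose underlying tree has maximum degree at most~$n$, which is still invertible and whose inverse may be applied because for fixed width, degree, and depth there are only finitely many tree-decomposed graphs.

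Finally, $(3)\Leftrightarrow(4)$ follows from a chain of matrix identities over tree-decomposed graphs, using that $\hom((F,T),G)$ does not depend on~$T$, so that statement~$(4)$ is equivalent to the agreement of $\hom((F,T),\cdot)$ over all tree-decomposed graphs of width at most~$k$. Starting from $\hom=\bext\cdot\bstrhom$ (Lemma~\ref{lem:bhom bext bstrhom}), which records the extra within-bag edges, and then factoring a bag-wise strong homomorphism through the quotient by the vertices it identifies to obtain $\bstrhom=\bsurj\cdot\biso$, one arrives at $\hom=\bext\cdot\bsurj\cdot\biso$ with all matrix products well-defined; here $\bext$ is block-diagonal with finite blocks and $\bsurj$ is lower triangular with non-zero diagonal entries $\aut((F,T))$. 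Statement~$(3)$ then yields~$(4)$ at once, and for $(4)\Rightarrow(3)$ one inverts, applying first $\bext^{-1}$ (which has finite-support rows, since $\bext$ is finite-block) and then $\bsurj^{-1}$ (lower triangular, so only the easy direction of infinite-matrix inversion is needed) to recover $\biso(G)$ from $\hom(G)$. I expect the main obstacle to be the combinatorial bookkeeping in the two structural steps: making $k$-WL's refinement step correspond exactly to the introduce/forget unfolding in $(1)\Leftrightarrow(2)$, and, in the identity $\hom=\bext\cdot\bsurj\cdot\biso$, verifying that the quotient of a tree-decomposed graph of width at most~$k$ by a bag-wise strong homomorphism is again a tree-decomposed graph of width at most~$k$---the delicate point being that such a homomorphism may identify vertices that never co-occur in a bag, so one must check that the induced tree still satisfies the connectivity condition of a tree decomposition.
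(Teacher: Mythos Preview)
Your plan matches the paper's proof almost exactly: the same three equivalences, the same bijection between $k$-WL colors and depth-$2i$ WL-tree unfoldings for (1)$\Leftrightarrow$(2), and the same matrix identity $\biso=\dbiso\cdot\wl$ together with the $LU$-decomposition $\dbiso=\dbsurj\cdot\dbsub$ and the restriction-to-bounded-degree trick for (2)$\Leftrightarrow$(3). The only substantive difference is in (3)$\Leftrightarrow$(4): you factor $\hom=\bext\cdot\bsurj\cdot\biso$ by first applying Lemma~\ref{lem:bhom bext bstrhom} and then quotienting the resulting bag-wise strong homomorphism, whereas the paper writes $\hom=\surj\cdot\bext\cdot\aut^{-1}\cdot\biso$, collapsing first (simultaneously on the graph $F$ and on the tree $T$) and only then extending within bags. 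Your order is arguably the more natural one given that the paper proves Lemma~\ref{lem:bhom bext bstrhom} precisely as a building block; the paper's order, by quotienting $T$ along with $F$ so that bags which coincide after the collapse become a single node of $T'$, absorbs part of the connectivity worry you flag at the end. Either way, the paper's own argument for (3)$\Leftrightarrow$(4) is a one-paragraph sketch that does not spell out how the quotient interacts with the rigid alternating bag-size convention, so the obstacle you identify is genuine for both routes and not a defect peculiar to your plan.
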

\begin{proof}
  ``\ref{item: WL does not distinguish} is equivalent to \ref{item: WL-unfolding does not distinguish}'':
  The proof is almost entirely syntactical, and a generalization of the proof of Lemma~\ref{lem: colref cr}, which establishes the case~$k=1$.
  In particular, (a) the object~$C^k(G,\bar v)$ constructed by the $k$-dimensional Weisfeiler--Leman algorithm implicitly constructs a WL-tree unfolding at~$\bar v$, and (b) from the WL-tree unfolding at~$\bar v$, we can reconstruct the entire object~$C^k(G,\bar v)$.
  These two claims imply the equivalence.
  For claim~(b), we define the object~$\tilde C^k_i(G,\bar v)$ modified from~$C^k_i(G,\bar v)$ in such a way that we only recurse on tuples that contain \emph{distinct} vertices.
  That is,
  $\tilde C^k_{i+1}(G,v_1,\dots,v_k)$
  is defined as
  \begin{equation}
    \msetc{
    \big(
    \atp(G,\bar v,w),
    \tilde C^k_i(v_1,\dots,v_{k-1},w),
    \dots,
    \tilde C^k_i(w,v_2,\dots,v_{k})
  \big)
    }{
    w\in V\setminus\set{v_1,\dots,v_k}
    }
  \end{equation}
  All information about~$\tilde C^k_i$ is contained in~$C^k_i$, since we can simply ignore atomic types that contain $2$s. Conversely, the object~$C^k_i$ can be reconstructed from~$\tilde C^k_i$ by recursively adding elements for~$w\in\set{v_1,\dots,v_k}$.
  Claim~(a) follows directly because~$\tilde C^k_i(G,\bar v)$ is just a different notation for the depth-$2i$ WL-tree unfolding at~$\bar v$.

  ``\ref{item: WL-unfolding does not distinguish} is equivalent to \ref{item: bisovector does not distinguish}'':
  In analogy to Lemma~\ref{lem:hom dhom cr}, we have the following identity:
  \begin{equation}
    \biso((F,T), G)
    =
    \sum_{(F',T')} \dbiso((F,T), (F',T')) \cdot \wl((F',T'),G)\,.
  \end{equation}
  To prove the identity, let $(F,T)$ be a tree-decomposed graph such that~$T$ has depth~$d$.
  (Note that~$T$ may have leaves at different depths.)
  The sum is over all isomorphism types~$(F',T')$ of tree-decomposed graphs.
  Since $\dbiso((F,T),(F',T'))=0$ holds if~$T'$ has depth~$>d$ or nodes with $>n$ children, the sum is finite and thus well-defined.

  Consider a bag-wise isomorphic homomorphism~$h$ from~$(F,T)$ to~$G$.
  Let $1,\dots,k$ be the vertices in the root bag of~$T$, and let~$v_i=h(i)$ for all $i\in\set{1,\dots,k}$.
  Let~$(F',T')$ be the WL-tree unfolding of depth~$d$ at~$\bar v$ in~$G$.
  Let~$\pi:V(F')\to V(G)$ be the bag-wise isomorphic homomorphism constructed during this unfolding.
  Now $h$ can be split into two steps: First, map $(F,T)$ to $(F',T')$ using a homomorphism~$\sigma$, then map into~$G$ using~$\pi$.
  To define~$\sigma$, we map the first bag of~$T$ to the first bag of~$T'$ in order.
  We continue inductively:
  If~$t$ is a node of~$T$ at an even depth, ~$\sigma(t)=t'$ holds, and $r$ is a child of~$T$ where a vertex~$f$ is introduced.
  Then let~$r'$ be the child of~$t'$ in~$T'$ where the vertex~$\pi_{r'}^{-1}(h(f))$ is introduced.
  Forget bags are analogous, and the mapping~$\sigma$ constructed in this way is bag-wise isomorphic and depth-surjective.
  Since the objects~$h$ and $(\sigma,\bar v)$ are in one-to-one correspondence, the claimed identity follows.
  The matrix~$\dbiso$ is invertible, for which reason the claimed equivalence ``$2\Leftrightarrow 3$'' of the Theorem follows.
  
  ``\ref{item: bisovector does not distinguish} is equivalent to \ref{item: homvector does not distinguish}'':
  Let $F$ be a graph and let~$T$ be a width-$k$ tree-decomposition of~$F$.
  We have the following identity:
  \begin{equation}
    \hom((F,T), G)
    =
    \sum_{(F',T')} (\surj\cdot\bext)(F',T')\cdot\biso((F',T'),G)/\aut(F',T')\,.
  \end{equation}
  Here, $\surj$ is the number of all homomorphisms from~$F$ to $F'$ that are vertex- and edge-surjective, such that every bag $\beta(t)$ for $t\in V(T)$ is mapped to a bag $\beta(t')$ for $t'\in V(T')$, and the latter mapping is a surjective homomorphism from~$T$ to~$T'$.
  This matrix~$\surj$ is invertible.
  Moreover, $\aut(F',T')$ is equal to the number of isomorphisms from~$(F',T')$ to~$(F',T')$.
  Writing $\aut$ as a diagonal matrix, equation corresponds to the matrix identity
  $\hom=\surj\cdot\bext\cdot\aut^{-1}\cdot\biso$.
  Since the matrices~$\surj$, $\bext$, $\aut^{-1}$ are invertible when restricting them to the finite submatrices whose indices~$(F',T')$ have depth at most~$d$, we obtain the equivalence claimed by the theorem.
\end{proof}

\section{Homomorphisms from Small Pathwidth}
\label{app:pathwidth}

Let $k$ be a fixed positive integer. Let $P$ be width-$k$ path decomposition
of a finite undirected graph $F$, where $P=(X_1,Y_1,X_2,\dots,X_{\ell})$. 
Here, $\vert X_1 \vert = \dots = \vert X_\ell \vert = k$, 
and $\vert Y_1 \vert = \dots = \vert Y_{\ell-1}\vert = k+1$. 
Also, $\ell$ is the length parameter of the decomposition. 
We define a conditional variant of $\biso((F,P),G')$ as follows. 
Given a graph $G$, let $\bisocond{(F,P),G}{\begin{smallmatrix} u_1 \dots u_k \\  v_1 \dots v_k  \end{smallmatrix}}$ denote 
the number of bag-wise isomorphic homomorphisms from $F$ to $G$ which, in addition, map the vertices $u_1,\dots,u_k \in V(F)$ 
to $v_1,\dots,v_k \in V(G)$ respectively.

Let us now fix graphs $G$ and $H$ with vertex sets $V,W$,
respectively, such that the system
$\Liso[k+1](G,H)$ has a real solution $(X_\pi)$, where $\pi$ ranges
over all subsets of $V\times W$ of size at most $k+1$.

The following lemma shows how to ``transfer'' the conditional
bag-wise isomorphic homomorphism numbers 
across the graphs $G$ and $H$.
 
\begin{lemma}\label{lem:transfer}
Let $F$ be a finite undirected graph with a path decomposition $P$ of width $k$,
where $P=(X_1,Y_1,X_2,\dots,X_{\ell})$. Let $X_1 =\{u_1,\dots,u_k\}$.
Then  for all $(v_1,\dots,v_k) \in V^k$, 
\begin{align*}
  \bisocond{(F,P),G}{\begin{smallmatrix}
                      u_1 \dots u_k \\
                      v_1 \dots v_k
                     \end{smallmatrix}
                     } = \displaystyle\sum_{\substack{(w_1,\dots,w_k) \\ \in W^k}} X_{ \{(v_1, w_1),\dots, (v_k,w_k)\} } \,\bisocond{(F,P),H}{ \begin{smallmatrix}
                      u_1 \dots u_k \\
                      w_1 \dots w_k
                     \end{smallmatrix} }
\end{align*}
\end{lemma}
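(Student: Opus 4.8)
The plan is to prove the lemma by induction on the number of bags of the path decomposition. First I would put $P$ into normal form: by inserting intermediate bags we may assume that every two consecutive bags differ by exactly one vertex, one of them being obtained from the other by adding a single vertex. Passing to such a refinement changes neither the underlying graph $F$, nor its width, nor any of the conditional counts $\bisocond{(F,P),G}{\cdot}$, because each inserted bag is contained in a neighbouring bag and ``isomorphic on a bag'' automatically implies ``isomorphic on every sub-bag'', while the first bag $X_1$ is left untouched. It is then convenient to prove, by induction on $m$, the following more general statement: for every path decomposition $Q=(B_1,\dots,B_m)$ of width $\le k$ (in the above normal form) of a graph $F$, writing $B_1=\{u_1,\dots,u_s\}$ with $s\le k+1$, and for all $(v_1,\dots,v_s)\in V^s$,
\[
  \bisocond{(F,Q),G}{\perm{v}{1}{s}}
  \;=\;\sum_{(w_1,\dots,w_s)\in W^s} X_{\{(v_1,w_1),\dots,(v_s,w_s)\}}\cdot\bisocond{(F,Q),H}{\perm{w}{1}{s}}\,.
\]
The set $\{(v_i,w_i):i\le s\}$ has size at most $s\le k+1$, so $X_{\{(v_i,w_i)\}}$ is one of the variables of $\Liso[k+1](G,H)$, and the lemma is the case $s=k$ (with $Q$ the normal-form refinement of $P$, whose first bag is still $X_1$).

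For the base case $m=1$ we have $F=F[B_1]$, so $\bisocond{(F,Q),G}{\perm{v}{1}{s}}$ equals $1$ when the atomic types of $(u_1,\dots,u_s)$ in $F$ and of $(v_1,\dots,v_s)$ in $G$ agree, and $0$ otherwise, and likewise over $H$. If these atomic types disagree, both sides vanish: on the right a nonzero summand would need $\{(v_i,w_i)\}$ to be a partial isomorphism (otherwise $X_{\{(v_i,w_i)\}}=0$ by~\eqref{eq:pk}) and the atomic type of $(w_1,\dots,w_s)$ in $H$ to agree with that of $(u_1,\dots,u_s)$ in $F$, and these two facts would force the types of $\bar u$ and $\bar v$ to coincide. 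If they agree, the left side is $1$, the $v_i$ are pairwise distinct, and the nonzero summands on the right are exactly those with $\{(v_i,w_i)\}$ a partial isomorphism, each contributing its coefficient times $1$; so the right side equals $\sum_{(w_1,\dots,w_s)\in W^s}X_{\{(v_i,w_i)\}}$, which telescopes down to $X_\emptyset=1$ by repeatedly applying~\eqref{eq:ck2} (here the distinctness of the $v_i$ and $s-1\le k$ are used) together with~\eqref{eq:ck3}.

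For the inductive step let $m\ge 2$ and set $Q'=(B_2,\dots,B_m)$, a normal-form path decomposition of width $\le k$ of $F':=F[B_2\cup\dots\cup B_m]$ with $m-1$ bags, to which the induction hypothesis applies. If the transition $B_1\to B_2$ adds a vertex, i.e.\ $|B_1|=s\le k$ and $B_2=B_1\cup\{p\}$, then $F'=F$ and the bag-wise isomorphic homomorphisms of $(F,Q)$ and of $(F,Q')$ coincide (the bag condition on $B_1$ follows from that on $B_2$); summing over the image $z$ of $p$ gives
\[
  \bisocond{(F,Q),G}{\perm{v}{1}{s}}
  \;=\;\sum_{z\in V}\bisocond{(F,Q'),G}{\begin{smallmatrix} u_1\dots u_s\,p\\ v_1\dots v_s\,z\end{smallmatrix}}\,.
\]
Applying the induction hypothesis to each summand (the first bag of $Q'$ has size $s+1$), interchanging the sums, using~\eqref{eq:ck1} to collapse $\sum_{z\in V}X_{\pi\cup\{(z,z')\}}$ to $X_\pi$ with $\pi=\{(v_i,w_i):i\le s\}$ (legal since $|\pi|\le s\le k$), and summing back over the extra $H$-side vertex, yields the claim. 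If instead the transition forgets a vertex, say $B_1=\{u_1,\dots,u_s\}$ with $p:=u_s$ and $B_2=\{u_1,\dots,u_{s-1}\}$, then $V(F)=V(F')\cup\{p\}$ with $p\notin V(F')$ and $N_F(p)\subseteq\{u_1,\dots,u_{s-1}\}$, and a bag-wise isomorphic homomorphism of $(F,Q)$ is one of $(F',Q')$ together with an image of $p$ consistent with the bag $B_1$; concretely
\[
  \bisocond{(F,Q),G}{\begin{smallmatrix} u_1\dots u_{s-1}\,p\\ v_1\dots v_{s-1}\,v_s\end{smallmatrix}}
  \;=\;\delta_G\cdot\bisocond{(F',Q'),G}{\perm{v}{1}{s-1}}\,,
\]
where $\delta_G\in\{0,1\}$ is $1$ exactly when $v_s\notin\{v_1,\dots,v_{s-1}\}$ and $v_s\sim_G v_i\iff p\sim_F u_i$ for all $i<s$, and similarly over $H$ with an indicator $\delta_H$. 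Substituting the induction hypothesis for $(F',Q')$ into the left side and the two factorizations into both sides, the claim reduces, for each fixed $(w_1,\dots,w_{s-1})$, to the identity
\[
  \sum_{w_s\in W}X_{\pi\cup\{(v_s,w_s)\}}\cdot\delta_H\;=\;\delta_G\cdot X_\pi\,,\qquad \pi:=\{(v_1,w_1),\dots,(v_{s-1},w_{s-1})\}\,.
\]
If $\delta_G=0$, every left-hand term vanishes, since $X_{\pi\cup\{(v_s,w_s)\}}\ne 0$ forces $\pi\cup\{(v_s,w_s)\}$ to be a partial isomorphism by~\eqref{eq:pk}, which together with $\delta_H=1$ would give $\delta_G=1$. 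If $\delta_G=1$, then for every $w_s$ with $\delta_H=0$ the variable $X_{\pi\cup\{(v_s,w_s)\}}$ already vanishes (again by~\eqref{eq:pk}), so the left side equals $\sum_{w_s\in W}X_{\pi\cup\{(v_s,w_s)\}}=X_\pi$ by~\eqref{eq:ck2} (since $|\pi|\le s-1\le k$). This completes the induction.

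The main obstacle is the forget case, and in particular the last displayed identity: this is the only point where the partial-isomorphism constraints~\eqref{eq:pk} must be weighed against the atomic-type compatibility conditions $\delta_G,\delta_H$ forced by the bags, and one has to verify, in both directions, that given $\delta_G$ the incompatibility of $w_s$ with $(w_1,\dots,w_{s-1})$ is equivalent to $X_{\pi\cup\{(v_s,w_s)\}}=0$. By contrast the ``add a vertex'' case is straightforward bookkeeping powered by~\eqref{eq:ck1}, and the base case is essentially the $\ell=1$ instance of the lemma plus the telescoping identity $\sum_{\bar w}X_{\{(v_i,w_i)\}}=1$.
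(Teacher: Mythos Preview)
Your proof is correct and follows essentially the same approach as the paper's. The paper also argues by induction along the path decomposition, using \eqref{eq:ck1} and \eqref{eq:ck2} to introduce and collapse the extra coordinate and \eqref{eq:pk} to transfer the adjacency indicators from $G$ to $H$. The only organizational difference is that the paper keeps the first bag at size exactly $k$ and in a single inductive step passes from $X_1$ through $Y_1$ to $X_2$ (one introduce followed by one forget), whereas you strengthen the induction hypothesis to allow a first bag of any size $s\le k+1$ and handle the introduce and forget transitions in separate steps; since the given decomposition $P=(X_1,Y_1,X_2,\dots,X_\ell)$ already has consecutive bags differing by a single vertex, your normal-form refinement is in fact vacuous here.
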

\begin{proof}
 The proof is by induction on the length parameter $\ell$. 
 The base case $\ell = 1$ corresponds to the situation when $P$ consists of a single bag $X_1$. 
 Denote $\tau = \atp(F,(u_1,\dots,u_k))$. 
 Clearly, $\bisocond{(F,P),G}{\perm{v}{1}{k}} = 1$ if $\atp(G,(v_1,\dots,v_k)) =\tau$, and zero otherwise.
 Likewise, $\bisocond{(F,P),H}{\perm{w}{1}{k}}$ is $1$ if $\atp(H,(w_1,\dots,w_k)) =\tau$, and zero otherwise.
 Therefore, given a $k$-tuple $\bar{v} =(v_1,\dots,v_k) \in V^k$, there are two possibilities: 
 either $\atp(G,(v_1,\dots,v_k)) \neq \tau$. Then, the LHS is zero. The RHS is also zero 
 since $X_{\bar{v} \mapsto \bar{w}} = 0$ for every $(w_1,\dots,w_k)$ of type $\tau$. Otherwise, the 
 second possibility is that $\atp(G,(v_1,\dots,v_k)) = \tau$. Then, the LHS $\bisocond{(F,P),G}{\perm{v}{1}{k}} = 1$. 
 The RHS sum can be taken over all $\bar{w}$ such that $\atp(H,\bar{w}) =\tau$. The RHS simplifies to 
 the sum of all $X_{\bar{v} \mapsto \bar{w}}$ over all $\bar{w}$ of atomic type $\tau$. 
 This sum is equal to $1$, a fact which is immediate from repeated application of equations \ref{eq:ck1} - \ref{eq:ck3} of $\Liso[k+1](G,H)$.
 This finishes the base case. 
 
 We proceed to the inductive case for $\ell>1$.
 In the path decomposition $P$, let $Y_1 = \{u_0,\dots,u_{k}\}$ and $X_2=\{u_0,\dots,u_{k-1}\}$ 
 (in usual terminology, we say that we ``introduce'' the vertex $u_0$ in bag $Y_1$ and ``forget'' the vertex $u_k$ in $X_2$).
 Let $\tilde{P}= (X_2,\dots,Y_{\ell-1},X_\ell)$ be the corresponding path decomposition for the graph $\tilde{F}$, where
 $\tilde{F}$ is the graph $F$ with vertex $u_k$ deleted. The length parameter of the decomposition $\tilde{P}$ is $\ell -1$.
 We can rewrite
\begin{align*}
 \bisocond{(F,P),G}{\perm{v}{1}{k}} &= \displaystyle\sum_{v_0 \in V} \bisocond{(F,P),G}{\perm{v}{0}{k}} \\
 &= \displaystyle\sum_{v_0 \in V}  \bisocond{ (\tilde{F},\tilde{P}),G}{\perm{v}{0}{k-1}} \cdot I_{G,[k-1]} \cdot I_{G,\{0\}} 
\end{align*}
where $I_{G,[k-1]}=1$ if the adjacency of $u_k$ to $u_1,\dots,u_{k-1}$ in $F$ is identical to the adjacency of $v_k$ to $v_1,\dots,v_{k-1}$ in $G$, 
and is zero otherwise. 
Likewise, $I_{G,\{0\}}=1$ if the adjacency $\{u_k,u_0\}$ in $F$ is equal to the adjacency $\{v_k,v_0\}$ in $G$, and is zero otherwise.  
Using the inductive hypothesis for $(\tilde{F},\tilde{P})$, we rewrite
\begin{align*}
  \begin{split} \bisocond{(F,P),G}{\perm{v}{1}{k}} & = \displaystyle\sum_{v_0\in V}  \left(\displaystyle\sum_{\substack{w_0,\dots,w_{k-1} \\ \in W }} X_{\pi }\cdot \bisocond{ (\tilde{F},\tilde{P}),H}{\perm{w}{0}{k-1}}\right) \cdot \\ & \qquad \qquad \qquad I_{G,[k-1]} \cdot I_{G,\{0\}} \end{split}\\ 
                                     \begin{split} & = \displaystyle\sum_{v_0 \in V}  \left( \displaystyle\sum_{\substack{w_0,\dots,w_k \\ \in W}} X_{\pi'} \cdot\bisocond{ (\tilde{F},\tilde{P}),H}{\perm{w}{0}{k-1}}\right) \cdot \\ & \qquad \qquad \qquad I_{G,[k-1]} \cdot I_{G,\{0\}} \end{split}
\end{align*}
where $\pi= ((v_0,w_0),\dots, (v_{k-1}, w_{k-1})$ and $\pi'= ((v_0,w_0),\dots, (v_{k}, w_{k}))$. Here, we used the $\Liso[k+1](G,H)$ equation \ref{eq:ck2} to expand
\begin{align*}
X_{(v_0,w_0), \dots, (v_{k-1},w_{k-1})} = \displaystyle\sum_{w_k \in W} X_{(v_0,w_0), \dots, (v_{k-1},w_{k-1}), (v_{k},w_{k})} . 
\end{align*}
Since $X_{\pi'} \neq 0$ implies that $\{(v_0,w_0), \dots,(v_{k},w_{k})\}$ is a partial isomorphism, $X_{\pi'} \neq 0$ also implies that $I_{G,[k-1]}= I_{H,[k-1]}$ and $ I_{G,\{0\}}= I_{H,\{0\}}$. 
Therefore, we continue to rewrite
\begin{align*}
  \bisocond{(F,P),G}{\perm{v}{1}{k}} & = \displaystyle\sum_{\substack{w_0,\dots,w_k \\ \in W}} \displaystyle\sum_{v_0 \in V}  X_{\pi' } \cdot \bisocond{ (\tilde{F},\tilde{P}),H}{\perm{w}{0}{k-1}} \cdot I_{G,[k-1]} \cdot I_{G,\{0\}} \\ 
                                     & = \displaystyle\sum_{\substack{w_0,\dots,w_k \\ \in W}} \displaystyle\sum_{v_0\in V}  X_{\pi' } \cdot\bisocond{ (\tilde{F},\tilde{P}),H}{\perm{w}{0}{k-1}} \cdot I_{H,[k-1]} \cdot I_{H,\{0\}} \\ 
                                      \begin{split} & =\displaystyle\sum_{\substack{w_0,\dots,w_k \\ \in W}} \left(\displaystyle\sum_{v_0 \in V}  X_{\pi'} \right)\cdot\bisocond{ (\tilde{F},\tilde{P}),H}{\perm{w}{0}{k-1}} \cdot \\ &\qquad\qquad\qquad I_{H,[k-1]}  \cdot I_{H,\{0\}} \end{split}\\ 
                                     & = \displaystyle\sum_{\substack{w_0,\dots,w_k \\ \in W}}  X_{\pi''} \cdot \bisocond{ (\tilde{F},\tilde{P}),H}{\perm{w}{0}{k-1}} \cdot I_{H,[k-1]} \cdot I_{H,\{0\}} 
\end{align*}
where $\pi'' = \{(v_1,w_1),\dots,(v_k,w_k) \}$. Here, we used the $\Liso[k+1](G,H)$ equation \ref{eq:ck1} to collapse
\begin{align*}
\displaystyle\sum_{v_0 \in V} X_{(v_0,w_0), \dots, (v_{k-1},w_{k-1}), (v_{k},w_{k})} = X_{(v_1,w_1), \dots, (v_{k},w_{k})}. 
\end{align*}
Finally, we rewrite
\begin{align*}
\begin{split}
\bisocond{(F,P),G}{\perm{v}{1}{k}}  &= \displaystyle\sum_{\substack{w_1,\dots,w_k \\ \in W}}  \left(X_{\pi''} \right)\cdot\\
 & \left( \displaystyle\sum_{w_0 \in W} \bisocond{ (\tilde{F},\tilde{P}),H}{\perm{w}{0}{k-1}} \cdot I_{H,[k-1]} \cdot I_{H,\{0\}}\right) 
\end{split} \\
                                    & = \displaystyle\sum_{\substack{w_1,\dots,w_k \\ \in W}}  X_{\pi''} \cdot \left( \displaystyle\sum_{w_0 \in W} \bisocond{ (F,P),H}{\perm{w}{0}{k}}\right) \\ 
                                    & = \displaystyle\sum_{\substack{w_1,\dots,w_k \\ \in W}}  X_{\pi''} \cdot \left( \displaystyle\sum_{w_0 \in W} \bisocond{ (F,P),H}{\perm{w}{0}{k}}\right) \\ 
                                    & = \displaystyle\sum_{\substack{w_1,\dots,w_k \\ \in W}}  X_{\pi''} \cdot  \bisocond{ (F,P),H}{\perm{w}{1}{k}}  
\end{align*}
which finishes the proof of our lemma. 
\end{proof}

The proof of Theorem \ref{theo:4} is immediate from the following claim. 
\begin{claim}
  For all $k\geq 1$, and for all graphs $G$ and $H$, if
  $\Liso[k+1](G,H)$ has a real solution, then for all finite
  undirected graphs $F$ and an associated path decomposition $P$,
  $\biso((F,P),G)=\biso((F,P),H)$.
\end{claim}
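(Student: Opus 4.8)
The plan is to obtain the Claim by summing the transfer identity of Lemma~\ref{lem:transfer} over all possible images of the first bag and then collapsing the resulting double sum with the equations of $\Liso[k+1](G,H)$. Fix a finite graph $F$ together with its (normalised, width-$k$) path decomposition $P=(X_1,Y_1,X_2,\dots,X_{\ell})$ and write $X_1=\{u_1,\dots,u_k\}$; let $(X_\pi)$ be a real solution of $\Liso[k+1](G,H)$, so that Lemma~\ref{lem:transfer} applies directly to $(F,P)$.

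First I would record the elementary bookkeeping identity
\[
  \biso((F,P),G)=\sum_{(v_1,\dots,v_k)\in V^k}\bisocond{(F,P),G}{\perm{v}{1}{k}},
\]
which holds because every bag-wise isomorphic homomorphism~$h$ from~$F$ to~$G$ is counted exactly once, namely in the term indexed by $(h(u_1),\dots,h(u_k))$; the same identity holds with $H$ in place of $G$. Plugging Lemma~\ref{lem:transfer} into each summand and interchanging the two finite sums yields
\[
  \biso((F,P),G)=\sum_{(w_1,\dots,w_k)\in W^k}\Bigl(\,\sum_{(v_1,\dots,v_k)\in V^k}X_{\{(v_1,w_1),\dots,(v_k,w_k)\}}\Bigr)\cdot\bisocond{(F,P),H}{\perm{w}{1}{k}}.
\]

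The heart of the argument is to show that the inner coefficient equals $1$ in every term that survives. A bag-wise isomorphic homomorphism is injective on the bag~$X_1$, so $\bisocond{(F,P),H}{\perm{w}{1}{k}}=0$ unless $w_1,\dots,w_k$ are pairwise distinct; hence I only need to evaluate $\sum_{\bar v\in V^k}X_{\{(v_i,w_i)\}}$ for such $\bar w$. For distinct $\bar w$ I would peel off the variables $v_k,v_{k-1},\dots,v_1$ one at a time: summing out $v_j$ is an instance of equation~\eqref{eq:ck1} applied to the index set $\{(v_1,w_1),\dots,(v_{j-1},w_{j-1})\}$, which has size at most $k-1$, so the nested sum telescopes down to $X_\emptyset$, and $X_\emptyset=1$ by~\eqref{eq:ck3}. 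Substituting this back, the outer sum collapses to $\sum_{\bar w\in W^k}\bisocond{(F,P),H}{\perm{w}{1}{k}}=\biso((F,P),H)$, which is the Claim; Theorem~\ref{theo:4} then follows because $\hom(F,G)$ for a graph~$F$ of path width~$k$ is a fixed linear combination of numbers $\biso((F',P'),G)$ over the relevant tree-decomposed graphs, exactly as in the tree width case (cf.\ the equivalence of items~\ref{item: bisovector does not distinguish} and~\ref{item: homvector does not distinguish} of Theorem~\ref{thm:high-dim}).

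The only place I expect any friction is this telescoping step: one must check that equation~\eqref{eq:ck1} is legitimately applicable at each stage (the partial maps stay of size at most $k$), and, if one prefers to argue without first discarding the non-distinct $\bar w$, that the tuples $\bar v$ with repeated entries drop out via~\eqref{eq:pk} so that the intermediate sums still telescope cleanly. Everything else---the bookkeeping identity, the interchange of the two finite summations, and the final substitution---is routine manipulation of finite sums.
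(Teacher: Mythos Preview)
Your proposal is correct and follows essentially the same route as the paper: expand $\biso((F,P),G)$ as a sum of conditional counts over $\bar v$, apply Lemma~\ref{lem:transfer}, swap the two finite sums, and collapse $\sum_{\bar v}X_{\{(v_i,w_i)\}}$ to~$1$ via repeated use of~\eqref{eq:ck1} and~\eqref{eq:ck3}. The paper additionally wraps the argument in an induction on~$k$ (so as to cover decompositions of every width $\le k$, using that a solution of $\Liso[k+1]$ restricts to one of $\Liso[k'+1]$ for $k'<k$), but the substance of the width-$k$ step is identical to yours, and you are in fact more explicit than the paper about why the telescoping is legitimate when the $w_i$ are distinct.
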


\begin{proof}
The proof is by induction on $k$. The base case $k=0$ is trivial, since in this case, $\biso((F,P),G)$ and $\biso((F,P),H)$ merely count the 
number of vertices in graphs $G$ and $H$ respectively. For the inductive case, observe that if $\Liso[k+1](G,H)$ has a solution, then 
so does $\Liso[k'+1](G,H)$ for all $k'<k$. Hence, by inductive hypothesis, for all finite undirected graphs $F$ with a path decomposition $P$ of width at most $k'<k$, 
it holds that $\biso((F,P),G) = \biso((F,P),H)$. It remains to show that $\biso((F,P),G) = \biso((F,P),H)$ for all $F$ with a path decomposition $P$ of width $k$. 
Lemma \ref{lem:transfer} allows us to express
\begin{align*}
  \biso((F,P),G) &= \displaystyle\sum_{\substack{v_1,\dots,v_k \\ \in V}} \bisocond{(F,P),G}{\perm{v}{1}{k}} \\
                 & = \displaystyle\sum_{\substack{v_1,\dots,v_k \\ \in V}} \displaystyle\sum_{\substack{w_1,\dots,w_k \\ \in W }} X_{\pi''} \cdot \bisocond{(F,P),H}{\perm{w}{1}{k}} \\
                 & = \displaystyle\sum_{\substack{w_1,\dots,w_k \\ \in W }} \left( \displaystyle\sum_{\substack{v_1,\dots,v_k \\ \in V }}  X_{\pi''} \right)\cdot \bisocond{(F,P),H}{\perm{w}{1}{k}} \\
                 & = \displaystyle\sum_{\substack{w_1,\dots,w_k \\ \in W }} \bisocond{(F,P),H}{\perm{w}{1}{k}} \\
& = \biso((F,P),H)
\end{align*}
Using a similar argument to the case of treewidth-$k$ graphs in Appendix~\ref{app:treewidth}, it follows that for all finite 
 undirected graphs $F$ of pathwidth at most $k$, $\hom(F,G) = \hom(F,H)$ as well. This finishes the proof of Theorem \ref{theo:4}.
\end{proof}

\end{document}